\newtheorem{theorem}{Theorem}
\newtheorem{proposition}[theorem]{Proposition}
\newtheorem{remark}[theorem]{Remark}
\newtheorem{definition}{Definition}
\newtheorem{example}{Example}
\pgfplotsset{compat=newest,compat/show suggested version=false}
\pgfplotsset{
    legend image with text/.style={
        legend image code/.code={%
            \node[anchor=center] at (0.3cm,0cm) {#1};
        }
    },
}
\newcommand\mcal{\mathcal}
\newcommand\mbf{\mathbf}
\newcommand\mbb{\mathbb}
\newcommand\x{\mbf{x}}
\newcommand\bin{\mbf{b}}
\newcommand\bi{\mbf{i}}
\newcommand\bj{\mbf{j}}
\newcommand\bk{\mbf{k}}
\newcommand\bd{\mbf{d}}
\newcommand\bu{\mbf{u}}
\newcommand\balpha{\boldsymbol\alpha}
\newcommand\N{\mbb{N}}
\newcommand\K{\mbb{K}}
\newcommand\R{\mbb{R}}
\newcommand\cG{\mcal{G}}
\newcommand\cK{\mcal{K}}
\newcommand\cS{\mcal{S}}
\newcommand\cT{\mcal{T}}
\newcommand\ceil[1]{\left\lceil#1\right\rceil}
\newcommand\floor[1]{\left\lfloor#1\right\rfloor}
\newcommand\cro[1]{\left[#1\right]}
\newcommand\pare[1]{\left(#1\right)}
\newcommand\acc[1]{\left\{#1\right\}}
\newcommand\BM{\textsc{BM}\xspace}
\newcommand\BMS{\textsc{BMS}\xspace}
\newcommand\FGLM{\textsc{FGLM}\xspace}
\newcommand\sFGLM{\textsc{Scalar-FGLM}\xspace}
\newcommand\asFGLM{\textsc{Adaptive Scalar-FGLM}\xspace}
\newcommand\spFGLM{\textsc{Sparse-FGLM}\xspace}
\newcommand\bms{Berlekamp~-- Massey~-- Sakata\xspace}
\newcommand\bm{Berlekamp~-- Massey\xspace}
\newcommand\gb{Gr\"obner basis\xspace}
\newcommand\gbs{Gr\"obner bases\xspace}
\newcommand\ie{\mbox{{i.e.}}\xspace}
\newcommand\resp{\mbox{resp.}\xspace}
\DeclareMathOperator\DRL{\textsc{drl}}
\DeclareMathOperator\LEX{\textsc{lex}}
\newcommand{\adots}{\mathinner{%
  \mkern1mu\raise1pt\hbox{.}%
  \mkern2mu\raise4pt\hbox{.}%
  \mkern2mu\raise7pt\vbox{\kern7pt\hbox{.}}\mkern1mu}} 
\DeclareMathOperator\rank{rank}
\DeclareMathOperator\LM{\textsc{lm}}
\DeclareMathOperator\LT{\textsc{lt}}
\DeclareMathOperator\LC{\textsc{lc}}
\DeclareMathOperator\lcm{lcm}
\DeclareMathOperator\Stabilize{Stabilize}
\DeclareMathOperator\Border{Border}
\DeclareMathOperator\Staircase{Staircase}
\DeclareMathOperator\InterReduce{InterReduce}
\DeclareMathOperator\fail{fail}
\DeclareMathOperator\shift{shift}
\DeclareMathOperator\supp{supp}
\journal{Journal of Symbolic Computation}
\begin{document}

\begin{frontmatter}



\title{In-depth comparison of the Berlekamp -- Massey -- Sakata and
the Scalar-FGLM algorithms:\\the non adaptive variants}

\author{J\'er\'emy
  Berthomieu\corref{cor1}}
\cortext[cor1]{Laboratoire d'Informatique de Paris~6,
  Universit\'e Pierre-et-Marie-Curie, bo\^ite courrier~169, 4~place
  Jussieu, F-75252 Paris Cedex~05, France.}
\ead{jeremy.berthomieu@lip6.fr}
\author{Jean-Charles Faug\`ere}
\ead{jean-charles.faugere@inria.fr}
\address{Sorbonne Universit\'es, \textsc{UPMC} Univ Paris~06,
  \textsc{CNRS}, \textsc{INRIA},\\
  Laboratoire d'Informatique de Paris~6 (\textsc{LIP6}),
  \'Equipe \textsc{PolSys},\\
  4 place Jussieu, 75252 Paris Cedex 05, France}

\begin{abstract}
  We compare thoroughly the
\textsc{Berlekamp~-- Massey~-- Sakata} algorithm and
the \textsc{Scalar-FGLM} algorithm, which compute both the ideal of
relations of a multidimensional linear recurrent sequence.

Suprisingly, their behaviors differ. We detail in which
way they do and prove that it is not possible to tweak one of the
algorithms in order to mimic exactly the behavior of the other.


\end{abstract}

\begin{keyword}
  The \BMS algorithm \sep the \sFGLM algorithm \sep
  Gr\"obner basis computation \sep
  multidimensional linear recurrent sequence \sep
  algorithms comparison



\end{keyword}

\end{frontmatter}


\tableofcontents
\section{Introduction}\label{s:intro}
Computing the smallest linear recurrence relation satisfied by a
sequence is a fundamental problem in Computer Science. It is the
shortest linear feedback shift register (\textsc{LFSR}) which
generates the sequence. The length of this relation estimates the
linear complexity of the sequence.

In the 18th century, Gau\ss{} was interested in predicting the next
term of a sequence. Given a discrete set $(u_i)_{i\in\N}$, find the
best coefficients, in the least-squares sense,
$(\alpha_i)_{1\leq i\leq d}$
that will approximate $u_i$ by
$-\sum_{k=1}^d\alpha_k\,u_{i-k}$. Least-square sense means that the solution
minimizes the sum of the squares of the errors.

This problem has also been extensively used in Digital
Signal Processing theory
and applications.  Numerically, Levinson~-- Durbin recursion method
can be used to solve this problem. Hence,
to some extent, the original Levinson~-- Durbin problem in Norbert
Wiener's Ph.D. thesis, \cite{Levinson47,Wiener49}, predates the Hankel
interpretation of the \bm algorithm, see for instance~\cite{JoMa89}.

The \bm algorithm (\BM, \cite{Berl68,Mass69}) guesses a solution of
this problem for
sequences with one parameter, \ie in the one-dimensional case. This
algorithm has been tremendously studied and many variants were
designed.
We refer the reader
to~\cite{KaPa91,KaYa13,Kalto06} for a very nice classification
of the \BM algorithms for solving this problem, and for its
generalization to
matrix sequences.

Classically, two designs of the \BM algorithm are used.

The first one
assumes that the coefficients of the sequence $(u_i)_{i\in\N}$ are
given \emph{online}, \ie $u_{i+1}$ is known only
after $u_i$, and that a bound $d$ is given such that $u_{d-1}$ will be
computed but not $u_d$. Then, the \BM algorithm guesses a linear recurrence
relation satisfied by the table $(u_0,\ldots,u_i)$, checks if this
relation is satisfied by $(u_0,\ldots,u_{i+1})$
and updates the relation if it was
not. The algorithm stops when reaching $u_{d-1}$.

The other one 
assumes that the table
$(u_0,\ldots,u_{d-1})$ is known at
once. Then, the algorithm finds the kernel of
the Hankel matrix of size $\ceil{d/2}\times\floor{d/2}$
associated with the sequence
$(u_i)_{i\in\N}$. A complexity breakthrough is reached since this
comes down to calling the extended Euclidean
algorithm between $x^d$ and $U(x)=\sum_{i=0}^{i-1}u_i\,x^{d-i-1}$ and
stopping it prematurely when reaching a remainder of degree 
strictly less than $\floor{d/2}$. The relation is then given by the
B\'ezout coefficient
of $U(x)$ associated with this remainder. See~\cite{Blackburn1997,
Dornstetter1987}.

Sakata extended the \BM algorithm to $2$ dimensions in~\cite{Sakata88} and
then to $n$ dimensions in~\cite{Sakata90,Sakata09}. The so-called \bms
algorithm (\BMS) guesses a \gb of the ideal of relations
satisfied by the first terms of the input sequence, \cite[Lemma~5]{Sakata90}.

In a way, the \BMS algorithm extends the first design of the \BM
algorithm, as when calling the \BMS algorithm on a univariate
sequence, it behaves exactly like the \BM algorithm on this
sequence.

The so-called \sFGLM algorithm, presented
in~\cite{issac2015,berthomieu:hal-01253934} guesses the reduced
\gb of the ideal of relations
of a sequence. It extends the second design of the \BM algorithm
through the computation of the kernel of a \emph{multi-Hankel} matrix,
the multivariate generalization of a Hankel matrix. However, no fast
method is currently known for
computing this kernel.

While the second design of the \BM algorithm seems more efficient than
the first one,
mainly thanks to fast Euclidean algorithms, it is not
clear how their multidimensional extensions compare.
Surprisingly, the \BMS and the \sFGLM algorithm behave so differently that
it is not possible to apply a small modification on either algorithm
in order to simulate the behavior of the other.

\subsection{Related works}
Computing linear recurrence relations of multi-dimensional sequences
finds applications in Coding Theory, Computer Algebra and
Combinatorics.

Historically, the \BM algorithm was designed to decode cyclic codes,
like the \textsc{BCH}
codes, \cite{BoseRC1960,Hocquenghem1959}. Therefore,
decoding $n$-dimensional cyclic codes, a generalization of
Reed Solomon codes, was Sakata's motivation for designing the
\BMS algorithm in~\cite{Sakata91}.

On the other hand, as the output of the \BMS
and the \sFGLM algorithms is a \gb, a
natural application in Computer Algebra
is the computation of a \gb of an ideal for another order, typically
from a total degree ordering to an elimination ordering.
In fact the latest versions of the \spFGLM algorithm
rely heavily on the \BM and \BMS algorithms,
see~\cite{FM11,faugere:hal-00807540}.

Finally, computing linear
recurrence relations with \emph{polynomial} coefficients finds
applications in Computer Algebra for computing properties of
univariate and multivariate Special Functions. The Dynamic
Dictionary of Mathematical Functions (\textsc{DDMF}, \cite{DDMF}) generates
automatically web-pages on univariate special functions through the
differential equations they satisfy. Equivalently, they could be generated
through the linear recurrence relations satisfied by the sequence of
coefficients of their Taylor series. Deciding whether
\textsc{2D}/\textsc{3D}-space walks are D-finite or not finds
applications in Combinatorics,
see~\cite{BanderierF2002,BostanBMKM2014,BousquetMM2010,BousquetMP2003}. This
motivated the authors to extend the \sFGLM algorithm to
handle relations with polynomial coefficients
in~\cite{berthomieu:hal-01314266}.

\subsection{Contributions}
The main goal of this paper is to compare both the \BMS and the \sFGLM
algorithms. As it is not possible to store the whole input sequence,
both algorithms takes a bound as an input and only handle sequence
terms up to this index bound.

We start by recalling some classical notation
and definitions that shall be used in the proofs and the algorithms of
the paper in Section~\ref{s:prelim}.

Then, in order to be self-contained,
we dedicate the next two sections to a presentation of each algorithm.

A lot of articles, such
as~\cite{Bras-Amoros2006,Sakata88,Sakata90,Sakata09}, or book chapters,
such as~\cite[Chapter~10]{CoxLOS2015b}, present the \BMS
algorithm. Some of them deals with the
very general case of an
ordered domain. We specialize this description to the simpler case of
a polynomial ring
$\K[x_1,\ldots,x_n]$. In the \BMS algorithm, the input bound is a
monomial, so that the algorithm shall visit every monomial in
increasing order up to the bound.

On the other hand, in Section~\ref{s:Scalar-FGLM}, we describe the
\sFGLM algorithm with a point of view closer to the \BMS
algorithm. In the \sFGLM algorithm, the input bound is a set of terms
which contains the staircase of the computed \gb.

These presentations shall help us to first design a new algorithm in
between both of them in Section~\ref{s:New Solver}. 

Then, it will help us to compare them in
Section~\ref{s:comparison}, our main contribution of this paper.
We detail exactly how both algorithms behave similarly
and how, depending on the input, they can surprisingly differ.

A main likeness between both algorithms is that they determine which
monomials are in the \gb staircase. However, they handle the leading
terms outside of this staircase differently.

\begin{theorem}
  Let $\bu=(u_{i,j})_{(i,j)\in\N^2}$ be a sequence, let $\prec$ be a
  degree monomial ordering.
  
  Assuming we call each algorithm on $\bu$, $\prec$ and a bound
  allowing us to find the same set $S$ as the staircase, then
  \begin{itemize}
  \item for any monomial $m$ on the border of $S$, the \BMS algorithm
    returns a relation with leading term $m$.
    Therefore, the computed ideal of relations is zero-dimensional.
  \item the \sFGLM algorithm returns relations with leading terms on the
    border of $S$ but may fail to close the staircase. Therefore, the
    computed ideal of relations might be positive-dimensional.
  \end{itemize}
  If $\bu$ is linear recurrent and the bound big enough, then both
  algorithms compute correctly the ideal of relations of $\bu$.
\end{theorem}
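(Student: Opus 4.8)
The plan is to prove the three assertions separately, each time reading off the governing invariant from the presentations of the two algorithms and then translating a statement about leading terms into a statement about the dimension of the quotient ring. For the \BMS bullet I would start from Sakata's correctness statement recalled earlier (\cite[Lemma~5]{Sakata90}): once every monomial up to the bound has been visited, the returned family is a \gb for $\prec$ of the ideal of relations valid on the visited range, and its leading terms are exactly the minimal monomials, under divisibility, lying outside the delta set, that is outside the staircase. Since by hypothesis that staircase coincides with the common set $S$, every monomial $m$ on the border of $S$ is by definition such a minimal monomial and hence occurs as the leading term of one returned relation. As $S$ is finite, the standard monomials of the leading term ideal are precisely the elements of $S$, so the quotient is a finite dimensional $\K$-vector space and the computed ideal is zero dimensional; this is the short direction.

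For the \sFGLM bullet I would recall that the algorithm builds the multi-Hankel matrix whose columns are indexed by the input set of terms (which contains $S$), row-reduces it, and reads off one relation per non-pivot column, the leading term being that column's monomial. Each such leading term lies on the border of $S$ because its column is the first in the $\prec$-order to depend on the pivot columns below it. The decisive observation is that a relation can be produced only for a border monomial $m$ that actually appears as a column, i.e.\ that belongs to the input term set; a border corner lying outside that set is never tested. I would make this failure explicit with a small two dimensional witness: a sequence and a term set for which the common staircase $S=\{1,x,y\}$ is recovered but for which only $x^2$ and $xy$ receive relations, the corner $y^2$ being absent from the column set. Then the leading terms of the output generate $\langle x^2,xy\rangle$, whose standard set $\{1,x,y,y^2,y^3,\dots\}$ is infinite, so the output ideal is positive dimensional although the internal staircase is still $S$.

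For the final sentence I would argue that this obstruction is purely a finite-data artifact. If $\bu$ is linear recurrent then its ideal of relations $I$ is zero dimensional, with a finite reduced \gb and a finite staircase $S_0$. Choosing the bound large enough that both algorithms recover $S=S_0$ and so that the \sFGLM term set contains the full border of $S_0$, every border corner is now a tested column whose genuine dependency on the pivots is certified by the recurrence; hence \sFGLM produces a relation for each border monomial exactly as \BMS does. Both families then inter-reduce to the reduced \gb of $I$, and both algorithms return $I$.

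The main obstacle is the second bullet: one must pin down precisely which border monomials the \sFGLM matrix can certify as leading terms, and exhibit a configuration in which omitting one genuine corner still leaves the algorithm's internal staircase equal to $S$ while inflating the dimension of the output ideal. Constructing an explicit example that simultaneously keeps the two staircases equal and makes the \sFGLM ideal positive dimensional is the delicate point, since one has to check that the missing corner's column is absent from, or independent within, the chosen term set without perturbing the rest of the reduction.
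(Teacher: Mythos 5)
Your overall route is the same as the paper's: the precise form of this statement is Theorem~\ref{th:closed_staircase}, and its proof argues exactly as you do that the \BMS algorithm closes the staircase \emph{by construction} (the line $G':=\Border(S')$ forces one relation per border monomial, hence a pure power of every variable among the leading terms and a zero-dimensional ideal), while the \sFGLM algorithm can only produce leading terms lying in the intersection of the border of $S$ with the input term set $T$, so nothing forces a pure power of each variable to appear. Your treatment of the last sentence also matches the paper's ingredients (Proposition~\ref{prop:upperbound} for \BMS; for \sFGLM, a $T$ containing the true staircase and its border, so that the useful staircase is the true one and the solved multi-Hankel systems return the true relations).

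The genuine gap is the one you flag yourself: the second bullet asserts that the computed ideal \emph{might} be positive-dimensional, and a possibility claim is not proved by observing that the algorithm's design gives no guarantee --- it needs an explicit instance, which you describe but never construct. The paper closes exactly this hole with Example~\ref{ex:0dim}: for $\bu=\pare{i^2+j+\mathds{1}_{3\,i+2\,j>9}}_{(i,j)\in\N^2}$, the $\DRL(y\prec x)$ ordering and the division-stable set $T=\{1,y,x,y^2\}$ (the mirror image of your schema: here $x\,y$ and $x^2$ are the absent corners, not $y^2$), \sFGLM computes the staircase $\{1,y,x\}$ and returns only $y^2-2\,y+1$, an ideal of dimension one, whereas \BMS returns the zero-dimensional $\langle x-y,\,y^2-2\,y\rangle$. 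Moreover the delicacy you anticipate (``without perturbing the rest of the reduction'') is not really there: since $T$ is only required to be stable by division, omitting a corner is perfectly legal, and the rank computation on $H_{T,T}$ cannot ``see'' the missing column at all, so the internal staircase is unchanged; for your own schema it suffices to take any sequence whose ideal of relations is a generic zero-dimensional ideal of degree $3$ with staircase $\{1,y,x\}$ and $T=\{1,y,x,x\,y,x^2\}$, which yields the positive-dimensional output $\langle x\,y-\cdots,\,x^2-\cdots\rangle$. So your plan is the paper's proof in outline; to make it a proof you must instantiate the witness, which is routine rather than delicate.
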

The last part of the theorem is important as in most applications
$\bu$ \emph{is} linear recurrent. Therefore, both algorithms are able
to retrieve the ideal of relations of $\bu$.

We refer to Theorem~\ref{th:closed_staircase} for a more precise and general
version of this result.

By design, these algorithms return a set of relations, satisfied by
the sequence terms, and their shifts, \ie how far these relations have
been tested. The following theorem proves that the outputs of the
algorithms are quite different. This should convince the reader that
the algorithms do not compute the same thing whenever the bound is too
low or $\bu$ is not linear recurrent. It is a specialization
of Theorem~\ref{th:valid_shift} to the binomial sequence.

\begin{theorem}\label{th:intro_valid_shift}
  Let $\bin=\pare{\binom{i}{j}}_{(i,j)\in\N^2}$ be the sequence of the
  binomial coefficients and let $\prec$ be a total degree
  monomial ordering.

  Assuming we call each algorithm on $\bin$, $\prec$ and a bound
  allowing us to retrieve the same relations
  $x\,y-y-1,y^d,(x-1)^d$, with $d>2$.

  \begin{itemize}
  \item Then, the \sFGLM algorithm ensures that the shifts of the
    three relations are equal: they are still valid when multiplied by
    all the monomials of degree at most $d-1$.
  \item The \BMS algorithm ensures that the shifts of $y^d$ and
    $(x-1)^d$ are less than the shift of $x\,y-y-1$:
    relations $y^d$
    and $(x-1)^d$ are still valid when multiplied by all the
    monomials of degree at most $d-1$ while relation
    $x\,y-y-1$ is still valid when multiplied by all
    the monomials of degree at most $2\,d-3$.
  \end{itemize}
  In other words, the lesser the leading monomial of a relation
  computed by the \BMS algorithm, the
  greater its shift.
\end{theorem}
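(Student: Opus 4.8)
The plan is to specialize the general shift analysis of Theorem~\ref{th:valid_shift} to the binomial sequence $\bin$, so the only genuine work is to identify the combinatorial data and to verify a handful of binomial evaluations. First I would record that data. Under the degree ordering $\prec$ the three relations $xy-y-1$, $y^d$, $(x-1)^d$ have leading monomials $xy$, $y^d$ and $x^d$, of respective degrees $2$, $d$ and $d$, so the staircase is the ``cross''
\[
S=\acc{1,x,\ldots,x^{d-1}}\cup\acc{y,\ldots,y^{d-1}},
\]
the monomials divisible neither by $xy$, nor by $x^d$, nor by $y^d$. The crucial evaluations, for an arbitrary shift monomial $x^py^q$, are
\[
\sca{x^py^q\,(xy-y-1),\bin}=\binom{p+1}{q+1}-\binom{p}{q+1}-\binom{p}{q}=0
\]
by Pascal's rule, so $xy-y-1$ is a \emph{true} relation valid for every shift; whereas
\[
\sca{x^py^q\,y^d,\bin}=\binom{p}{q+d},\qquad
\sca{x^py^q\,(x-1)^d,\bin}=\binom{p}{q-d},
\]
the latter because $(x-1)^d$ acts as the $d$-th finite difference $\Delta_i^d$ and $\Delta_i\binom{i}{j}=\binom{i}{j-1}$. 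Both binomials vanish as soon as $p+q\le d-1$ but equal $1$ at the degree-$d$ shifts $x^d$, respectively $y^d$; hence $y^d$ and $(x-1)^d$ are \emph{spurious}, valid exactly up to shift degree $d-1$ and failing first at the monomial $x^dy^d$ of degree $2\,d$.

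For the \sFGLM algorithm, Theorem~\ref{th:valid_shift} guarantees that \emph{every} output relation carries the same shift, namely the common set of test (row) monomials that certifies the staircase $S$; here this set is all monomials of degree at most $d-1$, which contains $S$. The evaluations above show each of the three relations does vanish against every such shift, so the recorded shift of each is precisely ``multiplied by all monomials of degree at most $d-1$'', and the three are equal. Although $xy-y-1$ is in fact valid for arbitrarily large shifts, the algorithm tests it only over this common set and therefore certifies no more, which is exactly the first item.

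For the \BMS algorithm, Theorem~\ref{th:valid_shift} gives instead a shift that depends on the leading monomial: a relation with leading term $m$ is verified, as the monomials are swept in increasing $\prec$-order, at every shift $s$ with $m\,s\preceq B$, where $B$ is the degree bound. Taking $B$ of total degree $2\,d-1$ — the range in which $y^d$ and $(x-1)^d$ are already posited but their first failures at $x^dy^d$ are still beyond the bound — the admissible shifts are those of degree at most $(2\,d-1)-\deg m$. This yields shift degree $2\,d-3$ for $m=xy$ and shift degree $d-1$ for $m=y^d$ and $m=x^d$, and the evaluations confirm that $xy-y-1$ genuinely holds at every one of these shifts (Pascal), while $y^d$ and $(x-1)^d$ hold up to degree $d-1$ with their failures lying safely outside $B$. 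Since $\deg(xy)=2<d=\deg(y^d)=\deg(x^d)$ and $2\,d-3>d-1$ for $d>2$, the smaller the leading monomial the larger the shift, which is the second item and the closing assertion.

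The main obstacle is the bookkeeping that pins the \BMS bound to degree $2\,d-1$: one must check that this bound is large enough for the algorithm to posit $y^d$ and $(x-1)^d$, yet small enough that their common degree-$2d$ failure at $x^dy^d$ is not yet swept, and then reconcile the \emph{recorded} shift $\deg B-\deg m$ furnished by Theorem~\ref{th:valid_shift} with the \emph{true} validity of each relation on $\bin$. For $y^d$ and $(x-1)^d$ the two coincide only because the sweep stops exactly before the first failure; for $xy-y-1$ the recorded shift is strictly smaller than the (infinite) true validity, and it is the processing order of \BMS, not the sequence itself, that caps it at $2\,d-3$. Verifying these two binomial computations together with the bound inequalities is the entire content beyond quoting Theorem~\ref{th:valid_shift}.
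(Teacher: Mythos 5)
Your proposal is correct and takes essentially the same route as the paper, which offers no separate argument for this statement but declares it ``a specialization of Theorem~\ref{th:valid_shift} to the binomial sequence'': your closed-form evaluations $\cro{x^p\,y^q\,(x\,y-y-1)}=0$ (Pascal), $\cro{x^p\,y^q\,y^d}=\binom{p}{q+d}$ and $\cro{x^p\,y^q\,(x-1)^d}=\binom{p}{q-d}$, with common first failure at $x^d\,y^d$ of degree $2\,d$, are exactly the general-$d$ version of the paper's Example~\ref{ex:validity}, where $d=4$ and the \BMS bound is $x^7=x^{2\,d-1}$. Your shift bookkeeping — degree $2\,d-3$ for $x\,y-y-1$ and $d-1$ for the two spurious relations under \BMS, versus the common shift $T$ of degree $d-1$ under \sFGLM (in its border/tweaked form, Algorithm~\ref{algo:sfglm_0dim}, so that $y^d$ and $(x-1)^d$ are actually returned) — coincides with the paper's.
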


We mention earlier that the \spFGLM algorithm was a possible
application of these algorithms. Although, they are not meant to be
run with the lexicographical ordering, we prove the following result to
illustrate the difference in behaviors of these algorithms. This
result is extended to any dimension in
Theorem~\ref{th:shape_position}.

\begin{theorem}\label{th:intro_shape_position}
  Let $\bu=(u_{i,j})_{(i,j)\in\N^2}$ be a linear recurrent sequence whose
  ideal of relations $I=\langle g(y),x-f(y)\rangle$ is in
  \emph{shape position} for the $\LEX(y\prec x)$ ordering, with $\deg
  f<\deg g=d$ and $g$ squarefree.

  Assuming we call each algorithm on $\bu$, the $\LEX(y\prec x)$
  ordering, and a bound on the sequence terms.
  \begin{itemize}
  \item The \sFGLM algorithm, with the set of terms
    $T=\{1,y,\ldots,y^{d-1}\}$, yields the
    ideal $I$.
  \item The \BMS algorithm, visiting monomials $1,y,\ldots,y^d$, yields
    the ideal $\langle g(y),x\rangle$. This ideal is not $I$ unless $f=0$.
  \end{itemize}
  In other words, the \sFGLM algorithm can retrieve an ideal of
  relations in shape position while, in general, the \BMS algorithm cannot.
\end{theorem}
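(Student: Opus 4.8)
The plan is to reduce both algorithms to an analysis of the univariate subsequence $\pare{u_{0,k}}_{k\in\N}$, together with a careful reading of how the $\LEX(y\prec x)$ ordering interacts with the bound. Since $g$ is squarefree, $I=\sca{g(y),x-f(y)}$ is radical and zero-dimensional, so $V(I)$ consists of the $d$ distinct points $\pare{f(\beta_\ell),\beta_\ell}$, where $\beta_1,\ldots,\beta_d$ are the roots of $g$. First I would record the closed form $u_{i,j}=\sum_{\ell=1}^d\lambda_\ell\,a_\ell^i\,\beta_\ell^j$ with $a_\ell=f(\beta_\ell)$, and argue that every $\lambda_\ell\neq0$: otherwise $\bu$ would be supported on fewer than $d$ points and its ideal of relations would strictly contain $I$, contradicting the hypothesis. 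This representation is the single computational fact driving both halves of the proof.

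For the \sFGLM part, I would show that the staircase returned is exactly $T=\acc{1,y,\ldots,y^{d-1}}$ and that the two border relations are recovered exactly. The $T\times T$ multi-Hankel matrix has entries $u_{0,i+j}=\sum_\ell\lambda_\ell\,\beta_\ell^{i+j}$, so it factors as $W\,\mathrm{diag}(\lambda_\ell)\,W^\T$ with $W=\pare{\beta_\ell^i}$ a $d\times d$ Vandermonde matrix; since the $\beta_\ell$ are distinct and the $\lambda_\ell$ nonzero, this matrix is invertible. Invertibility makes each of the two linear systems solved by \sFGLM — the one expressing the $y^d$-column and the one expressing the $x$-column in terms of the $T$-columns — uniquely solvable. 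As $g(y)$ and $x-f(y)$ are genuine relations of $\bu$ with supports contained in $T\cup\acc{y^d}$ and $T\cup\acc{x}$ respectively, they are the unique solutions found, and \sFGLM outputs $\sca{g(y),x-f(y)}=I$.

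For the \BMS part, the decisive observation is that for $\LEX(y\prec x)$ one has $1\prec y\prec\cdots\prec y^d\prec x$, so visiting the monomials up to $y^d$ never reaches any monomial divisible by $x$. On the chain $1,y,\ldots,y^d$ the \BMS algorithm only consults the values $u_{0,k}$ and therefore behaves exactly like the univariate \BM algorithm on $\pare{u_{0,k}}$; by the invertibility just established its linear complexity is $d$, so \BMS discovers the relation $g(y)$ with leading term $y^d$. I would then invoke Theorem~\ref{th:closed_staircase} (the closed-staircase property) to conclude that \BMS nevertheless returns a relation with leading term $x$, whose lower-order terms necessarily lie in $T$, \ie a polynomial $x-h(y)$ with $\deg h<d$. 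The crux is to prove $h=0$: the \BMS polynomial carrying leading term $x$ is only ever corrected when a discrepancy is computed at a monomial divisible by its leading term, that is at some $x\,y^{k}\succeq x$; since no such monomial is visited under the bound $y^d\prec x$, this polynomial is never modified and remains the bare monomial $x$. Hence \BMS returns $\sca{g(y),x}$, which equals $I=\sca{g(y),x-f(y)}$ if and only if $f=0$.

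The main obstacle is this last point: rigorously establishing that the relation \BMS attaches to the border corner $x$ is untouched requires matching the abstract closed-staircase guarantee with the internal update rule of \BMS, checking that no auxiliary polynomial created while processing $1,\ldots,y^d$ can inject a $y$-term into the $x$-relation. Proving this — essentially that all information gathered below the bound is confined to the $y$-axis and cannot constrain the $x$-direction — is the heart of the argument; the \sFGLM half, by contrast, is a routine consequence of Hankel invertibility.
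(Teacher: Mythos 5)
Your proposal is correct and takes essentially the same route as the paper's proof of the general Theorem~\ref{th:shape_position}: the \sFGLM algorithm determines the staircase $\{1,y,\ldots,y^{d-1}\}$ and recovers $g(y)$ and $x-f(y)$ as the unique solutions of the two border systems, while the \BMS algorithm degenerates into the univariate \BM algorithm along the $y$-axis, the relation with leading monomial $x$ never being correctable since no visited monomial is divisible by $x$, hence staying the bare monomial $x$. Your Vandermonde factorization $H_{T,T}=W\,\mathrm{diag}(\lambda_1,\ldots,\lambda_d)\,W^{\mathrm{T}}$ with all $\lambda_\ell\neq 0$ supplies a justification the paper merely asserts (that the useful staircase is all of $T$); the only caveat, shared with the paper's own proof, is that certifying $g$ via \BM implicitly requires visiting roughly $2d$ powers of $y$ rather than the $d+1$ monomials literally listed in the statement.
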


Finally, in Section~\ref{s:implem}, we compare the algorithms based on
the number of basic operations and the number of
table queries they perform.

We show that the \sFGLM
algorithm performs in general more queries to the table than the \BMS
algorithm. Yet, in the best case scenario where the leading terms
of the \gb of the ideal are all the monomials of a given degree, the
\sFGLM has a better behavior than the \BMS algorithm.

\subsection{Perspectives}
We are now in a position where the \BMS algorithm and the \sFGLM
algorithm are well understood and where we know that each
algorithm has strengths and weaknesses.

As anticipated in the original paper, the naive linear algebra solver
in the \sFGLM algorithm is its main weakness.
Therefore, a fast multi-Hankel solver could improve this algorithm.
Moreover, although its presentation is of a global
algorithm, it can be turned into an iterative one using naive Gaussian
elimination. Thus, a fast multi-Hankel arithmetic could also be useful for
an iterative variant of the algorithm.

On the other hand, the \BMS algorithm is a real iterative algorithm:
if in addition of
the relations, one outputs the
set of failing relations (see Remark~\ref{rk:choose_Sm}),
then one could continue the computation
up to a farther bound with no additional cost. Moreover, it is a faster
algorithm since it uses a polynomial arithmetic instead of a linear
algebra one.

A consequence of this paper could be the design of an hybrid algorithm
taking advantage of both the \BMS and the \sFGLM algorithms.
Another direction would be the study of adaptive variants of the algorithms.
The \asFGLM (\cite{issac2015,berthomieu:hal-01253934}) is a more
efficient variant of the \sFGLM
algorithm trying not to test too far the computed relations in order
to minimize the table queries and the complexity.
Likewise, one could design an adaptive variant of the \BMS algorithm based on
this philosophy and study their complexities.

In summary, the goal would be to take a step further
in the hybrid approach using the efficiency of the polynomial
arithmetic in the \BMS
algorithm to compute the relations and the smaller number of queries
performed by the \asFGLM algorithm.


\section{Preliminaries}\label{s:prelim}
In this section, we present classical notation that shall be used all
along the paper. We also present some definitions that will be useful
for all the proofs and algorithms.

\subsection{Sequences and relations}
Let $n\geq 1$, we write $\bi=(i_1,\dots,i_n)\in\N^n$. Likewise, we
denote $\x=(x_1,\ldots,x_n)$ and for $\bi\in\N^n$, we write
$\x^{\bi}=x_1^{i_1}\,\cdots\,x_n^{i_n}$.  Let
$\bu=(u_\bi)_{\bi\in\N^n}$ be a $n$-dimensional sequence over the
field $\K$.  If there exists a finite set of indices
$\cK\subset\N^n$ and numbers
$(\alpha_\bk)_{\bk\in\cK}$ in the field $\K$ such that
\begin{equation}
  \label{eq:recmulti}
  \forall \bi\in\N^n,\,\sum_{\bk\in\cK}
  \alpha_\bk\, u_{\bk+\bi}=0,
\end{equation}
then we say that $\bu$ satisfies the linear recurrence relation
(simply relation in the following) defined by
$\balpha=(\alpha_\bk)_{\bk\in\cK}$.

\begin{example}\label{ex:binom}
  Let $\bin$ be the 2-dimensional sequence of the binomial
  coefficients, $\bin = \left(\binom{i}{j}\right)_{(i,j)\in\N^2}$.
  Then the Pascal's rule:
  \[
  \forall (i,j)\in\N^2,\, \bin_{i+1,j+1}-\bin_{i,j+1}-\bin_{i,j}=0
  \]
  is a linear recurrence relation for the sequence $\bin$.
\end{example}

As we can only work with a finite number of terms of a sequence, in
this paper, a \emph{table} shall denote a finite subset of terms of a
sequence: it is one of the input parameters of the algorithms.

Given a finite table extracted from the sequence $\bu$, the main
purpose of the \BMS and the \sFGLM algorithms is to, lousy speaking,
determine a minimal set of relations that will allow us to generate
this finite table using only the values of $\bu$ on their supports.

In order to study the relations satisfied by the sequence $\bu$, it
will be useful to associate them with polynomials in
$\K[\x]$.

\begin{definition}
  Let $f=\sum_{\bk\in\cK}\alpha_\bk\,\x^\bk\in\K[\x]$.
  We will denote by $\cro{f}_{\bu}$, or $\cro{f}$ when no ambiguation
  arises, the linear combination
  $\sum_{\bk\in\cK}\alpha_\bk \,u_\bk$.
  Moreover, if $\balpha$ defines a relation for $\bu$, that is for all
  $\bi\in\N^n$, $\cro{\x^{\bi}\,f}=0$, then we say that $f$ is the
  polynomial of this relation.
\end{definition}
The main benefit of the $[\,]$ notation resides in the immediate fact
that for all index $\bi$,
$\left[\x^\bi\,f\right]=\sum_{\bk\in\cK}
\alpha_\bk\,u_{\bk+\bi}$.

In the previous example, the Pascal's
rule relation is associated with polynomial $P=x\,y-y-1$, so that
\[\forall (i,j)\in\N^2,\,[x^i\,y^j\,P]=0.\]

\begin{definition}[\cite{FitzpatrickN90,Sakata88}]~\label{def:lin_rec}
  Let $\bu=(u_{\bi})_{\bi\in\N^n}$ be a $n$-dimensional sequence
  with coefficients in $\K$. The sequence $\bu$ is \emph{linear
    recurrent} if from a nonzero finite number of initial terms
  $\{u_{\bi},\ \bi\in S\}$, and a finite number of linear recurrence
  relations, without any contradiction, one can compute any term of
  the sequence.
  
  Equivalently, $\bu$ is linear recurrent if its ideal of relations
  $\{f,\ \forall\,m\in\K[\x],\cro{m\,f}=0\}$ is \emph{zero-dimensional}.
\end{definition}

\subsection{\gbs}
We let $\cT$ be the set of all monomials
in $\K[\x]$, \ie $\cT=\{\x^{\bi},\ \bi\in\N^n\}$.
A monomial ordering $\prec$ on $\K[\x]$ is an order relation
satisfying the following three classical properties:
\begin{enumerate}
\item for all $m\in\cT$, $1\preceq m$;
\item for all $m,m',s\in\cT$, $m\prec m'\Rightarrow m\,s\prec m'\,s$;
\item every subset of $\cT$ has a least element for $\prec$.
\end{enumerate}

For a monomial ordering $\prec$ on $\K[\x]$, the
\emph{leading monomial} of $f$, denoted $\LM(f)$, is the
greatest monomial in the support of $f$ for $\prec$. The
\emph{leading coefficient} of $f$, denoted $\LC(f)$, is the
nonzero coefficient of $\LM(f)$. The \emph{leading term} of $f$,
$\LT(f)$, is defined as $\LT(f)=\LC(f)\,\LM(f)$.
For an ideal $I$, we denote, classically,
$\LM(I)=\{\LM(f),\ f\in I\}$.

We recall briefly the definition of a \gb and a staircase.
\begin{definition}
  Let $I$ be a nonzero ideal of $\K[\x]$ and let $\prec$ be a monomial ordering.
  A set $\cG\subseteq I$ is a \emph{\gb} of $I$ if for all $f\in I$,
  there exists $g\in\cG$ such that $\LM(g)|\LM(f)$.

  The set $\cG$ is a \emph{minimal} \gb of $I$ if for any $g\in\cG$,
  $\cG\setminus\{g\}$ does not span $I$.

  Furthermore, $\cG$ is (minimal) \emph{reduced} if for any $g,g'\in\cG$,
  $g\neq g'$ and any monomial $m\in\supp g'$, $\LT(g)\nmid m$.

  Let $\cG$ be a reduced truncated \gb, the \emph{staircase} of $\cG$
  is
  \[S=\Staircase(\cG)=\{s\in\cT,\ \forall\,g\in\cG, \LM(g)\nmid s\}.\]
  It is also the canonical basis of $\K[\x]/I$.
\end{definition}

\begin{remark}
  By definition, a staircase is stable by division: that is, for any
  $s,s'\in\cT$, if $s$ is in the staircase of $\cG$ and $s'|s$,
  then $s'$ is also in the staircase of $\cG$.

  In some instances, the goal will be to make the smallest \gb
  staircase from a monomial set $S$: this is done by adding all the
  divisors of the elements of $S$. We denote this by stabilizing $S$
  with the $\Stabilize(S)$ command.
\end{remark}

We, now, present notation
of~\cite{Sakata88,Sakata90,Sakata09} and relate it to polynomials
and polynomial ideals. This definition shall act like a dictionary
between Sakata's notation in these paper and the polynomials algebra
notation. We also refer to~\cite{Guisse2016}, \cite[Section~1]{Mora09}
and~\cite[Section~2]{Sakata09} for this kind
of dictionary.
\begin{definition}
  Given a set of polynomials $G\subseteq\K[\x]$.
  \begin{itemize}
  \item $\Sigma(G)=\{\x^{\bi},\ \exists\,g\in G,\ \LM(g)|\x^{\bi}\}$.

    Whenever, $G$ is a \gb of an ideal $I$, $\Sigma(G)$ is by
    definition $\LM(I)$.
  \item As $\Sigma(G)$ satisfies $\forall\,s\in\Sigma(G), m\in\cT$, if
    $s|m$, then $m\in\Sigma(G)$, it has minimal elements for the
    division.  They form the set $\sigma(G)=\min_|(\Sigma(G))$.

    Whenever, $G$ is a minimal \gb of an ideal $I$, $\sigma(G)$ is by
    definition $\LM(G)$.
  \item $\Delta(G)=\cT\setminus\Sigma(G)$.

    Whenever $G$ is a \gb of an ideal $I$, $\Delta(G)$ is its
    \emph{staircase}, the canonical basis of $\K[\x]/I$.
  \item As $\Delta(G)$ satisfies $\forall\,d\in\Delta(G), m\in\cT$, if
    $m|d$, then $m\in\Delta(G)$, it has maximal elements for the
    division.  They form the set $\delta(G)=\max_|(\Delta(G))$.

    Whenever, $G$ is a \gb of an ideal $I$, $\delta(G)$ is by
    definition the \emph{corner set} of the staircase.
  \end{itemize}
\end{definition}
\gb theory allows us to choose any monomial ordering $\prec$.
Among all the monomial ordering, we will mainly use the
\begin{itemize}
\item $\LEX(x_n\prec\cdots\prec x_1)$ ordering which compares
  monomials as follows $\x^{\bi}\prec\x^{\bi'}$ if, and only if, there
  exists $k$, $1\leq k\leq n$ such that for all $\ell<k$,
  $i_{\ell}=i_{\ell}'$ and $i_k<i_k'$,
  see~\cite[Chapter~2, Definition~3]{CoxLOS2015};
\item $\DRL(x_n\prec\cdots\prec x_1)$ order which compares monomials as
  follows $\x^{\bi}\prec\x^{\bi'}$ if, and only if,
  $i_1+\cdots+i_n<i_1'+\cdots+i_n'$ or $i_1+\cdots+i_n=i_1'+\cdots+i_n'$
  and there exists $k$,
  $2\leq k\leq n$ such that for all $\ell>k$, $i_{\ell}=i_{\ell}'$
  and $i_k>i_k'$. Equivalently, there exists $k$, $1\leq k\leq n$ such
  that for all $\ell>k$, $i_1+\cdots+i_{\ell}=i_1'+\cdots+i_{\ell}'$
  and $i_1+\cdots+i_k<i_1'+\cdots+i_k'$,
  see~\cite[Chapter~2, Definition~6]{CoxLOS2015}.
\end{itemize}

However, in the \BMS algorithm, we need to be able to enumerate all
the monomials up to a bound monomial. This forces the user to take an
ordering $\prec$
such that for all $M\in\cT$, the set $\{m\prec M,\ m\in\cT\}$
is finite. Such an ordering $\prec$ makes $(\N^n,\prec)$ isomorphic to
$(\N,<)$, thus it makes sense to speak about the next monomial
for $\prec$.

This request excludes for instance the $\LEX$ ordering, and more
generally any elimination ordering. In other words, only weighted
degree ordering, or \emph{weight ordering}, should be used. It is well
known that any monomial ordering $\prec$ on $\cT$ can be obtained
from a matrix $A\in\R^{n\times n}$ through: $\x^{\bi}\prec \x^{\bj}$
if, and only if, $\x^{A\cdot\bi}\prec_{\LEX}\x^{A\cdot\bj}$,
see~\cite{Erdos56}.
Such a matrix
$A\in\R^{n\times n}$ defines a monomial ordering if its first row
is nonnegative. It defines a weight ordering if its first row is
positive, see~\cite{Robbiano1986}
and~\cite[Chapter~2, Exercises~4.10 and~4.11]{CoxLOS2015}

\begin{definition}
  Let $I$ be a homogeneous ideal of $\K[\x]$ and let $\prec$ be a
  monomial ordering. A set $\cG\subseteq I$ is a
  \emph{$d$-truncated \gb}, or truncated \gb of $I$ up to degree
    $d$, if for all $g\in\cG$, $\deg g\leq d$ and for
    for all $f\in I$, if $\deg f\leq d$, then there exists a $g\in\cG$
    such that
    $\LT(g)|\LT(f)$.

    This can be computed using any \gb
    algorithm by discarding critical pairs of degree greater than $d$.
\end{definition}

For an affine ideal $I$, an analogous definition of $d$-truncated \gb
exists. It is the output of a \gb algorithm discarding all critical pairs 
$(f,f')$ with
$\deg\LT(f)+\deg\LT(f')-\deg\lcm(\LT(f),\LT(f'))> d$, \ie with
degree higher than $d$. In this situation, a
$d$-truncated \gb $\cG$  will span
the subspace of polynomials $\sum_{g\in\cG}h_g\,g$ with $\deg h_g\leq
d-\deg g$.

A truncated \gb $\cG$ is \emph{reduced} if for any $g,g'\in\cG$ and
any monomial $m\in\supp g$, $\LM(g')\nmid m$.

The following definition extends the definition of the staircase of a \gb to
truncated \gb.
\begin{definition}\label{def:staircase}
  Let $\cG$ be a reduced truncated \gb, the \emph{staircase} of $\cG$
  is
  \[S=\Staircase(\cG)=\{s\in\cT,\ \forall\,g\in\cG, \LM(g)\nmid s\}.\]
\end{definition}

\subsection{Gorenstein ideals}
From any ideal $J\subseteq\K[\x]$, it is clear that one can construct
a sequence $\bu=(u_{\bi})_{\bi\in\N^n}$ whose ideal of
relations contains $J$: from a \gb $\cG$ of $J$ and staircase $S$, set the
values of the sequence terms $u_{\bi}=[\x^{\bi}]$, for $\x^{\bi}\in S$,
as desired and then
computes the terms $u_{\bj}=[\x^{\bj}]$, for $\x^{\bj}\in\LM(I)$,
using the relations given by $\cG$.

However, Proposition~3.3 in~\cite{Brachat2010} proves that there are
nonzero ideals of $\K[\x]$ that cannot be the ideals
of relations of linear recurrent sequences, whenever $n\geq 2$.
Indeed, the ideal of relations is necessarily \emph{Gorenstein},
\cite{Gorenstein52, Macaulay34}, and problems occur only if $J$ has a zero of
multiplicity at least $2$.

For instance, there is no bivariate sequence
$\bu=(u_{i,j})_{(i,j)\in\N^2}$ whose ideal of relations $I$ is
$J=\langle x^2,x\,y,y^2\rangle$. That is, any sequence $\bu$ satisfying
$u_{2+i,j}=u_{1+i,1+j}=u_{i,2+j}=0$, for all $(i,j)\in\N^2$, satisfies a relation
induced by a degree-$1$ polynomial. Hence, $I$ strictly contains $J$.

The following theorem can also be found
in~\cite[Theorem~8.3]{ElkadiM2007}.
\begin{theorem}
  Let $I\subseteq\K[\x]$ be a $0$-dimensional ideal and let
  $R=\K[\x]/I$. The ideal $I$
  (\resp ring $R$) is \emph{Gorenstein} if equivalently
  \begin{enumerate}
  \item $R$ and its dual are isomorphic as $R$-modules;
  \item there exists a $\K$-linear form $\tau$ on $R$ such that the
    following bilinear form is non degenerate
    \begin{align*}
      R\times R&\to \K\\
      (a,b) &\mapsto \tau(a\,b).
    \end{align*}
  \end{enumerate}
\end{theorem}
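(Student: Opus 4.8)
The plan is to prove the equivalence of the two conditions by exhibiting, in each direction, the explicit correspondence between an $R$-module isomorphism $R\xrightarrow{\sim}R^*:=\mathrm{Hom}_{\K}(R,\K)$ and a linear form $\tau$ whose associated bilinear form is non-degenerate. The crucial structural fact, which I would record first, is that since $I$ is $0$-dimensional the quotient $R=\K[\x]/I$ is a finite-dimensional $\K$-vector space; hence $\dim_{\K}R=\dim_{\K}R^*$, and the dual carries its standard $R$-module structure via $(r\cdot\phi)(a)=\phi(r\,a)$.

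For the direction $(1)\Rightarrow(2)$, starting from an $R$-module isomorphism $\psi\colon R\to R^*$, I would set $\tau:=\psi(1)$. Using $R$-linearity one obtains $\psi(r)(a)=(r\cdot\tau)(a)=\tau(r\,a)$ for all $r,a\in R$, so that the bilinear form $(a,b)\mapsto\tau(a\,b)$ is exactly $(a,b)\mapsto\psi(a)(b)$. Its radical is then $\{a:\psi(a)=0\}=\ker\psi=\{0\}$ by injectivity of $\psi$, which is precisely the non-degeneracy required in~(2).

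For $(2)\Rightarrow(1)$, starting from $\tau$ with $(a,b)\mapsto\tau(a\,b)$ non-degenerate, I would define $\psi\colon R\to R^*$ by $\psi(r)(a):=\tau(r\,a)$. A one-line check using associativity shows $\psi(s\,r)=s\cdot\psi(r)$, so $\psi$ is $R$-linear; non-degeneracy gives $\ker\psi=\{0\}$; and since $\dim_{\K}R=\dim_{\K}R^*$, injectivity of the $\K$-linear map $\psi$ forces it to be a $\K$-linear, hence $R$-linear, isomorphism. This yields~(1) and closes the loop.

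The argument is essentially linear algebra once the dual module structure is in place, so there is no deep obstacle; the only point genuinely requiring care is the finite-dimensionality of $R$, which is exactly where the $0$-dimensionality hypothesis on $I$ enters, both to make $R^*$ a dual of the same dimension and to upgrade the injective $R$-linear map $\psi$ into an isomorphism. I would also note that since $R$ is commutative the form $(a,b)\mapsto\tau(a\,b)$ is symmetric, so its left and right radicals coincide and no separate verification of non-degeneracy on the other side is needed.
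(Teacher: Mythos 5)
Your argument is correct, but note that the paper itself gives no proof of this statement: it simply cites \cite[Theorem~8.3]{ElkadiM2007}, so there is no in-paper argument to compare yours against. What you wrote is the standard duality argument, and it is complete: the dictionary $\tau=\psi(1)$, \resp $\psi(r)(a)=\tau(r\,a)$, correctly translates between the two conditions, and you rightly isolate the one place where the hypothesis that $I$ is $0$-dimensional is used, namely that $R$ is a finite-dimensional $\K$-vector space, so that $\dim_{\K}R=\dim_{\K}\mathrm{Hom}_{\K}(R,\K)$ and an injective $R$-linear map $R\to\mathrm{Hom}_{\K}(R,\K)$ is automatically an isomorphism (its inverse being $R$-linear for free). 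The remark that commutativity of $R$ makes the form $(a,b)\mapsto\tau(a\,b)$ symmetric, so the left and right radicals coincide, is also the right way to dispose of the only potential ambiguity in ``non degenerate.'' In short: a sound, self-contained proof of a result the paper only quotes.
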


On the one hand, this result is important for the
\spFGLM application. If the input ideal is not Gorenstein, the
output ideal will be bigger. However, this can be easily tested by
comparing the degrees of the input and output ideals. On the other
hand, this yields a probabilistic test for the Gorenstein property of
an ideal $J$. Pick at random initial conditions, construct a sequence
thanks to these initial conditions and $J$ and then compute the ideal
$I$ of relations of the sequence. If $I=J$, then $J$ is Gorenstein.
We refer to~\cite{DaleoH2016} for another test on the Gorenstein
property of an ideal.

\subsection{Multi-Hankel matrices}
A matrix $H\in\K^{m\times n}$ is \emph{Hankel}, if there exists a
sequence $\bu=(u_i)_{i\in\N}$ such that for all $(i,i')\in\N^n$, the
coefficient $h_{i,i'}$ lying on the $i$th row and $i'$th column of $H$
satisfies $h_{i,i'}=u_{i+i'}$.

In a multivariate setting, we can extend this Hankel matrices notion
to \emph{multi-Hankel} matrices. Indexing the rows and columns with
monomials $\x^{\bi}=x_1^{i_1}\,\cdots\,x_n^{i_n}$ and
$\x^{\bi'}=x_1^{i'_1}\,\cdots\,x_n^{i'_n}$, the
coefficient of $H$ lying on the row labeled with
$\x^{\bi}$ and column labeled with $\x^{\bi'}$ is
$u_{\bi+\bi'}$. Given two sets of monomials $U$ and $T$, we let
$H_{U,T}$ be the multi-Hankel matrix with rows (\resp columns) indexed
with monomials in $U$ (\resp $T$).
\begin{example}
  Let $\bu=(u_{i,j})_{(i,j)\in\N^2}$ be a sequence.
  \begin{enumerate}
  \item Let $U=\{1,y,y^2,x,x\,y,x\,y^2,x^2,x^2\,y,x^2\,y^2\}$ and
    $T=\{1,y,x,x\,y,x^2,x^2\,y\}$, then
    \[H_{U,T}=\kbordermatrix{
      &1&y&&x&x\,y&&x^2&x^2\,y\\
      1    &u_{0,0} &u_{0,1} &\vrule&u_{1,0} &u_{1,1} &\vrule&u_{2,0} &u_{2,1}\\
      y    &u_{0,1} &u_{0,2} &\vrule&u_{1,1} &u_{1,2} &\vrule&u_{2,1} &u_{2,2}\\
      y^2  &u_{0,2} &u_{0,3} &\vrule&u_{1,2} &u_{1,3} &\vrule&u_{2,2}
      &u_{2,3}\\
      \cline{2-9}
      x    &u_{1,0} &u_{1,1} &\vrule&u_{2,0} &u_{2,1} &\vrule&u_{3,0} &u_{3,1}\\
      x\,y &u_{1,1} &u_{1,2} &\vrule&u_{2,1} &u_{2,2} &\vrule&u_{3,1} &u_{3,2}\\
      x\,y^2 &u_{1,2} &u_{1,3} &\vrule&u_{2,2} &u_{2,3} &\vrule&u_{3,2} &u_{3,3}\\
      \cline{2-9}
      x^2    &u_{2,0} &u_{2,1} &\vrule&u_{3,0} &u_{3,1} &\vrule&u_{4,0} &u_{4,1}\\
      x^2\,y &u_{2,1} &u_{2,2} &\vrule&u_{3,1} &u_{3,2} &\vrule&u_{4,1} &u_{4,2}\\
      x^2\,y^2 &u_{2,2} &u_{2,3} &\vrule&u_{3,2} &u_{3,3} &\vrule&u_{4,2} &u_{4,3}
    }.
    \]
    We can see that this matrix is a $3\times 3$ \emph{block-Hankel}
    matrix with Hankel blocks of size $3\times 2$.
  \item Let $T=\{1,y,x,y^2,x\,y,x^2\}$, then the following matrix has
    a less obvious structure:
    \[H_{T,T}=\kbordermatrix{
      &1&y&x&y^2&x\,y&x^2\\
      1 &u_{0,0} &u_{0,1} &u_{1,0} &u_{0,2} &u_{1,1} &u_{2,0}\\
      y &u_{0,1} &u_{0,2} &u_{1,1} &u_{0,3} &u_{1,2} &u_{2,1}\\
      x &u_{1,0} &u_{1,1} &u_{2,0} &u_{1,2} &u_{2,1} &u_{3,0}\\
      y^2 &u_{0,2} &u_{0,3} &u_{1,2} &u_{0,4} &u_{1,3} &u_{2,2}\\
      x\,y &u_{1,1} &u_{1,2} &u_{2,1} &u_{1,3} &u_{2,2} &u_{3,3}\\
      x^2 &u_{2,0} &u_{2,3} &u_{3,0} &u_{2,2} &u_{3,3} &u_{0,4}
    }.
    \]
  \end{enumerate}
\end{example}


\section{The \BMS algorithm}\label{s:BMS}
As in~\cite{Guisse2016}, we specialize to $\K[\x]$ the presentation of
the \BMS algorithm
given in~\cite{Bras-Amoros2006}, \cite{CoxLOS2015b} and~\cite{Sakata09}
in the more general case of ordered domains.

\subsection{A Polynomial interpretation of the \BMS algorithm}
Given a table $\bu=(u_{\bi})_{\bi\in\N^n}$ and a weight
ordering $\prec$ for $\x$. We let
$\cT_0=\{0\}\cup\{\x^{\bi},\ \bi\in\N^n\}$
and extend $\prec$ (still denoted by $\prec$)
to $\cT_0$ with the convention that $0\prec 1$.

The goal is to iterate on a monomial $m$,
by only considering, at each step, the table
$(u_{\bi})_{\bi\in\{\bk,\ \x^{\bk}\preceq m\}}$. As we only know partially the
table $\bu$, we need to define some notions according to this partial
knowledge at step $m$.
\begin{definition}
  \label{def:shift}
  Let $m\in\cT_0$. Let $f\in\K[\x]$, we say that the relation $f$ is
  \emph{valid up to $m$}, whenever
  \[\forall t\in\cT_0,\,
  \LM(t\,f) \preceq m \Rightarrow [t\,f]=0.\]
  We thus define the \emph{shift} of
  $f$ as $\shift (f)=\frac{m}{\LM(f)}$.

  We say that the relation $f$ \emph{fails} at $m$ whenever
  \begin{align*}
    \forall t\in\cT_0,\,
    t\,f \prec m \Rightarrow [t\,f]&=0,\\
    \cro{\frac{m}{\LM(f)}\,f}&\neq 0.
  \end{align*}
  We define the \emph{fail} of $f$ as
  $\fail(f)=m$.
  If the relation $f$ never fails, that is for all $t\in\cT_0$, $[t\,f]=0$,
  then by convention $\fail(f)=\shift(f)=+\infty$.
\end{definition}

\begin{proposition}
  Let $\bu$ be a table and $f\in\K[x]$ such that $\fail(f)\succ m$. For all
  $g\in\K[\x]$, if $\LM(g\,f)\preceq m$, then $[g\,f]=0$.
\end{proposition}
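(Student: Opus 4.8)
The plan is to reduce the statement about an arbitrary polynomial multiplier $g$ to the single-monomial case that Definition~\ref{def:shift} actually speaks about, and then to conclude by linearity of the bracket $[\,]$.

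First I would unwind the hypothesis $\fail(f)\succ m$. Write $\fail(f)=m'$, allowing $m'=+\infty$. By the definition of $\fail$, its first clause guarantees that for every $t\in\cT_0$ with $\LM(t\,f)\prec m'$ one has $[t\,f]=0$; in the case $\fail(f)=+\infty$ this holds for all $t$ by convention. Since $m\prec m'$, any monomial $t$ with $\LM(t\,f)\preceq m$ satisfies $\LM(t\,f)\preceq m\prec m'$, hence $[t\,f]=0$. In other words, the hypothesis says precisely that $f$ is valid up to $m$ in the sense of Definition~\ref{def:shift}, for monomial multipliers.

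Next I would decompose $g=\sum_{\bk}c_{\bk}\,\x^{\bk}$ over its support. For each $\bk\in\supp(g)$ we have $\x^{\bk}\preceq\LM(g)$, so multiplying by $\LM(f)$ and using compatibility of $\prec$ with multiplication together with the identities $\LM(\x^{\bk}\,f)=\x^{\bk}\,\LM(f)$ and $\LM(g\,f)=\LM(g)\,\LM(f)$, we obtain $\LM(\x^{\bk}\,f)\preceq\LM(g\,f)\preceq m$. By the previous paragraph each $[\x^{\bk}\,f]$ therefore vanishes, and by linearity of the bracket $[g\,f]=\sum_{\bk}c_{\bk}\,[\x^{\bk}\,f]=0$.

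The only point demanding care, and what I regard as the crux, is the passage from single-monomial multipliers to an arbitrary polynomial $g$: this rests on the multiplicativity $\LM(g\,f)=\LM(g)\,\LM(f)$, which ensures that no monomial in the support of $g$ can push $\LM(\x^{\bk}\,f)$ beyond $m$. Consequently every term $[\x^{\bk}\,f]$ is killed individually, and no cancellation argument among the terms is ever needed.
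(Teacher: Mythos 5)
Your proof is correct. The paper in fact states this proposition with no proof at all, treating it as an immediate consequence of Definition~\ref{def:shift}; your argument---reducing to monomial multipliers via the multiplicativity $\LM(g\,f)=\LM(g)\,\LM(f)$ (valid since $\K$ is a field, so leading terms cannot cancel) and concluding by linearity of the bracket $[\,]$---is precisely the routine verification the authors leave implicit, so there is nothing to compare against and nothing missing in your write-up.
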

The following proposition show how to combine two failing relations with the
same shift in order to obtain a new relation valid with a bigger shift.

\begin{proposition}
  \label{prop:augm_shift}
  Let $f_1$ and $f_2$ be two relations such that
  $v=\frac{\fail (f_1)}{\LM(f_1)}=\frac{\fail (f_2)}{\LM (f_2)}$ and
  $e_1=\cro{v\,f_1}$, $e_2=\cro{v\,f_2}$. Let $f$ be the nonzero polynomial
  $f_1 - \frac{e_1}{e_2}\,f_2$. Then, for $i\in\{1,2\}$,
  $\fail(f)\succ \fail(f_i)$, \ie
  $\frac{\fail(f)}{\LM(f)}\succ v$.
\end{proposition}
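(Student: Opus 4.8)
The plan is to prove the shift inequality $\frac{\fail(f)}{\LM(f)}\succ v$ directly, by showing that the combined relation $f$ is valid at every shift up to and including $v$. Precisely, I would establish that $\cro{t\,f}=0$ for every monomial $t\in\cT_0$ with $t\preceq v$. Since $\cro{\cdot}$ is $\K$-linear in its polynomial argument, for any such $t$ one has
\[
\cro{t\,f}=\cro{t\,f_1}-\frac{e_1}{e_2}\,\cro{t\,f_2};
\]
here the quotient $e_1/e_2$ is well defined because $e_2=\cro{v\,f_2}\neq 0$, this nonvanishing being exactly the value witnessing the failure of $f_2$ at the shift $v=\fail(f_2)/\LM(f_2)$.

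I would then split the verification into the strict shifts and the boundary shift. For $t\prec v$, the multiplicativity of $\prec$ (the second monomial-ordering axiom) gives the equivalence $t\prec v\Leftrightarrow\LM(t\,f_i)=t\,\LM(f_i)\prec v\,\LM(f_i)=\fail(f_i)$ for $i\in\{1,2\}$, so the definition of $\fail$ forces $\cro{t\,f_1}=\cro{t\,f_2}=0$ and hence $\cro{t\,f}=0$. For the boundary shift $t=v$, the direct computation
\[
\cro{v\,f}=\cro{v\,f_1}-\frac{e_1}{e_2}\,\cro{v\,f_2}=e_1-\frac{e_1}{e_2}\,e_2=0
\]
is where the coefficient $e_1/e_2$ does its job, cancelling the two simultaneous failure values. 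Combining the two cases yields $\cro{t\,f}=0$ for all $t\preceq v$.

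To finish, I would invoke the definition of $\fail$ once more: the value $\cro{(\fail(f)/\LM(f))\,f}$ is nonzero, so the failing shift $\fail(f)/\LM(f)$ cannot be one of the monomials $t\preceq v$ just handled; therefore $\frac{\fail(f)}{\LM(f)}\succ v$, which is the asserted inequality. I expect the only genuinely delicate point to be the passage to the alternative formulation $\fail(f)\succ\fail(f_i)$. Multiplying $\frac{\fail(f)}{\LM(f)}\succ v$ through by $\LM(f)$ gives $\fail(f)\succ v\,\LM(f)$, and since $\fail(f_i)=v\,\LM(f_i)$, the two forms coincide exactly when $\LM(f)\succeq\LM(f_i)$. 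This holds automatically unless $\LM(f_1)=\LM(f_2)$ with their leading terms cancelling in $f$; in the way these propositions are exploited inside the algorithm no such cancellation occurs, so the shift inequality and the fail inequality agree, the shift being the genuinely meaningful quantity. A minor notational care is also needed: the condition written ``$t\,f\prec m$'' in Definition~\ref{def:shift} must be read as $\LM(t\,f)\prec m$, which is precisely what makes the multiplicativity argument above go through.
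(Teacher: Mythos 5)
Your proof is correct and takes essentially the same route as the paper's: $\K$-linearity of $\cro{\,\cdot\,}$ gives $\cro{t\,f}=0$ for all monomials $t\prec v$, the choice of coefficient $\frac{e_1}{e_2}$ (well defined since $e_2\neq 0$) cancels the boundary value $\cro{v\,f}$, and hence $\frac{\fail(f)}{\LM(f)}\succ v$. Your closing caveat is in fact slightly more careful than the paper, whose final chain $\fail(f)\succ v\,\LM(f)\succeq\fail(f_i)$ silently assumes $\LM(f)\succeq\LM(f_i)$, \ie that no cancellation of leading terms occurs in $f_1-\frac{e_1}{e_2}\,f_2$ --- exactly the point you flag and resolve by noting how the proposition is invoked inside the algorithm.
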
 
\begin{proof}
  For any $c\in\K$ and any $\mu\in\K[\x]$ such that
  $\LM(g)\prec v$, 
  we have
  $[\mu\,(f_1+c\,f_2)]=[\mu\,f_1]+c\,[\mu\,f_2]=0$, hence
  $\fail (f_1+c\,f_2)\succeq \fail (f_i)$.
  
  It remains to prove that for a good choice of $c$, we have a strict
  inequality:
  as, $[v\,(f_1+c\,f_2)]=[v\,f_1]+c\,[v\,f_2]=e_1+c\,e_2$, it is clear that
  $[v\,f]=[v\,(f_1-\frac{e_1}{e_2}\,f_2)]=0$, so that
  $\fail (f)\succ v\,\LM(f)\succeq\fail (f_i)$.
\end{proof}

\begin{definition}\label{def:im}
  Using the same notation as in Definition~\ref{def:staircase}, we let
  \[I_m = \{f\in\K[\x],\ \fail\pare{f}\succ m\},\]
  and $\cG_m$ be the least elements for $\prec$ of $I_m$, it is
  a truncated \gb of $I_m$:
  \begin{align*}
    \cG_m &= \min_{\prec}\{g,\ g\in I_m\},\\
    S_m &= \Staircase(\cG_m).
  \end{align*}
\end{definition}

\begin{example}
  Let us go back to Example~\ref{ex:binom} with sequence
  $\bin = \left(\binom{i}{j}\right)_{(i,j)\in\N^2}$.
  Consider $\K[x,y]$ with the $\DRL(y\prec x)$ ordering,
  and $m=x^2$.
  \[
  \begin{ytableau}
    \none[y^2] & 0   \\
    \none[y]   & 0        & 1   \\
    \none[1]   & 1        & 1        &     *(green)1 \\  
    \none      & \none[1] & \none[x] & \none[x^2]
  \end{ytableau}
  \]
  From this table, on the one hand, we can deduce that
  \begin{itemize}
  \item since it is not identically $0$, there is no relation with
    leading monomial $1$ valid up to $x^2$, hence $1\in S_{x^2}$;
  \item since $[y+\alpha]=\alpha$ and
    $[x\,(y+\alpha)]=1+\alpha$, there is no relation with leading
    monomial $y$ valid up to $x\,y$ and thus $x^2$, hence
    $y\in S_{x^2}$;
  \item since $[y\,(x+\beta\,y+\alpha)]=1$, there is no relation with leading
    monomial $x$ valid up to $x\,y$ and thus $x^2$, hence $x\in S_{x^2}$.
  \end{itemize}
  On the other hand, we can check that
  \begin{itemize}
  \item since $[y^2]=0$, relation $y^2$ is valid up to $y^2$ and thus
    $x^2$, hence $y^2\in\cT\setminus S_{x^2}$;
  \item since $[x\,y-1]=0$, relation $x\,y-1$ is valid up to $x\,y$
    and thus $x^2$, hence $x\,y\in\cT\setminus S_{x^2}$;
  \item since $[x^2-x]=0$, relation $x^2-x$ is valid up to $x^2$,
    hence $x^2\in\cT\setminus S_{x^2}$.
  \end{itemize}
  Therefore, $S_{x^2} = \{1,y,x\}$,
  $\max_|(S_{x^2})=\{y,x\}$ and $\min_|(\cT\setminus S_{x^2})=\{y^2,x\,y,x^2\}$.
  This is summed up in the following diagram. 
  \[
  \begin{ytableau}
    \none[y^2] & \bigodot  \\
    \none[y]   & \bigotimes  & \bigodot\\
    \none[1]   &    & \bigotimes     &    *(green)\bigodot \\
    \none      & \none[1] & \none[x] & \none[x^2]
  \end{ytableau}
  \quad\quad
  \begin{array}{cl}
    \\\\
    \bigodot: &\min_|(\cT\setminus S_{x^2})\\
    \bigotimes: &\max_|(S_{x^2})
  \end{array}
  \]
  Let us notice that many relations with respective leading monomials
  $y^2,x\,y,x^2$ suit actually. These would be
  $y^2-\alpha_1\,x+\alpha_y\,y+\alpha_1,x\,y-(1+\alpha_1)\,x
  +\alpha_y\,y+\alpha_1$ and
  $x^2-(1+\alpha_1)\,x+\alpha_y\,y+\alpha_1$. Furthermore,
  $I_{x^2}$ is not stable by addition: $(x^2-x),(x^2-2\,x+1)\in I_{x^2}$ but
  $x^2-x-(x^2-2\,x+1)=(x-1)\not\in I_{x^2}$ since $\fail\pare{x-1}=x\,y$.
  Hence, $I_{x^2}$ is not an ideal of $\K[x,y]$.

  For $m=x^3$, with the following table, we find that
  \[
  \begin{ytableau}
    \none[y^3] &0\\
    \none[y^2] & 0 &0  \\
    \none[y]   & 0        & 1   & 2\\
    \none[1]   & 1        & 1        & 1 &     *(green)1 \\  
    \none      & \none[1] & \none[x] & \none[x^2] &\none[x^3]
  \end{ytableau}
  \]
  \begin{itemize}
  \item since $[y^2]=[y\,y^2]=[x\,y^2]=0$, then $y^2$ is valid up to
    $x\,y^2$ and thus $x^3$;
  \item since $[x\,y-1]=[y\,(x\,y-1)]=0$ and $[x\,(x\,y-y)]=1$, then
    $x\,y-1$ fails at $x^2\,y$. Yet, since
    $[y]=[y\,y]=0$ and $[x\,y]=1$, then by
    Proposition~\ref{prop:augm_shift}, 
    $[x\,y-y-1]=[y\,(x\,y-y-1)]=0$ and $[x\,(x\,y-y-1)]$ vanishes as
    well. Hence, $x\,y-y-1$ is
    valid up to $x^2\,y$ and thus $x^3$;
  \item since $[x^2-x]=0$ and $[y\,(x^2-x)]=1$, then $x^2-x$ fails at
    $x^2\,y$. Likewise, since $[x-1]=0$ and $[y\,(x-1)]=1$, then
    $[x^2-2\,x+1]=0$ and $[y\,(x^2-2\,x+1)]=0$. Furthermore,
    $[x\,(x^2-2\,x+1)]=0$, so that
    $x^2-2\,x+1$ is valid up to $x^3$.
  \end{itemize}
  Therefore, $S_{x^3} = \{1,y,x\}$,
  $\max_|(S_{x^3})=\{y,x\}$ and $\min_|(\cT\setminus
  S_{x^3})=\{y^2,x\,y,x^2\}$.
  We can also check that these relations span the only valid relations with
  support in $S_{x^3}\cup\{y^2,x\,y,x^2\}$.
  \[
  \begin{ytableau}
    \none[y^3] &       \\
    \none[y^2] & \bigodot &  \\
    \none[y]   & \bigotimes  & \bigodot & \\
    \none[1]   &    & \bigotimes     &    \bigodot  & *(green)\\
    \none      & \none[1] & \none[x] & \none[x^2] & \none[x^3]\\
  \end{ytableau}
  \]
\end{example}

Although $I_m$ is not an ideal in general, we have the following results:

 \begin{proposition}\label{prop:Im-closed}
   Using the notation of Definitions~\ref{def:shift} and~\ref{def:im},
   \begin{enumerate}
   \item \label{eq:stab} $I_m$ is closed under multiplication by 
   elements of $\K[\x]$,
   \item for all monomials $t,t'$ such that $t| t'$,
     \begin{enumerate}
     \item \label{it:staircase}
       if $t'\in S_m$, then $t\in S_m$.
     \item \label{it:compl_staircase}
       if $t\in\cT\setminus S_m$, then $t'\in\cT\setminus S_m$,
     \end{enumerate}
   \end{enumerate}
 \end{proposition}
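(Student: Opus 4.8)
The plan is to deduce all three statements from the proposition immediately preceding this one, which already does the substantive work of upgrading validity from monomial multipliers to arbitrary polynomial multipliers. Before invoking it, I would first record the elementary equivalence that, for any $f\in\K[\x]$, the three conditions $f\in I_m$, $\fail(f)\succ m$, and ``$f$ is valid up to $m$'' coincide. This is immediate from Definitions~\ref{def:shift} and~\ref{def:im}: since the leading monomial is multiplicative, $\fail(f)$ is the $\prec$-smallest monomial of the form $\LM(t\,f)=t\,\LM(f)$, over $t\in\cT_0$, for which $[t\,f]\neq 0$. Hence $\fail(f)\succ m$ says exactly that $[t\,f]=0$ whenever $\LM(t\,f)\preceq m$, which is the defining condition of validity up to $m$.

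For item~1, I would fix $f\in I_m$ and $h\in\K[\x]$ and show $h\,f\in I_m$, i.e.\ that $h\,f$ is valid up to $m$. Take any $t\in\cT_0$ with $\LM(t\,h\,f)\preceq m$; the case $t=0$ is trivial, so assume $t$ is a monomial. Setting $g=t\,h\in\K[\x]$, the polynomials $t\,h\,f$ and $(t\,h)\,f$ are literally equal, so $\LM(g\,f)=\LM(t\,h\,f)\preceq m$. Since $\fail(f)\succ m$, the preceding proposition yields $[g\,f]=[t\,h\,f]=0$. As $t$ was arbitrary, $h\,f$ is valid up to $m$, hence $h\,f\in I_m$. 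Thus closure under multiplication is essentially a one-line corollary of the preceding proposition.

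Item~2 is purely combinatorial and uses the sequence nowhere: it is the stability of the staircase $S_m=\Staircase(\cG_m)$ under division, as defined in Definition~\ref{def:staircase}. For~(2a), suppose $t\mid t'$ and $t'\in S_m$, and assume for contradiction that some $g\in\cG_m$ satisfies $\LM(g)\mid t$; then $\LM(g)\mid t\mid t'$, contradicting $t'\in S_m$, so no such $g$ exists and $t\in S_m$. Item~(2b) is then just the contrapositive of~(2a): if $t\mid t'$ and $t\notin S_m$, then $t'\notin S_m$.

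The only point demanding care is the first paragraph, namely making the identification of ``valid up to $m$'' with $\fail(f)\succ m$ fully precise, together with the observation that $\LM(t\,h\,f)=\LM\big((t\,h)\,f\big)$ so that the \emph{polynomial} multiplier $g=t\,h$ may legitimately be fed into the preceding proposition (which is stated for $g\in\K[\x]$, not merely for monomials). Once this bookkeeping is settled, there is no genuine obstacle: item~1 is a direct consequence of the preceding proposition and item~2 reduces to transitivity of divisibility.
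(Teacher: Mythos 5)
Your proof is correct. The paper itself states this proposition without any proof, treating it as an immediate consequence of the proposition that precedes it and of the definition of the staircase; your write-up — deducing item~1 by feeding the polynomial multiplier $g=t\,h$ into that preceding proposition (after the identification $f\in I_m \iff \fail(f)\succ m \iff f$ valid up to $m$), and item~2 from transitivity of divisibility via $S_m=\Staircase(\cG_m)$ — is exactly the intended, omitted verification.
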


Moreover, it is clear that the sequence $(I_m)_{m\in\cT_0}$ is decreasing 
and that if $\bu$ is linear recurrent then $I = \bigcap_{m\in\cT_0}I_m$.
Therefore, $\left(S_m \right)_{m\in\cT_0}$
is increasing and its limit is $S$ the finite target staircase. Hence,
for $m$ big enough, $S_m$
will be the target staircase. We will give an upper bound in 
Proposition~\ref{prop:upperbound}.

The following result gives an intrinsic characterization of 
$S_m$ that is key in the iteration of the \BMS algorithm.

\begin{proposition}\label{prop:iter}
  For all monomial $m\in\cT_0$, 
  $S_m = \acc{\frac{\fail(f)}{\LM(f)},\ f\notin I_m}$.

  Furthermore, let $m^+$ be the successor of $m$. Let $s$ be a
  monomial in the staircase $S_{m^+}$. Then, $s$
  was added at step $m^+$, \ie $s\notin S_m$, if, and only if,
  $s|m^+$ and $\frac{m^+}{s}\in S_{m^+}\setminus S_m$.
\end{proposition}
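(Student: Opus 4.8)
The plan is to establish the two claims of Proposition~\ref{prop:iter} separately, leaning heavily on the characterization of $S_m$ through failing relations and on the closure properties recorded in Proposition~\ref{prop:Im-closed}.

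\medskip

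\noindent\textbf{First claim: $S_m = \acc{\frac{\fail(f)}{\LM(f)},\ f\notin I_m}$.}
First I would prove the inclusion $\acc{\frac{\fail(f)}{\LM(f)},\ f\notin I_m}\subseteq S_m$. Take $f\notin I_m$, so $\fail(f)\preceq m$, and set $s=\frac{\fail(f)}{\LM(f)}$. The defining property of a fail is that $f$ is valid everywhere strictly below $\fail(f)$ but $\cro{s\,f}\neq 0$; in particular $\LM(s\,f)=\fail(f)=s\,\LM(f)\preceq m$. I claim no relation $g\in I_m$ can have $\LM(g)\mid s$. Indeed, if such a $g$ existed, then $s/\LM(g)$ is a monomial and, since $g\in I_m$ means $\fail(g)\succ m$, the Proposition preceding~\ref{prop:augm_shift} would force $\cro{(s/\LM(g))\,g}=0$ whenever $\LM((s/\LM(g))\,g)=s\preceq m$. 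I would then argue that the existence of such a $g$ together with validity of $f$ below $\fail(f)$ contradicts $\cro{s\,f}\neq 0$: writing $s=\LM(g)\cdot t$ and reducing $s\,f$ modulo $g$ lowers the leading monomial while preserving the bracket value $\cro{s\,f}$, pushing the nonzero evaluation to a monomial $\prec\fail(f)\preceq m$ where $f$ was supposed to be valid. Hence $s\in S_m$.

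\medskip

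\noindent For the reverse inclusion $S_m\subseteq\acc{\frac{\fail(f)}{\LM(f)},\ f\notin I_m}$, take $s\in S_m$. By definition of the staircase of the truncated \gb $\cG_m$, the monomial $s$ is not divisible by any $\LM(g)$ with $g\in\cG_m$, and since $\cG_m$ generates the leading monomials of $I_m$, no relation in $I_m$ has leading monomial dividing $s$. I would produce a relation $f$ with $\LM(f)\mid s$ that fails exactly at $s\,\LM(f)^{-1}\cdot\LM(f)=s$, or more directly argue: since $s\in S_m$ there is no valid relation up to $m$ of leading monomial a divisor of $s$, so the column indexed by $s$ in the relevant multi-Hankel evaluation is linearly independent from the earlier ones; the first index at which independence is witnessed yields a polynomial $f$ whose fail is exactly $s$. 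The cleanest route is to exhibit $f$ by a dimension/minimality argument on the evaluation map $t\mapsto\cro{t\,(\,\cdot\,)}$ restricted to monomials $\preceq s$, taking $f$ to be the lowest-leading-monomial witness of nonvanishing at $s$.

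\medskip

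\noindent\textbf{Second claim: the incremental update.} Let $m^+$ be the successor of $m$ and $s\in S_{m^+}$. Since $(S_m)_{m\in\cT_0}$ is increasing, ``$s$ was added at step $m^+$'' means precisely $s\in S_{m^+}\setminus S_m$. Using the first claim at both levels, $s\in S_{m^+}$ means $s=\frac{\fail(f)}{\LM(f)}$ for some $f\notin I_{m^+}$, \ie $\fail(f)\preceq m^+$; and $s\notin S_m$ means no such witness exists with $\fail(f)\preceq m$. Because only one monomial, namely $m^+$ itself, is newly available between $m$ and $m^+$, a witness realizing $s$ at level $m^+$ but not at level $m$ must satisfy $\fail(f)=m^+$, equivalently $s\,\LM(f)=m^+$, \ie $s\mid m^+$ with cofactor $\LM(f)=\frac{m^+}{s}$. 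It then remains to translate ``$\LM(f)=\frac{m^+}{s}$ is the leading monomial of the new relation'' into the stated staircase condition $\frac{m^+}{s}\in S_{m^+}\setminus S_m$: the forward direction uses that $\LM(f)$ must itself lie in $S_{m^+}\setminus S_m$ (it too just became non-liftable, by the symmetry of the bracket pairing $\cro{s\,\LM(f)\,c}$ in the two factors), and the converse reconstructs $f$ from the fact that $\frac{m^+}{s}$ newly enters the staircase.

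\medskip

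\noindent The main obstacle I expect is the reverse inclusion of the first claim, namely manufacturing, for a given $s\in S_m$, an explicit relation $f$ with $\fail(f)/\LM(f)=s$; this is where the non-ideal behavior of $I_m$ bites and where one must invoke minimality of $\cG_m$ together with a careful reduction argument rather than a clean ideal-membership computation. The symmetry step in the second claim — exchanging the roles of $s$ and $\frac{m^+}{s}$ via the bilinear pairing induced by the bracket — is the other delicate point, since it is what makes the condition simultaneously a statement about $s$ and about its cofactor.
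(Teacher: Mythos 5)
Your forward inclusion $\acc{\fail(f)/\LM(f),\ f\notin I_m}\subseteq S_m$ is sound: the one-step reduction of $s\,f$ modulo $g$ (replace $\LM(g)$ by minus the tail of $g$, i.e.\ subtract $(s/\LM(g))\,f\,g$) preserves the bracket because every shift of $g$ involved stays $\preceq m\prec\fail(g)$, and it lands on shifts of $f$ strictly below $\fail(f)$, where $f$ is valid; this is exactly the pairing argument that the paper compresses into one line, and your claim-2 derivation from claim 1 is also workable in outline. The genuine gap is in the reverse inclusion $S_m\subseteq\acc{\fail(f)/\LM(f),\ f\notin I_m}$, which you flag as the main obstacle but do not close. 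The witness your construction produces --- ``the column indexed by $s$ is independent of the earlier ones; the first index at which independence is witnessed yields $f$'' --- is a relation whose \emph{leading monomial} is $s$ and whose fail is $v^{*}s$ for some monomial $v^{*}$; it realizes $v^{*}$ as a ratio, not $s$. Concretely, for the binomial sequence with $\DRL(y\prec x)$ and $m=x^2$: taking $s=x$, the system with leading monomial $x$ first becomes unsolvable at the shift $y$, giving $f=x-1$ with $\fail(f)=x\,y$, hence ratio $y\neq x$; the relation that actually realizes $x$ as a ratio is $y$ (with $\fail(y)=x\,y$), and nothing in your argument produces it.

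Passing from an ``LM-$s$ witness'' to a ``ratio-$s$ witness'' is precisely the crux, and the bracket-pairing symmetry alone cannot perform this interchange: the pairing only yields the forward inclusion. The paper resolves it by proving the reverse inclusion and the second claim \emph{simultaneously}, by induction on $m$: for $s\in S_{m^+}\setminus S_m$ one takes the element $f$ of $I_m$ with $\LM(f)=s$ failing at $m^+$, assumes for contradiction that no relation with leading monomial $m^+/s$ survives past $m$, invokes the \emph{induction hypothesis} to obtain a ratio-$(m^+/s)$ witness $h$ at level $m$, and combines $f$ and $h$ via Proposition~\ref{prop:augm_shift} to build an element of $I_{m^+}$ with leading monomial $s$, contradicting $s\in S_{m^+}$; the pairing then pins the fail of the resulting $g$ with $\LM(g)=m^+/s$ to exactly $m^+$, which is the desired ratio-$s$ witness. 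Your plan inverts this logical order (prove claim 1 non-inductively, then deduce claim 2), and the non-inductive proof of claim 1 is exactly what is missing; as written, the reverse inclusion --- and hence everything downstream of it --- remains unproven.
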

\begin{proof}
  We shall prove the first assertion by double inclusion.
  If $s=\frac{\fail(f)}{\LM(f)}$ then for all $g\in\K[\x]$ such that
  $\LM(g)=s$, $\fail(g)\preceq m$, hence $s\notin\LM(I_m)$, $s\in S_m$.

  The reverse inclusion is proved by induction on $m$. For $m=0$,
  $S_m=\emptyset$ and there is nothing to do. Let us
  assume the inclusion
  is satisfied for a monomial $m$.
  
  Let $s\in S_{m^+}$. On the one hand, if $s\in S_m$, then
  there exists
  $f\in\K[\x]\setminus I_m\subseteq\K[\x]\setminus I_{m^+}$
  such that $s=\frac{\fail(f)}{\LM(f)}$.

  If, on the other hand, $s\in S_{m^+}\setminus S_m$, then there
  exists a relation $f\in\K[\x]$ such that $\LM(f)=s$,
  and $m\prec \fail(f) \preceq m^+$, hence $\fail(f)=m^+$ and $s$
  divides $m^+$.

  Let us assume that for all $g\in\K[\x]$ with $\LM(g)=\frac{m^+}{s}$, we have
  $\fail (g)\preceq m\prec m^+$. Therefore, $\frac{m^+}{s}\in S_m$ and
  there exists $h\notin I_m$ such that
  $\frac{\fail(h)}{\LM(h)}=\frac{m^+}{s}$. By
  Proposition~\ref{prop:augm_shift}, there is $\alpha\in\K$ such that
  $\fail(f-\alpha\,h)\succ m^+$. Since $\fail(h)\preceq m\prec
  m^+$, then $\LM(h)\preceq s$ and $\LM(f-\alpha\,h)=s$, hence
  $\frac{\fail(f-\alpha\,h)}{\LM(f-\alpha\,h)}\succ\frac{m^+}{s}$. This
  contradicts the fact that $\frac{m^+}{s}\in S_m$. Thus there exists
  a $g\in\K[\x]$ with $\LM(g)=\frac{m^+}{s}$ and $\fail(g)\succeq m^+$.
 
  Let $g$ be such a relation,
  since $\fail(f)=m^+$, then $[g\,f]\neq0$ and
  $\fail(g)=m^+$. Therefore,
  $\frac{\fail(g)}{\LM(g)}=\frac{m^+}{m^+/s}=s$
  so that $s\in\acc{\frac{\fail(f)}{\LM(f)},\ f\notin I_{m^+}}$.

  Now, we proved that $s\in S_{m^+}\setminus S_m$ implies $s|m^+$ and 
  $\frac{m^+}{s}\in S_{m^+}\setminus S_m$. This implication is clearly
  an equivalence.
\end{proof}

From this proposition it follows that
if $m\in\cT_0$, and if $m^+$
is its successor:
\begin{equation}\label{eq:recdelta}
  \max_|(S_{m^+}) = \max_{|}\pare{\max_|(S_m) \cup 
    \acc{\frac{m^+}{s},\ s\in\min_|(\cT\setminus S_m)\cap S_{m^+}}}
\end{equation}

Relation~\ref{eq:recdelta} allows us
to construct, iterating on the monomial $m$, 
the set of relations $G_m$ representing the truncated \gb of $I_m$.
Relations $g\in G_m$ are indexed by their leading
monomials, describing $\cT\setminus S_m$. 

\begin{remark}\label{rk:choose_Sm}
  We can also construct another set, describing the edge of $S_m$,
  still denoted $S_m$, as there is a one-to-one correspondence between a
  staircase and its edge. The relations $h\in S_m$ are indexed by their ratio
  $\frac{\fail(h)}{\LM(h)}$ between their fail and their leading
  monomial, describing the full staircase of $I_m$.

  When two relations $h$ and $h'$ in $S_m$ are such that
  $\frac{\fail(h)}{\LM(h)}=\frac{\fail(h')}{\LM(h')}$, then we only
  need to keep one. Since the goal is to combine a relation of
  $S_m$ with a relation failing at $m^+$ to make a new one with a bigger
  shift, as in Proposition~\ref{prop:augm_shift}, it is best to handle
  smaller polynomials.
\end{remark}
This yields Algorithm~\ref{algo:bms}.

\begin{algorithm2e}[htbp!]\label{algo:bms}
  \small
  \DontPrintSemicolon
  \TitleOfAlgo{The \BMS algorithm.}
  \KwIn{A table $\bu=(u_{\bi})_{\bi\in\N^n}$ with coefficients in
    $\K$, a monomial ordering $\prec$ and a monomial $M$
    as the stopping condition.}
  \KwOut{A set $G$ of relations generating $I_M$.}
  $T := \{m\in\K[\x],\ m\preceq M\}$.\tcp*{ordered for $\prec$}
  $G := \{1\}$.\tcp*{the future \gb}
  $S := \emptyset$.\tcp*{staircase edge, elements will be
    $[h,\fail(h)/\LM(h)]$} 
  
  \Forall{$m\in T$}{
    $S' := S$.\;
    \For{$g\in G$}{
      \If{$\LM(g)| m$}{
        $e:=\cro{\frac{m}{\LM(g)}\,g}_{\bu}$.\;
        \If{$e\neq 0$}{
          $S':=S'\cup\acc{\cro{\frac{g}{e},\frac{m}{\LM(g)}}}$.\;
        }
      }
    }
    $S':=\min_|\acc{[h,\fail(h)/\LM(h)]}$.
    \tcp*{see Remark~\ref{rk:choose_Sm}}
    $G':= \Border (S')$.\;
    
    \For{$g'\in G'$}{
      Let $g\in G$ such that $\LM(g)|\LM(g')$.\; 
      \uIf{$\LM(g) \nmid m$}{
        $g':=\frac{\LM(g')}{\LM(g)}\,g$.\tcp*{translates the relation}
      }
      \uElseIf{$\exists\,h\in S,
        \frac{m}{\LM(g')} |\fail(h)$}{
        $g':= \frac{\LM(g')}{\LM(g)}\,g
        -\cro{\frac{m}{\LM(h)}\,h}_{\bu}\,
        \frac{\LM(g')\,\fail(h)}{m}\,h$.\tcp*{see
          Proposition~\ref{prop:augm_shift}}
      }
      \lElse{
        $g':=g$.
      }
    }
    $G := G'$.\;
    $S := S'$.\;
  }  
  \KwRet $G$.
\end{algorithm2e}

We saw that for $m$ big enough, $S_m$
will be the target staircase. We now
give an upper bound.

\begin{proposition}\label{prop:upperbound}
  Let $\bu$ be a linear recurrent sequence and $I$ be its ideal of
  relations.

  Let $S$ be the
  staircase of $I$ for $\prec$. Let $s_{\max}$
  be the largest monomial in
  $S$. Then, for $m\succeq (s_{\max})^2$,
  $S_m = S$.

  Let $\cG$ be a minimal \gb of $I$ for $\prec$ and let $g_{\max}$ be the
  largest leading monomial of  $\cG$. Then, for $m\succeq
  s_{\max}\cdot\max_{\prec}(g_{\max},s_{\max})$,
  the \BMS algorithm returns a minimal \gb of $I$
  for $\prec$.
\end{proposition}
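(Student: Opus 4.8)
The plan is to prove the two claims separately, relying on the characterization of $S_m$ from Proposition~\ref{prop:iter} and the shift-augmentation mechanism of Proposition~\ref{prop:augm_shift}. For the first claim, the key observation is that the target staircase $S$ is the limiting value of the increasing sequence $(S_m)_m$, so it suffices to show two inclusions at the threshold $m \succeq (s_{\max})^2$: namely $S_m \subseteq S$ always, and $S \subseteq S_m$ once $m$ is large enough. The inclusion $S_m \subseteq S$ follows because any genuine relation $f$ of the ideal $I$ has $\fail(f) = +\infty$, so $I \subseteq I_m$ and thus $\LM(I) \subseteq \LM(I_m)$, giving $S_m \subseteq S$. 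For the reverse inclusion, I would argue by contradiction: if some $s \in S$ were not in $S_m$, then $s \in \LM(I_m)$, meaning there is a relation $g$ with $\LM(g) = s'$ dividing $s$ and $\fail(g) \succ m$ but $g \notin I$. The point is that a polynomial failing to lie in $I$ must fail \emph{somewhere}, and I would bound where: since $s' \preceq s \preceq s_{\max}$ and a genuine minimal relation has leading term outside $S$, the smallest place a false relation with leading monomial in $S$ can be detected is controlled by multiplying by a cofactor of leading monomial at most $s_{\max}$. Hence once $m \succeq s_{\max} \cdot s_{\max} = (s_{\max})^2$, no false relation with leading monomial $\preceq s_{\max}$ can survive, forcing $S \subseteq S_m$.

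For the second claim, I would leverage the first: once $m \succeq (s_{\max})^2$ we already have $S_m = S$, so the staircase is correct and $\cG_m$ has the right set of leading monomials, namely the border elements that are the minimal monomials of $\cT \setminus S$. What remains is to show that each border relation computed by the algorithm actually agrees with the true reduced \gb element $g \in \cG$, i.e.\ that the coefficients on the staircase have stabilized. A relation $g_m \in \cG_m$ with $\LM(g_m) = \LM(g) =: \gamma$ is only guaranteed correct once it has been validated far enough to pin down all its staircase coefficients; by Proposition~\ref{prop:augm_shift} the shift of $g_m$ is built up by successive combinations, and $g_m$ coincides with the true $g$ precisely when $\fail(g_m) = +\infty$, equivalently when $g_m$ has been tested against every cofactor $t$ with $\LM(t\,g_m) \preceq m$ without failing. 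Since $g$ is a true relation whose shift must reach at least the largest monomial $\LM(g)\cdot(\text{staircase cofactor})$, and $\gamma \preceq g_{\max}$, the validation completes once $m \succeq s_{\max} \cdot \max_{\prec}(g_{\max}, s_{\max})$.

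I would organize the argument around the quantity $\shift(g) = m/\LM(g)$: a relation is pinned down once its shift covers the full staircase, i.e.\ once $m/\LM(g) \succeq s_{\max}$, which rearranges to $m \succeq s_{\max} \cdot \LM(g)$. Taking $\LM(g) \preceq g_{\max}$ over all $g \in \cG$ and combining with the staircase-correctness threshold $(s_{\max})^2$ from the first part gives the stated bound $m \succeq s_{\max} \cdot \max_{\prec}(g_{\max}, s_{\max})$, where the maximum accounts for whether the binding constraint is closing the staircase or validating the largest border relation.

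The main obstacle I anticipate is making precise the claim that a \emph{false} relation $f \notin I$ with $\LM(f) \preceq s_{\max}$ must fail at some monomial $\preceq (s_{\max})^2$. This is essentially a statement that the multi-Hankel matrix indexed by the staircase already detects every linear dependency that is not genuine, and requires invoking the Gorenstein/zero-dimensionality structure: because $\bu$ is linear recurrent, its ideal $I$ is zero-dimensional, so $\K[\x]/I$ has dimension $|S|$, and any relation valid on all products of staircase monomials (pairs $s \cdot s'$ with $s, s' \in S$, all $\preceq (s_{\max})^2$) that is not in $I$ would violate the non-degeneracy of the associated bilinear form. I would want to cite the non-degeneracy of $(a,b) \mapsto \tau(ab)$ from the Gorenstein theorem, or argue directly that the values $[\x^{\bi}]$ for $\x^{\bi} \preceq (s_{\max})^2$ determine the quotient structure completely, so that validity up to $(s_{\max})^2$ already forces membership in $I$ for leading monomials up to $s_{\max}$.
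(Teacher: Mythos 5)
The paper itself states Proposition~\ref{prop:upperbound} \emph{without proof} (it is followed immediately by an example), so there is no in-paper argument to compare yours against; I can only assess the proposal on its own merits. On those merits your plan is essentially correct, and the tool you reach for in the last paragraph is exactly the right one. Since $\bu$ is linear recurrent, $I$ is zero-dimensional and Gorenstein with dualizing form $a\mapsto[a]$, and the form $(a,b)\mapsto[a\,b]$ is nondegenerate on $\K[\x]/I$ for the direct reason you indicate: $[a\,b]=0$ for all $b$ forces $a\in I$ by the very definition of the ideal of relations. Hence $H_{S,S}$ is invertible. Your threshold computation then closes the anticipated obstacle: if $f\in I_m$ has $\LM(f)=s'\in S$ and $m\succeq(s_{\max})^2$, then for every $t\in S$ one has $t\,s'\preceq s_{\max}\,s'\preceq(s_{\max})^2\preceq m$ (two applications of compatibility of $\prec$ with multiplication), so $[t\,f]=0$ for all $t\in S$; since $[t\,f]=[t\,\bar f]$ where $\bar f$ is the normal form of $f$ modulo $I$ (because $f-\bar f\in I$), invertibility of $H_{S,S}$ gives $\bar f=0$, i.e.\ $f\in I$, contradicting $s'\in S$. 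Together with the trivial inclusion $S_m\subseteq S$ from $I\subseteq I_m$, this proves the first claim, and the same mechanism applied to border relations $\gamma\preceq g_{\max}$ produces the stated bound $s_{\max}\cdot\max_{\prec}(g_{\max},s_{\max})$, with the maximum arising exactly as you say.

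Two assertions in your second paragraph are false as written, though, and should be repaired — fortunately your bound derivation does not actually depend on them. First, the computed border relation $g_m$ does \emph{not} in general ``agree with the true reduced \gb element'': the paper proves the opposite (Theorem~\ref{th:reduced_gb}, Example~\ref{ex:reduced}) — the \BMS output is a minimal but generally non-reduced \gb, since the tail of $g_m$ may involve monomials outside $S$. What the nondegeneracy argument actually yields, and all the proposition needs, is $g_m\in I$: setting $h=g_m-g$ with $g$ the reduced element of the same leading monomial $\gamma$, one has $\LM(h)\prec\gamma$ and $[t\,h]=0$ for all $t\in S$ once $m\succeq s_{\max}\,\gamma$, so the normal form of $h$ vanishes and $h\in I$, whence $g_m\in I$; since the output leading monomials are $\min_|(\cT\setminus S)=\LM(\cG)$, the output is a minimal (possibly non-reduced) \gb of $I$. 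Second, your phrase ``$g_m$ coincides with the true $g$ precisely when $\fail(g_m)=+\infty$, equivalently when $g_m$ has been tested against every cofactor $t$ with $\LM(t\,g_m)\preceq m$'' conflates $\fail(g_m)\succ m$ with $\fail(g_m)=+\infty$; the whole point of the quantitative bound is that validity over shifts by the \emph{finite} set $S$ upgrades the former to the latter via $H_{S,S}$. You also implicitly use the \BMS invariant that every relation retained at step $m$ satisfies $\fail\succ m$; that is legitimate (it is the content of Definition~\ref{def:im} and Sakata's correctness), but worth stating. With these corrections your sketch is a complete and correct proof.
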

\begin{example}
  For the $\DRL(y\prec x)$ ordering, $I=\langle x^p,y^q\rangle$ and
  $q>p\geq 1$, we have, $s_{\max}=x^{p-1}\,y^{q-1}$ and
  $g_{\max}=y^q$. Therefore, the right staircase is found at most at step
  $m=x^{2\,p-2}\,y^{2\,q-2}$, while the \gb is found at most at step
  $x^{p-1}\,y^{q-1}\,\max_{\prec}(x^{p-1}\,y^{q-1},x^q)$, \ie
  $y^{2\,q-1}$ if $p=1$ and $x^{2\,p-2}\,y^{2\,q-2}$ otherwise.
\end{example}

From Propositions~\ref{prop:iter} and~\ref{prop:upperbound}, we can
deduce that $S=\acc{\frac{\fail(f)}{\LM(f)},\ f\notin I}$.%

\begin{example}\label{ex:binom_bms}
  We give the trace of the algorithm called on the binomial sequence
  $\bin$ for the
  $\DRL(y\prec x)$ ordering up to monomial $x^3$ (hence visiting
  all the monomials of degree at most $3$).

  To simplify the reading, whenever a relation succeeds in $m$ or
  cannot be tested in $m$, we
  skip the updating part as this relation remains the same.

  We start with the empty staircase $S$ and
  the relation $G=\{1\}$.
  \begin{enumerate}
  \item[] For the monomial $1$
    \begin{enumerate}
    \item[] The relation $g_1=1$ fails since $[\bin_{0,0}]=1$. Thus
      $S'=\{[1,1]\}$.
    \item[] $S'$ is updated to $\{[1,1]\}$ and $G'=\{y,x\}$.
    \item[] For the relation $g_1'=y$, $y\nmid 1$ thus $g_1'=y$.
    \item[] For the relation $g_2'=x$, $x\nmid 1$ thus $g_2'=x$.
    \item[] We update $G:=G'=\{y,x\}$ and $S:=S'=\{[1,1]\}$.
    \end{enumerate}
  \item[] For the monomial $y$
    \begin{enumerate}
    \item[] The relation $g_1=y$ succeeds since $[\bin_{0,1}]=0$.
    \item[] Nothing must be done for the relation $g_2=x$.
    \item[] $S'$ is set to $\{[1,1]\}$ and $G'=\{y,x\}$.
    \item[] We set $g_1'=y$ and $g_2'=x$.
    \item[] We update $G:=G'=\{y,x\}$ and $S:=S'=\{[1,1]\}$.
    \end{enumerate}
  \item[] For the monomial $x$
    \begin{enumerate}
    \item[] Nothing must be done for the relation $g_1=y$.
    \item[] The relation $g_2=x$ fails since $[\bin_{1,0}]=1$. Thus
      $S'=\{[1,1],[x,1]\}$.
    \item[] $S'$ is set to $\{[1,1]\}$ and $G'=\{y,x\}$.
    \item[] We set $g_1'=y$.
    \item[] For the relation $g_2'=x$, $x| x$ and
      $\frac{x}{x}|\fail(1)$, hence $g_2'=x-1$.
    \item[] We update $G:=G'=\{y,x-1\}$ and $S:=S'=\{[1,1]\}$.
    \end{enumerate}
  \item[] For the monomial $y^2$
    \begin{enumerate}
    \item[] The relation $g_1=y$ succeeds since $[\bin_{0,2}]=0$.
    \item[] Nothing must be done for the relation $g_2=x-1$.
    \item[] $S'$ is set to $\{[1,1]\}$ and $G'=\{y,x\}$.
    \item[] We set $g_1'=y$ and $g_2'=x-1$.
    \item[] We update $G:=G'=\{y,x-1\}$ and $S:=S'=\{[1,1]\}$.
    \end{enumerate}
  \item[] For the monomial $x\,y$
    \begin{enumerate}
    \item[] The relation $g_1=y$ fails since $[\bin_{1,1}]=1$. Thus
      $S'=\{[1,1],[y,x]\}$.
    \item[] The relation $g_2=x-1$ fails since $[\bin_{1,1}-\bin_{0,1}]=1$. Thus
      $S'=\{[1,1],[y,x],[x-1,y]\}$.
    \item[] $S'$ is set to $\{[y,x],[x-1,y]\}$ and $G'=\{y^2,x\,y,x^2\}$.
    \item[] For the relation $g_1'=y^2$, $y^2\nmid x\,y$ thus $g_1'=y^2$.
    \item[] For the relation $g_2'=x\,y$, $x\,y| x\,y$
      and $\frac{x\,y}{x\,y}|\fail(y)$, hence $g_2'=x\,y-1$.
    \item[] For the relation $g_3'=x^2$, $x^2\nmid x\,y$ thus $g_3'=x^2-x$.
    \item[] We update $G:=G'=\{y^2,x\,y-1,x^2-x\}$ and
      $S:=S'=\{[y,x],[x-1,y]\}$.
    \end{enumerate}
  \item[] For the monomial $x^2$
    \begin{enumerate}
    \item[] Nothing must be done for the relation $g_1=y^2$.
    \item[] Nothing must be done for the relation $g_2=x\,y-1$.
    \item[] The relation $g_3=x^2-x$ succeeds since
      $[\bin_{2,0}-\bin_{1,0}]=0$.
    \item[] $S'$ is set to $\{[y,x],[x-1,y]\}$ and $G'=\{y^2,x\,y,x^2\}$.
    \item[] We set $g_1'=y^2$, $g_2'=x\,y-1$ and $g_3'=x^2-x$.
    \item[] We update $G:=G'=\{y^2,x\,y-1,x^2-x\}$ and
      $S:=S'=\{[y,x],[x-1,y]\}$.
    \end{enumerate}
  \item[] For the monomial $y^3$
    \begin{enumerate}
    \item[] The relation $g_1=y^2$ succeeds since $[\bin_{0,3}]=0$.
    \item[] Nothing must be done for the relation $g_2=x\,y-1$.
    \item[] Nothing must be done for the relation $g_3=x^2-x$.
    \item[] $S'$ is set to $\{[y,x],[x-1,y]\}$ and $G'=\{y^2,x\,y,x^2\}$.
    \item[] We set $g_1'=y^2$, $g_2'=x\,y-1$ and $g_3=x^2-x$.
    \item[] We update $G:=G'=\{y^2,x\,y-1,x^2-x\}$ and
      $S:=S'=\{[y,x],[x-1,y]\}$.
    \end{enumerate}
  \item[] For the monomial $x\,y^2$
    \begin{enumerate}
    \item[] The relation $g_1=y^2$ succeeds since $[\bin_{1,2}]=0$.
    \item[] The relation $g_2=x\,y-1$ succeeds since $[\bin_{1,2}-\bin_{0,1}]=0$.
    \item[] Nothing must be done for the relation $g_3=x^2-x$.
    \item[] $S'$ is set to $\{[y,x],[x-1,y]\}$ and $G'=\{y^2,x\,y,x^2\}$.
    \item[] We set $g_1'=y^2$, $g_2'=x\,y-1$ and $g_3=x^2-x$.
    \item[] We update $G:=G'=\{y^2,x\,y-1,x^2-x\}$ and
      $S:=S'=\{[x,y],[y,x-1]\}$.
    \end{enumerate}
  \item[] For the monomial $x^2\,y$
    \begin{enumerate}
    \item[] Nothing must be done for the relation $g_1=y^2$.
    \item[] The relation $g_2=x\,y-1$ fails since
      $[\bin_{2,1}-\bin_{1,0}]=1$. Thus
      $S'=\{[y,x],[x-1,y],[x\,y-1,x]\}$.
    \item[] The relation $g_3=x^2-x$ fails since
      $[\bin_{2,1}-\bin_{1,1}]=1$. Thus
      $S'=\{[y,x],[x-1,y],[x\,y-1,x],[x^2-x,y]\}$.
    \item[] $S'$ is set to $\{[y,x],[x-1,y]\}$ and $G'=\{y^2,x\,y,x^2\}$.
    \item[] We set $g_1'=y^2$.
    \item[] For the relation $g_2'=x\,y$, $x\,y| x^2\,y$ and
      $\frac{x^2\,y}{x\,y}|\fail(y)$, hence $g_3'=x\,y-y-1$.
    \item[] For the relation $g_3'=x^2$, $x^2| x^2\,y$ and
      $\frac{x^2\,y}{x^2}|\fail(x-1)$, hence $g_3'=x^2-2\,x+1$.
    \item[] We update $G:=G'=\{y^2,x\,y-y-1,x^2-2\,x+1\}$ and
      $S:=S'=\{[y,x],[x-1,y]\}$.
    \end{enumerate}
  \item[] For the monomial $x^3$
    \begin{enumerate}
    \item[] Nothing must be done for the relation $g_1=y^2$.
    \item[] Nothing must be done for the relation $g_2=x\,y-y-1$.
    \item[] The relation $g_3=x^2-2\,x+1$ succeeds since
      $[\bin_{3,0}-2\,\bin_{2,0}+\bin_{1,0}]=0$.
    \item[] $S'$ is set to $\{[y,x],[x-1,y]\}$ and $G'=\{y^2,x\,y,x^2\}$.
    \item[] We set $g_1'=y^2$, $g_2'=x\,y-y-1$ and $g_3=x^2-2\,x+1$.
    \item[] We update $G:=G'=\{y^2,x\,y-y-1,x^2-2\,x+1\}$ and
      $S:=S'=\{[y,x],[x-1,y]\}$.
    \end{enumerate}
  \item[] The algorithm returns relations $y^2,x\,y-y-1,x^2-2\,x+1$,
    all three with a shift $x$.
  \end{enumerate}
\end{example}

\subsection{A Linear Algebra interpretation of the \BMS algorithm}
In order to make the presentation of the \BMS algorithm closer to that
of the \sFGLM algorithm, we propose to replace every evaluation using
the $[\,]$ operator with a matrix-vector product.

As stated above, given a monic relation $f=\LM(f)+\sum_{s\in S}\alpha_s\,s$,
testing the shift of this relation by a monomial $m$ is done with the
bracket operator, \ie testing whether $[m\,f]=0$ or not. Denoting
$\vec{f}$, the vector
\[
\vec{f}=\kbordermatrix{
  &1\\
  \vdots &\vdots\\
  s\in S &\alpha_s\\
  \vdots &\vdots\\
  \LM(f) &1
},
\] this can also
be done through testing if the following matrix-vector product
\[H_{m,S\cup\{\LM(f)\}}\,\vec{f}=
\kbordermatrix{
  &\cdots &s\in S&\cdots &\LM(f)\\
  m &\cdots &[m\,s] &\cdots &[m\,\LM(f)]
}\,
\begin{pmatrix}
  \vdots\\\alpha_s\\\vdots\\1
\end{pmatrix}
=0
\]
or not. In this setting, the definitions of the \emph{shift} and the
\emph{fail} of a relation, \ie Definition~\ref{def:shift}, become as follows.
\begin{definition}\label{def:fail_linal}
  Let $f=\LT(f)+\sum_{s\in S}\alpha_s\,s$ be a polynomial.

  The monomial $m$ is a \emph{shift of $f$} if
  \[H_{\{1,\ldots,m\},S\cup\{\LM(f)\}}\,\vec{f}=
  \kbordermatrix{
    &\cdots &s\in S&\cdots &\LM(f)\\
    1 &\cdots &[s] &\cdots &[\LM(f)]\\
    \vdots &&\vdots &&\vdots\\
    m &\cdots &[m\,s] &\cdots &[m\,\LM(f)]\\
  }\,
  \begin{pmatrix}
    \vdots\\\alpha_s\\\vdots\\1
  \end{pmatrix}
  =
  \begin{pmatrix}
    0\\\vdots\\0
  \end{pmatrix}.
  \]

  Let $m^+$ be the successor of $m$, $m^+\,\LM(f)$ is the \emph{fail of $f$} if
  \[H_{\{1,\ldots,m,m^+\},S\cup\{\LM(f)\}}\,\vec{f}=
  \kbordermatrix{
    &\cdots &s\in S&\cdots &\LM(f)\\
    1 &\cdots &[s] &\cdots &[\LM(f)]\\
    \vdots &&\vdots &&\vdots\\
    m &\cdots &[m\,s] &\cdots &[m\,\LM(f)]\\
    m^+ &\cdots &[m^+\,s] &\cdots &[m^+\,\LM(f)]
  }\,
  \begin{pmatrix}
    \vdots\\\alpha_s\\\vdots\\1
  \end{pmatrix}
  =
  \begin{pmatrix}
    0\\\vdots\\0\\e
  \end{pmatrix},
  \]
  with $e\neq 0$.
\end{definition}
We can also write another proof of Proposition~\ref{prop:augm_shift}
with a matrix viewpoint.
\begin{proof}[Proof of Proposition~\ref{prop:augm_shift}]
  Let $f_1=\LM(f_1)+\sum_{s\in S}\alpha_s\,s$ and
  $f_2=\LM(f_2)+\sum_{s\in S'}\beta_s\,s$ be monic. Let $v^-$ be the predecessor
  of $v$. Let $\tilde{S}=S\cup S'\setminus\{\LM(f_2),\LM(f_1)\}$,
  assuming $\LM(f_2)\neq\LM(f_1)$, then
  we have
  \begin{align*}
    H_{\{1,\ldots,v^-,v\},\tilde{S}\cup\{\LM(f_2),\LM(f_1)\}}\,(\vec{f_1}+c\,\vec{f_2})
    &=
      \begin{pmatrix}
        0\\\vdots\\0\\e_1+c\,e_2
      \end{pmatrix}\\
    \kbordermatrix{
    &\cdots &s\in \tilde{S}&\cdots &\LM(f_2) &\LM(f_1)\\
    1 &\cdots &[s] &\cdots &[\LM(f_2)] &[\LM(f_1)]\\
    \vdots &&\vdots &&\vdots&\vdots\\
    v^- &\cdots &[v^-\,s] &\cdots &[v^-\,\LM(f_2)] &[v^-\,\LM(f_1)]\\
    v &\cdots &[v\,s] &\cdots &[v\,\LM(f_2)] &[v\,\LM(f_1)]\\
    }\,
    \begin{pmatrix}
      \vdots\\\alpha_s+c\,\beta_s\\\vdots\\c\\1
    \end{pmatrix}
    &=
      \begin{pmatrix}
        0\\\vdots\\0\\e_1+c\,e_2
      \end{pmatrix}.
  \end{align*}
  It is now clear that vector $\vec{f}_1-\frac{e_1}{e_2}\,\vec{f}_2$
  is in the kernel of
  this matrix. That is, polynomial $f_1-\frac{e_1}{e_2}\,f_2$ has a
  shift $v$.
\end{proof}

Changing every evaluation into a matrix-vector product in the \BMS
algorithm yields 
the following presentation of the \BMS algorithm, namely
Algorithm~\ref{algo:bms_linalg}.

\begin{algorithm2e}[htbp!]\label{algo:bms_linalg}
  \small
  \DontPrintSemicolon
  \TitleOfAlgo{Linear Algebra variant of the \BMS algorithm.}
  \KwIn{A table $\bu=(u_{\bi})_{\bi\in\N^n}$ with coefficients in
    $\K$, a monomial ordering $\prec$ and a monomial $M$
    as the stopping condition.}
  \KwOut{A set $G$ of relations generating $I_M$.}
  $T := \{m\in\K[\x], m\preceq M\}$.\tcp*{ordered for $\prec$}
  $G := \{1\}$.\tcp*{the future \gb}
  $S := \emptyset$.\tcp*{staircase edge, elements will be
    $[h,\fail(h)/\LM(h)]$} 
  
  \Forall{$m\in T$}{
    $S' := S$.\;
    \For{$g\in G$}{
      \If{$\LM(g)| m$}{
        $e:=H_{\acc{\frac{m}{\LM(g)}},\supp(g)}\,\vec{g}$.\;
        \If{$e\neq 0$}{
          $S':=S'\cup
          \acc{\cro{\frac{g}{e},\frac{m}{\LM(g)}}}$.\;
        }
      }
    }
    $S':=
    \min_{\fail(h)\in S'}\acc{[h,\fail(h)/\LM(h)]}$.
    \tcp*{see Remark~\ref{rk:choose_Sm}}
    $G':= \Border (S')$.\;
    
    \For{$g'\in G'$}{
      Let $g\in G$ such that $\LM(g)|\LM(g')$.\; 
      \uIf{$\LM(g) \nmid m$}{
        $g':=\frac{\LM(g')}{\LM(g)}\,g$.\tcp*{shifts the relation}
      }
      \uElseIf{$\exists\,h\in S,
        \frac{m}{\LM(g')} |\fail(h)$}{
        $g':= \frac{\LM(g')}{\LM(g)}\,g
        -\pare{H_{\acc{\frac{m}{\LM(h)}},\supp(h)}\,\vec{h}}\,
        \frac{\LM(g')\,\fail(h)}{m}\,h$.\tcp*{see Prop.~\ref{prop:augm_shift}}
      }
      \lElse{
        $g':=g$.
      }
    }
    $G := G'$ \;
    $S := S'$ \;
  }  
  \KwRet $G$.
\end{algorithm2e}


\section{The \sFGLM algorithm}\label{s:Scalar-FGLM}
This section is devoted to the description of the \sFGLM algorithm
introduced in~\cite{issac2015,berthomieu:hal-01253934}.

The \sFGLM algorithm aims at computing linear recurrence relations of
a multidimensional sequence  with a matrix viewpoint and
an approach close to the \FGLM algorithm, see~\cite{FGLM93}.

The main idea is to shift the linear recurrence relations in order to
determine their coefficients.
As we can only know a finite number of the sequence terms, we need the
following definition.
\begin{definition}\label{def:shift2}
  Let $f\in\K[\x]$ and $T$ be a set of monomials in $\x$, we say that
  $f$ has a shift $T$ if
  \begin{equation}
    \label{eq:shift2}
    \forall m\in T,\ [m\,f]=0.
  \end{equation} 
\end{definition}

\begin{remark}
  We would like to emphasize that this definition is close to
  Definition~\ref{def:shift} of the shift for the \BMS algorithm.

  Whenever $T$ is a set of monomials $T_M=\{m,\ m\preceq M\}$,
  $f$ has a shift $T_M$ if, and only if,
  $f$ has a shift $M$, \ie $\fail(f)\succ \LM (f)\,M$.

  Unless stated otherwise, we will now always assume that the set $T$ is
  stable by division.
\end{remark}

From the relations
$\cro{\x^{\bi+\bd}+\sum_{\bk\in\cK}\alpha_{\bk}\,\x^{\bi+\bk}}=0$, valid for
all $\x^{\bi}\in T$, one can deduce that the polynomial
$P=\x^{\bd}+\sum_{\bk\in\cK}\alpha_{\bk}\,\x^{\bk}$ satisfies
$[m\,P]=0$ for all $m\in T$. In other words, $P$ has a shift $T$.

To determine $P$ with a shift $T_M$, it suffices to solve the linear system
\[
\begin{cases}
  \cro{\x^{\bd}+\sum_{\bk\in\cK}\alpha_{\bk}\,\x^{\bk}}&=0\\
  \quad\vdots &\phantom{=}  \ \vdots\\
  \cro{m\,x^{\bd}+\sum_{\bk\in\cK}\alpha_{\bk}\,m\,\x^{\bk}}&=0\\
  \quad\vdots &\phantom{=}  \ \vdots\\
  \cro{M\,x^{\bd}+\sum_{\bk\in\cK}\alpha_{\bk}\,M\,\x^{\bk}}&=0.
\end{cases}
\]

Before determining the coefficients of the relations, one needs to
determine their support.

\begin{definition}\label{def:useful_staircase}
  Let $T$ be a finite subset of terms. We say that a finite set
  $S\subset T$ is a \emph{useful staircase} with respect to $\bu$, $T$
  and $\prec$ if
  \[\sum_{t\in S}\beta_t\,[m\,t]=0, \ \ \forall\,m\in S\]
  implies that $\beta_t=0$ for all $t\in S$, $S$ is maximal for
  the inclusion and minimal for $\prec$. We compare two ordered sets
  for $\prec$ by seeing them as tuples of their elements and then
  comparing them lexicographically.
\end{definition}

We recall that for two sets of terms $U$ and $T$, the multi-Hankel
matrix associated with $U$ and $T$ is
\[H_{U,T} =
\kbordermatrix{
  &\cdots	&m\in T		&\cdots \\
  \vdots	&\ddots &\vdots	&\adots	\\
  m'\in U	&\cdots	&[m\,m']&\cdots	\\
  \vdots	&\adots	&\vdots	&\ddots\\
}.\]
Whenever $U=\{1,x,\ldots,x^{k-1}\}$ and $T=\{1,x,\ldots,x^{\ell-1}\}$ then
$H_{U,T}$ is a classical Hankel matrix of size $k\times\ell$.

Definition~\ref{def:useful_staircase}
can be rewritten in term of a matrix rank.
\begin{definition}\label{def:useful_staircase_matrix}
  Let $T$ be a finite subset of terms.

  We say that a finite set
  $S\subset T$ is a \emph{useful staircase} with respect to $\bu$, $T$
  and $\prec$ if the matrix $H_{S,S}$
  has full rank equal to $\#\,S$ and to $\rank H_{T,T}$, $S$ is minimal for
  the inclusion and for $\prec$.

  We compare two ordered sets
  for $\prec$ by seeing them as tuples of their elements and then
  comparing them lexicographically.

  In other words, $S$ is the column rank profile of matrix $H_{T,T}$.
\end{definition}

As noted by the authors, it is important to notice that useful
staircases need not be \gbs staircases as proven by the following
example. Though, if the set of terms $T$ contains the true staircase
of the ideal of relations of $I$ with respect to $\prec$, then the
useful staircase will be this staircase, as expected.
\begin{example}\label{ex:useful_not_staircase}
  We consider the bivariate sequence
  $\bu=(\mathds{1}_{i=j=1})_{(i,j)\in\N^2}=\pare{
    \begin{smallmatrix}
      0 &0 &0 &0 &\cdots\\
      0 &1 &0 &0 &\cdots\\
      0 &0 &0 &0 &\cdots\\
      0 &0 &0 &0 &\cdots\\
      \vdots &\vdots &\vdots &\vdots &\ddots
    \end{smallmatrix}
  }$ whose ideal of relations is $\langle y^2,x^2\rangle$. The
  useful staircase with respect to $\bu$, $T=\{1,y,x,y^2\}$ and
  $\DRL(y\prec x)$ is $S=\{y,x\}$, as the columns labeled with
  $1$ and $y^2$ of the matrix
  \[H_{T,T}=
  \kbordermatrix{
    	&1	&y	&x	&y^2\\
    1	&0	&0	&0	&0\\
    y	&0	&0	&1	&0\\
    x	&0	&1	&0	&0\\
    y^2	&0	&0	&0	&0
  }\]
  are zero. However, for a bigger set $T'=\{1,y,x,y^2,x\,y,x^2\}$, the
  useful staircase of the matrix
  \[H_{T',T'}=
  \kbordermatrix{
    	&1	&y	&x	&y^2	&x\,y	&x^2\\
    1	&0	&0	&0	&0	&1	&0\\
    y	&0	&0	&1	&0	&0	&0\\
    x	&0	&1	&0	&0	&0	&0\\
    y^2	&0	&0	&0	&0	&0	&0\\
    x\,y&1	&0	&0	&0	&0	&0\\
    x^2	&0	&0	&0	&0	&0	&0
  }\]
  is the true staircase $\{1,y,x,x\,y\}$, which is stable by division.
\end{example}

\begin{proposition}
  If $S$ is the useful staircase with respect to the finite subset $T$ and
  $\prec$,
  then for all $m\in T\setminus S$, there exists a relation with
  support in $S\cup\{m\}$, but not in $S$, with a shift $T$.

  In particular, we can always pick $m$ in the border of $S$.
\end{proposition}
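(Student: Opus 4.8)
The plan is to work directly with the matrix characterization from Definition~\ref{def:useful_staircase_matrix}, which says that $S$ is the column rank profile of $H_{T,T}$, equivalently that $H_{S,S}$ has full rank $\#S = \rank H_{T,T}$. I would fix $m \in T \setminus S$ and seek a relation $f$ with $\supp(f) \subseteq S \cup \{m\}$ whose coefficient on $m$ is nonzero, and which has shift $T$ in the sense of Definition~\ref{def:shift2}, i.e. $[t\,f] = 0$ for all $t \in T$. Writing $f = \x^{\bd} + \sum_{s \in S} \alpha_s\, s$ where $\x^{\bd} = m$, the condition ``$f$ has shift $T$'' is exactly the linear system $H_{T, S \cup \{m\}}\, \vec{f} = 0$. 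So the whole claim reduces to showing this homogeneous system, with the coefficient of $m$ pinned to $1$, is solvable.

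\textbf{Key steps.} First I would observe that, since $S$ is the column rank profile of $H_{T,T}$ and $m \in T \setminus S$, the column of $H_{T,T}$ indexed by $m$ lies in the span of the columns indexed by $S$ (this is the defining property of the rank profile: columns outside $S$ are linearly dependent on earlier pivot columns, all of which sit in $S$ by minimality and stability). Hence there exist scalars $(\gamma_s)_{s \in S}$ with $H_{T,\{m\}} = \sum_{s \in S} \gamma_s\, H_{T,\{s\}}$ as vectors indexed by the rows $t \in T$. Setting $\alpha_s = -\gamma_s$ gives a vector $\vec{f}$ in the kernel of $H_{T, S \cup \{m\}}$ with coefficient $1$ on $m$, i.e. $[t\,f] = 0$ for every $t \in T$. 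This $f$ has support in $S \cup \{m\}$ but not in $S$ (its leading behaviour on $m$ is nonzero) and shift $T$, which is the first assertion.

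\textbf{The border refinement.} For the ``in particular'' clause I would argue that among all $m \in T \setminus S$ one may always choose one on the border of $S$, i.e. a minimal element of $T \setminus S$ for divisibility. Because $S$ is a staircase-like set that is minimal for $\prec$ and the set $T$ is stable by division, the complement $T \setminus S$ has minimal elements under $|$, and any such minimal $m$ has all its proper divisors in $S$; such $m$ lies on the border $\Border(S)$. Applying the previous construction to this particular $m$ yields the desired relation with support in $S \cup \{m\}$ and $m \in \Border(S)$.

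\textbf{Main obstacle.} The routine part is the kernel computation; the subtle point I expect to need care is justifying that the column of $m$ really depends \emph{only} on columns indexed by $S$, rather than on other columns of $T \setminus S$ preceding $m$. This is where I would lean on the precise meaning of ``column rank profile'' together with the minimality of $S$ for $\prec$: by definition the pivot columns in a rank-profile computation are exactly the columns in $S$, so every non-pivot column is expressible using only pivot (hence $S$-indexed) columns. Making this dependence explicit, and confirming that it matches the shift-$T$ condition across \emph{all} rows $t \in T$ (not just $t \in S$), is the step that carries the real content.
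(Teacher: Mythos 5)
Your proof is correct and takes essentially the same route as the paper's: both arguments come down to the rank identity $\rank H_{T,S\cup\{m\}}=\rank H_{T,S}=\#\,S$ (equivalently, that the column of $H_{T,T}$ labeled $m$ lies in the span of the columns labeled by $S$, since $S$ is the column rank profile and $\rank H_{T,T}=\#\,S$), which yields a kernel vector of $H_{T,S\cup\{m\}}$ with coefficient $1$ on $m$, that is, a relation supported in $S\cup\{m\}$ but not in $S$ that is annihilated by every row $t\in T$ and hence has shift $T$. Your explicit resolution of the dependence-only-on-$S$-columns worry and your divisibility argument for the border clause (every proper divisor of a $|$-minimal element of $T\setminus S$ lies in $S$ because $T$ is stable by division) simply spell out steps the paper leaves implicit.
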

\begin{proof}
  If $m\in T\setminus S$, then $S\cup\{m\}$ is bigger than $S$.
  As the rank of $H_{T,S\cup\{m\}}$ cannot be
  $\#\,S\cup\{m\}=\#\,S+1=\rank H_{T,S}+1$, then it must be
  $\#\,S$. Therefore, the
  last column
  of $H_{T,S\cup\{m\}}$, labeled with $m$, is a linear combination of the
  previous ones, \ie there is a relation with support in
  $S\cup\{m\}$ but not in $S$.
\end{proof}
Finding this relation is straightforward, as it suffices to solve the
nondegenerate linear system $H_{T,S}\,\balpha+H_{T,\{m\}}=0$ which is
equivalent to solving $H_{S,S}\,\balpha+H_{S,\{m\}}=0$.

It is worth noticing that nothing can be concluded on the existence of
a relation with support in $S\cup\{m\}$ with a shift $T$, whenever
$m\not\in T$, though.

In the \sFGLM algorithm presented
in~\cite{issac2015,berthomieu:hal-01253934},
a relation was
returned for every $m$ in the border of $S$, whether $m$ was in $T$ or
not by solving the linear system
$H_{S,S}\,\balpha+H_{S,\cup\{m\}}=0$. This would mean
that some relations could be returned without even being tested with a shift
$T$, see also Example~\ref{ex:0dim}.
Therefore, it seems preferable to only return relations
with support in $T$, to ensure the shift $T$.

This yields the Algorithm~\ref{algo:sfglm}
that differs thus a little bit
from the one in the aforementioned articles.

\begin{algorithm2e}[htbp!]\label{algo:sfglm}
  \small
  \DontPrintSemicolon
  \TitleOfAlgo{The \sFGLM algorithm.}
  \KwIn{A table $\bu=(u_{\bi})_{\bi\in\N^n}$ with coefficients in
    $\K$, $\prec$ a monomial ordering and $T$ a set of terms in
    $\x$ stable by division.}
  \KwOut{A reduced truncated \gb with respect to $\prec$ of the ideal
    of relations of $\bu$ with staircase included in $T$.}
  Build the matrix $H_{T,T}$.\;
  Compute the useful staircase (column rank profile) $S$ of $H_{T,T}$
  such that $\rank H_{T,T}=\rank H_{S,S}$.\;
  $S':=\Stabilize(S)$.\tcp*{the staircase (stable under division)}
  $L:=T\backslash S'$.\tcp*{the set of next terms to study}
  $G:=\emptyset$.\tcp*{the future \gb}
  \While{$L\neq\emptyset$}{
    $t:=\min_{\prec}(L)$.\;
    Find $\balpha$ such that $H_{S,S}\,\balpha + H_{S,\{t\}}=0$.\;
    $G:=G\cup\left\{t+\sum_{s\in S}\alpha_s\,s\right\}$.\;
    Remove multiples of $t$ in $L$ and sort $L$ by increasing order
    (with respect to $\prec$).
  }
  \KwRet $G$.
\end{algorithm2e}

\begin{example}\label{ex:sfglm}
  We give the trace of the algorithm called on two sequences: the
  sequence $\bu=(2^i\,3^j\,(i+1))_{(i,j)\in\N^2}$ and the the binomial sequence
  $\bin$ with the $\DRL(y\prec x)$ ordering, and on the set
  $T=\{1,y,x,y^2,x\,y,x^2\}$ of
  all the monomials of degree at most $d=2$.

  \begin{enumerate}
  \item \begin{enumerate}
    \item[] We build the matrix
      \[H_{T,T}= \kbordermatrix{
        	&1	&y	&x	&y^2	&x\,y	&x^2\\
        1	&1	&3	&4	&9	&12	&12\\
        y	&3	&9	&12	&27	&36	&36\\
        x	&4	&12	&12	&36	&36	&32\\
        y^2	&9	&27	&36	&81	&108	&108\\
        x\,y	&12	&36	&36	&108	&108	&96\\
        x^2	&12	&36	&32	&108	&96	&80 }.
      \]
    \item[] The useful staircase of this matrix is
      $S=\{1,x\}$.
    \item[] It is stable by division so $S'=S$.
    \item[] We set
      $L=\{1,y,x,y^2,x\,y,x^2\}\setminus\{1,x\}
      =\{y,y^2,x\,y,x^2,y^3,x\,y^2,x^2\,y,x^3\}$ and $G=\emptyset$.
    \item[] We take $t=y$ and solve
      $H_{S,S}\,\balpha+H_{S,\{y\}}=0$ which yields relation
      $y-3$, so $G=\{y-3\}$ and $L$ is updated to
      $\{x^2,x^3\}$.
    \item[] We take $t=x^2$ and solve
      $H_{S,S}\,\balpha+H_{S,\{x^2\}}=0$ which yields relation
      $x^2-4\,x+4$, so $G=\{y-3,x^2-4\,x+4\}$ and $L$ is updated to
      $\emptyset$.
    \item[] We return $G=\{y-3,x^2-4\,x+4\}$.

    Furthermore, the relations $g\in G$ satisfy $[m\,g]=0$, for all
    $m\in T=\{1,y,x,y^2,x\,y,x^2\}$, \ie have a shift $T$.
    \end{enumerate}
  \item \begin{enumerate}
    \item[] We build the matrix
      \[H_{T,T}= \kbordermatrix{
        &1	&y	&x	&y^2	&x\,y	&x^2\\
        1	&1	&0	&1	&0	&1	&1\\
        y	&0	&0	&1	&0	&0	&2\\
        x	&1	&1	&1	&0	&2	&1\\
        y^2	&0	&0	&0	&0	&0	&1\\
        x\,y	&1	&0	&2	&0	&1	&3\\
        x^2	&1	&2	&1	&1	&3	&1 }.
      \]
    \item[] The useful staircase of this matrix is
      $S=\{1,y,x,y^2,x^2\}$.
    \item[] It is stable by division so $S'=S$.
    \item[] We set
      $L=\{1,y,x,y^2,x\,y,x^2\}\setminus\{1,y,x,y^2,x^2\}
      =\{x\,y,x\,y^2,x^2\,y\}$ and $G=\emptyset$.
    \item[] We take $t=x\,y$ and solve
      $H_{S,S}\,\balpha+H_{S,\{x\,y\}}=0$ which yields relation
      $x\,y-y-1$, so $G=\{x\,y-y-1\}$ and $L$ is updated to
      $\emptyset$.
    \item[] We return $G=\{x\,y-y-1\}$.

    Furthermore, this relation $g\in G$ satisfies $[m\,g]=0$, for all
    $m\in T=\{1,y,x,y^2,x\,y,x^2\}$, \ie has a shift $T$.
    \end{enumerate}
  \end{enumerate}
\end{example}


\section{Another linear algebra solver inspired by the \BMS
  algorithm}\label{s:New Solver}
In this section, we design an algorithm for computing the ideal of
relations of a sequence that is close to both the \BMS algorithm and
to the \sFGLM algorithm. The main idea will be to increase the number
of rows and columns of several multi-Hankel matrices and to check whether
the ranks of these matrices are increasing.

\begin{proposition}
  Let $S$ be a staircase and $g$ be a relation on sequence $\bu$ such
  that $\LM(g)$ lies on the border of $S$ and
  $\supp(g)\subseteq S\cup\{\LM(g)\}$.
  Assume furthermore that $g$ has a shift $m$, that is
  $[g\,\mu]=0$ for all $\mu\preceq m$.

  Let $m^+$ be the successor of $m$. If $[m^+\,g]\neq
  0$, then the linear system
  \[
  \begin{cases}
    \sum_{s\in S}\alpha_s\,[s]+[\LM(g)]&=0\\
    &\vdots\\
    \sum_{s\in S}\alpha_s\,[m\,s]+[m\,\LM(g)]&=0\\
    \sum_{s\in S}\alpha_s\,[m^+\,s]+[m^+\,\LM(g)]&=0\\
  \end{cases}
  \]
  has no solution and there is no nonzero valid relation with support
  in $\Stabilize\pare{S\cup\{m^+\}}$.
\end{proposition}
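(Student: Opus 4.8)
The plan is to reduce both assertions to a single structural property of the useful staircase. First I would observe that solving the displayed system is the same as producing a polynomial $f=\LM(g)+\sum_{s\in S}\alpha_s\,s$, monic in $\LM(g)$ and supported in $S\cup\acc{\LM(g)}$, with $\cro{\mu\,f}=0$ for every $\mu\preceq m^+$; that is, a relation with the same leading monomial and support as $g$ but with the strictly larger shift $m^+$. The one fact I would lean on is the property behind the \sFGLM solver: since $S$ is the useful staircase, $H_{S,S}$ is nonsingular, hence $H_{\acc{\mu\preceq m},S}$ has full column rank $\#\,S$; equivalently \emph{no nonzero polynomial supported in $S$ is valid up to $m$}. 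I would also use the symmetry $\cro{\mu\,\nu}=u_{\mu+\nu}=\cro{\nu\,\mu}$ of the multi-Hankel matrix, and that $\max_{\prec}(S)\preceq m$ and $\LM(g)\preceq m^+$, which hold because $g$ has been tested up to $m$.

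For the first assertion I would argue by contradiction: assume the system has a solution $f$. Since $g$ and $f$ are monic in $\LM(g)$ with remaining support in $S$, the difference $h=g-f$ is supported in $S$. Evaluating term by term, $\cro{\mu\,h}=\cro{\mu\,g}-\cro{\mu\,f}=0$ for every $\mu\preceq m$ (as $g$ has shift $m$ and $f$ has shift $m^+\succ m$), while $\cro{m^+\,h}=\cro{m^+\,g}-\cro{m^+\,f}=\cro{m^+\,g}\neq0$. In particular $h\neq0$, so $h$ is a nonzero polynomial supported in $S$ and valid up to $m$, contradicting the full column rank of $H_{\acc{\mu\preceq m},S}$. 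Hence the system is unsolvable.

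For the second assertion, write $S'=\Stabilize\pare{S\cup\acc{m^+}}$ and suppose a nonzero $p$ is valid up to $m^+$ with $\supp(p)\subseteq S'$. If $\LM(p)\in S$, then since the monomials of $S'\setminus S$ all exceed $\LM(p)$ for $\prec$, the support of $p$ already lies in $S$, so $p$ is a nonzero polynomial supported in $S$ and valid up to $m^+\succ m$, again contradicting the full column rank of $H_{\acc{\mu\preceq m},S}$. Otherwise $\LM(p)=m^+$ and $p=m^+-\sum_{s\in S}\beta_s\,s$; validity up to $m^+$ says that, over the rows $\acc{\mu\preceq m^+}$, the column of $m^+$ equals $\sum_{s\in S}\beta_s$ times the columns of $S$, a dependency that persists after restricting to the rows $S\cup\acc{\LM(g)}$. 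But $\cro{m^+\,g}\neq0$ forbids exactly this: reading $\cro{\nu\,g}=0$ for $\nu\preceq m$ and $\cro{m^+\,g}\neq0$ through the Hankel symmetry, the row combination $\vec g$ of the rows $S\cup\acc{\LM(g)}$ annihilates every column $\nu\preceq m$ but not the column $m^+$, so the column of $m^+$ is not a combination of the columns $\nu\preceq m$ over these rows, and a fortiori not of the columns of $S\subseteq\acc{\nu\preceq m}$. These conclusions contradict each other, so no such $p$ exists; note this proves the statement even for the stronger reading in which "valid" means valid for all shifts.

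The delicate step is this last case: turning the scalar inequality $\cro{m^+\,g}\neq0$ into the linear independence that excludes $p$. The symmetry of the multi-Hankel matrix is precisely what transfers the failure of $g$ on the \emph{row} $m^+$ into independence of the \emph{column} $m^+$, and getting this transfer right is the crux. Two bookkeeping points must then be verified: that the index ranges match, i.e. $S\cup\acc{\LM(g)}\subseteq\acc{\mu\preceq m^+}$ and $S\subseteq\acc{\mu\preceq m}$, which follow from $\max_{\prec}(S)\preceq m$ and $\LM(g)\preceq m^+$; and that $\Stabilize\pare{S\cup\acc{m^+}}=S\cup\acc{m^+}$, which holds because $m^+$ lies on the border of $S$, so stabilization adds no proper divisor of $m^+$ that is already a leading monomial and would itself supply a valid relation.
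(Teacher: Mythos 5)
There is a genuine gap, and it sits exactly at the step you yourself flagged as ``bookkeeping''. Your whole argument funnels through the claim that $H_{\acc{\mu,\ \mu\preceq m},S}$ has full column rank $\#\,S$, which you obtain from the nonsingularity of $H_{S,S}$ together with the inclusion $S\subseteq\acc{\mu,\ \mu\preceq m}$, justified by ``$\max_{\prec}(S)\preceq m$ \ldots\ because $g$ has been tested up to $m$''. That justification is a non sequitur: the hypotheses bound how far $g$ has been shifted, not how large the monomials of $S$ are, and in the situations this proposition is used for the inclusion is typically false. In fact, the second item of the example placed right after this proposition refutes your argument: for the binomial sequence $\bin$ with $\DRL(y\prec x)$, take $S=\{1,y,x\}$, $g=x^2-x$, $m=1$, $m^+=y$. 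Every stated hypothesis holds, and $H_{S,S}$ is nonsingular (its determinant is $-1$), so your chain ``$H_{S,S}$ nonsingular $\Rightarrow H_{\acc{\mu\preceq m},S}$ has full column rank $\Rightarrow$ the system is unsolvable'' would apply; yet $\max_{\prec}(S)=x\succ m$, the matrix $H_{\{1\},S}$ has rank $1<3$, and the displayed system \emph{is} solvable, with $\alpha_x=-2$, $\alpha_1=1$, $\alpha_y$ arbitrary, giving $x^2-2\,x+1$, which indeed has shift $y$. The first implication in your chain is the one that breaks. Your proof of the second assertion rests on the same inclusions ($S\subseteq\acc{\mu\preceq m}$ and $S\cup\{\LM(g)\}\subseteq\acc{\mu\preceq m^+}$), so it collapses for the same reason; moreover, $\Stabilize\pare{S\cup\{m^+\}}=S\cup\{m^+\}$ is unjustified, since $m^+$ is a successor for the term ordering and need not lie on the border of $S$.

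What the counterexample shows is that nonsingularity of $H_{S,S}$ is not the right property to lean on: the conclusion only holds under the stronger hypothesis that the columns indexed by $S$ are linearly independent over the rows $\acc{\mu,\ \mu\preceq m}$, i.e.\ that no nonzero polynomial supported in $S$ is valid up to the shift $m$. (This is precisely what separates the first item of that example, where the system is indeed unsolvable, from the second item above.) Had you assumed this property directly --- it is the invariant that Algorithm~\ref{algo:linalg_solver} actually maintains through the sets $V_g$ --- your difference argument $h=g-f$ for the first assertion would be correct and clean, and the statement should be read with that hypothesis added. Note also that this is not the paper's route: the paper deduces the proposition in one line from Proposition~\ref{prop:iter}, i.e.\ from the characterization of the staircase by the ratios $\fail(f)/\LM(f)$ of failing relations, rather than by any explicit rank argument.
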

\begin{proof}
  This is a consequence of Proposition~\ref{prop:iter}.
\end{proof}

\begin{example}
  \begin{enumerate}
  \item Let us consider the binomial sequence $\bin$ and relations $y$
    and $x-1$. We know that the relation $y$ has a shift
    $y$, \ie $[y]=[y^2]=0$, and we want to check if a
    relation with leading monomial $y$ has a shift $x$. Therefore,
    we need to solve
    \[
    \begin{cases}
      \alpha_1\,[1]+[y]&=0\\
      \alpha_1\,[y]+[y^2]&=0\\
      \alpha_1\,[x]+[x\,y]&=0
    \end{cases}\iff
    \begin{cases}
      \alpha_1&=0\\
      0&=0\\
      \alpha_1+1&=0
    \end{cases}
    \]
    which has no solution. Hence $x$ is in the staircase. Thanks to
    Proposition~\ref{prop:iter}, since relation $[y]$ fails in $x\,y$
    for $[x\,y]=1$, we can also determine that $x$ is in the staircase.

    Likewise, we know that the relation $x-1$ has
    a shift $1$, \ie $[x-1]=0$, and we want to check if a
    relation with leading monomial $x$ has a shift $y$. Therefore,
    we need to solve
    \[
    \begin{cases}
      \alpha_1\,[1]+[x]&=0\\
      \alpha_1\,[y]+[x\,y]&=0
    \end{cases}\iff
    \begin{cases}
      \alpha_1+1&=0\\
      1&=0
    \end{cases}
    \]
    which has no solution. Hence $y$ is in the staircase.
  \item We still consider the binomial sequence $\bin$ but with
    relations $y^2,x\,y-1$ and $x^2-x$. We know that the relation
    $x^2-x$ has a shift $1$, \ie $[x^2-x]=0$, and we want to
    check if a 
    relation with leading monomial $x^2$ has a shift
    $y$. Therefore, we need to
    solve
    \[
    \begin{cases}
      \alpha_1\,[1]+\alpha_y\,[y]+\alpha_x\,[x]+[x^2]&=0\\
      \alpha_1\,[y]+\alpha_y\,[y^2]+\alpha_x\,[x\,y]+[x^2\,y]&=0
    \end{cases}\iff
    \begin{cases}
      \alpha_1+\alpha_x+1&=0\\
      \alpha_x+2&=0\\
    \end{cases}
    \]
    whose solution is $\alpha_x=-2,\alpha_1=1$ and $\alpha_y$ is
    any. Hence, although the relation $x^2-x$ fails at $x^2\,y$ for
    $[x^2\,y-x\,y]=1$, the relation $x^2-2\,x+1$ does not and has a
    shift $y$.
  \end{enumerate}
\end{example}

This yields Algorithm~\ref{algo:linalg_solver}.

\begin{algorithm2e}[htbp!]\label{algo:linalg_solver}
  \small
  \DontPrintSemicolon
  \TitleOfAlgo{Linear Algebra solver.}
  \KwIn{A table $\bu=(u_{\bi})_{\bi\in\N^n}$ with coefficients in
    $\K$, a monomial ordering $\prec$ and a monomial $M$
    as the stopping condition.}
  \KwOut{A set $G$ of relations generating $I_M$.}
  $T := \{m\in\K[\x],\ m\preceq M\}$.\tcp*{ordered for $\prec$}
  $G := \{[1,\emptyset]\}$.\tcp*{the future Gb, elements will be
    $[g,V_g]$}
  $S := \emptyset$.\tcp*{the staircase} 
  
  \Forall{$m\in T$}{
    $S' := S$ \;
    \For{$g\in G$}{
      \If{$\LM(g)| m$}{
        \uIf{$\rank H_{V_g\cup\acc{\frac{m}{\LM(g)}},S}<
          \rank H_{V_g\cup\acc{\frac{m}{\LM(g)}},S\cup\acc{\LM(g)}}$}{
          $S':=S'\cup\acc{\frac{m}{\LM(g)}}$.
        }\Else{
          $V_g:= V_g\cup\acc{\frac{m}{\LM(g)}}$.\;
        }
      }
    }
    $S:= \Stabilize(S')$. \;
    $G:= \Border (S)$.\;
    \For{$g\in G$}{
      $V_g:= \acc{\mu\in\K[\x],\mu\,\LM(g)\preceq m}$\;
    }  
  }
  \For{$g\in G$}{
    Find $\balpha$ such that $H_{V_g,S}\,\balpha + H_{V_g,\{\LM(g)\}}=0$.\;
    $g:=g+\sum_{s\in S}\alpha_s\,s$.\;
  }  
  \KwRet $G$.
\end{algorithm2e}

\begin{example}
  We detail how Algorithm~\ref{algo:linalg_solver} behaves on the
  binomial sequence $\bin$
  up to monomial $x^2$. We start with the empty staircase $S$ and
  the relation $1$, with $V_1=\emptyset$.
  \begin{enumerate}
  \item[] For the monomial $1$, the matrix
    $H_{\{1\},\emptyset}$ has rank $0$
    while the matrix
    $H_{\{1\},\{1\}}=\pare{\begin{smallmatrix}1\end{smallmatrix}}$ has rank
    $1$, hence $S$ is updated to $\{1\}$ and the relations are now
    $y$, with $V_y=\emptyset$, and $x$, with $V_x=\emptyset$.
  \item[] For the monomial $y$,
    \begin{enumerate}
    \item[] both matrices
      $H_{\{1\},\{1\}}=\pare{\begin{smallmatrix}1\end{smallmatrix}}$
      and
      $H_{\{1\},\{1,y\}}=\pare{\begin{smallmatrix}1
          &0\end{smallmatrix}}$
      have rank $1$, hence $V_y$ is updated to $\{1\}$;
    \item[] as $x$ does not divide $y$, nothing is done.
    \end{enumerate}
  \item[] For the monomial $x$,
    \begin{enumerate}
    \item[] as $y$ does not divide $x$, nothing is done;
      
    \item[] both matrices
      $H_{\{1\},\{1\}}=\pare{\begin{smallmatrix}1\end{smallmatrix}}$
      and
      $H_{\{1\},\{1,x\}}=\pare{\begin{smallmatrix}1
          &1\end{smallmatrix}}$
      have rank $1$, hence $V_x$ is updated to $\{1\}$.
    \end{enumerate}
  \item[] For the monomial $y^2$,
    \begin{enumerate}
    \item[] both matrices
      $H_{\{1,y\},\{1\}}=\pare{\begin{smallmatrix}
          1\\0\end{smallmatrix}}$
      and
      $H_{\{1,y\},\{1,y\}}=\pare{\begin{smallmatrix}
          1 &0\\
          0 &0\end{smallmatrix}}$
      have rank $1$, hence $V_y$ is updated to $\{1,y\}$;
    \item[] as $x$ does not divide $y$, nothing is done.
    \end{enumerate}
  \item[] For the monomial $x\,y$,
    \begin{enumerate}
    \item[] the matrix
      $H_{\{1,y,x\},\{1\}}=\pare{\begin{smallmatrix}
          1\\0\\1\end{smallmatrix}}$ has rank $1$ while the matrix
      $H_{\{1,y,x\},\{1,y\}}=\pare{\begin{smallmatrix}
          1 &0\\
          0 &0\\
          1 &1
        \end{smallmatrix}}$
      has rank $2$, hence $S$ is updated to $\{1,y\}$.
    \item[] the matrix
      $H_{\{1,y\},\{1\}}=\pare{\begin{smallmatrix}
          1\\0\end{smallmatrix}}$ has rank $1$ while the matrix
      $H_{\{1,y\},\{1,x\}}=\pare{\begin{smallmatrix}
          1 &0\\
          0 &1
        \end{smallmatrix}}$
      has rank $2$, hence $S$ is updated to $\{1,y,x\}$ and the
      relations are now 
      $y^2$, with $V_{y^2}=\{1\}$, $x\,y$, with $V_{x\,y}=\{1\}$,
      and $x^2$, with $V_{x^2}=\emptyset$.
    \end{enumerate}
  \item[] For the monomial $x^2$,
    \begin{enumerate}
    \item[] as $y^2$ does not divide $x^2$, nothing is done;
    \item[] as $x\,y$ does not divide $x^2$, nothing is done;
    \item[] both matrices
      $H_{\{1\},\{1,y,x\}}=\pare{\begin{smallmatrix}
          1 &0 &1\end{smallmatrix}}$
      and
      $H_{\{1\},\{1,y,x,x^2\}}=\pare{\begin{smallmatrix}
          1 &0 &1 &1\end{smallmatrix}}$
      have rank $1$, hence $V_{x^2}$ is updated to $\{1\}$.
    \end{enumerate}
  \item[] For the monomial $y^3$,
    \begin{enumerate}
    \item[] both matrices
      $H_{\{1,y\},\{1,y,x\}}=\pare{\begin{smallmatrix}
          1 &0 &1\\
          0 &0 &1\end{smallmatrix}}$
      and
      $H_{\{1,y\},\{1,y,x,y^2\}}=\pare{\begin{smallmatrix}
          1 &0 &1 &0\\
          0 &0 &1 &0\end{smallmatrix}}$
      have rank $2$, hence $V_{y^2}$ is updated to $\{1,y\}$.
    \item[] as $x\,y$ does not divide $y^3$, nothing is done;
    \item[] as $x^2$ does not divide $y^3$, nothing is done.
    \end{enumerate}
  \item[] For the monomial $x\,y^2$,
    \begin{enumerate}
    \item[] both matrices
      $H_{\{1,y,x\},\{1,y,x\}}=\pare{\begin{smallmatrix}
          1 &0 &1\\
          0 &0 &1\\
          1 &1 &1\end{smallmatrix}}$
      and
      $H_{\{1,y,x\},\{1,y,x,y^2\}}=\pare{\begin{smallmatrix}
          1 &0 &1 &0\\
          0 &0 &1 &0\\
          1 &1 &1 &0\end{smallmatrix}}$
      have rank $3$, hence $V_{y^2}$ is updated to $\{1,y,x\}$.
    \item[] both matrices
      $H_{\{1,y\},\{1,y,x\}}=\pare{\begin{smallmatrix}
          1 &0 &1\\
          0 &0 &1\end{smallmatrix}}$
      and
      $H_{\{1,y\},\{1,y,x,x\,y\}}=\pare{\begin{smallmatrix}
          1 &0 &1 &1\\
          0 &0 &1 &0\end{smallmatrix}}$
      have rank $2$, hence $V_{x\,y}$ is updated to $\{1,y\}$.
    \item[] as $x^2$ does not divide $x\,y^2$, nothing is done.
    \end{enumerate}
  \item[] For the monomial $x^2\,y$,
    \begin{enumerate}
    \item[] as $y^2$ does not divide $x^2\,y$, nothing is done;
    \item[] both matrices
      $H_{\{1,y,x\},\{1,y,x\}}=\pare{\begin{smallmatrix}
          1 &0 &1\\
          0 &0 &1\\
          1 &1 &1\end{smallmatrix}}$
      and
      $H_{\{1,y,x\},\{1,y,x,x\,y\}}=\pare{\begin{smallmatrix}
          1 &0 &1 &1\\
          0 &0 &1 &0\\
          1 &1 &1 &2\end{smallmatrix}}$
      have rank $2$, hence $V_{x\,y}$ is updated to $\{1,y,x\}$.
    \item[] both matrices
      $H_{\{1,y\},\{1,y,x\}}=\pare{\begin{smallmatrix}
          1 &0 &1\\
          0 &0 &1\end{smallmatrix}}$
      and
      $H_{\{1,y\},\{1,y,x,x^2\}}=\pare{\begin{smallmatrix}
          1 &0 &1 &1\\
          0 &0 &1 &2\end{smallmatrix}}$
      have rank $2$, hence $V_{x^2}$ is updated to $\{1,y\}$.
    \end{enumerate}
  \item[] For the monomial $x^3$,
    \begin{enumerate}
    \item[] as $y^2$ does not divide $x^3$, nothing is done;
    \item[] as $x\,y$ does not divide $x^3$, nothing is done;
    \item[] both matrices
      $H_{\{1,y,x\},\{1,y,x\}}=\pare{\begin{smallmatrix}
          1 &0 &1\\
          0 &0 &1\\
          1 &1 &1\end{smallmatrix}}$
      and
      $H_{\{1,y,x\},\{1,y,x,x^2\}}=\pare{\begin{smallmatrix}
          1 &0 &1 &1\\
          0 &0 &1 &2\\
          1 &1 &1 &1\end{smallmatrix}}$
      have rank $3$, hence $V_{x^2}$ is updated to $\{1,y,x\}$.
    \end{enumerate}
  \end{enumerate}
  Solving the linear systems yields relations $y^2$, with a shift $x$,
  $x\,y-y-1$, with a shift $x$, and $x^2-2\,x+1$, with a shift $x$.
\end{example}


\section{Analogies and differences}\label{s:comparison}
In this section, we present a list of similarities and differences of
behaviors and output for the \BMS and the \sFGLM algorithms. This
should convince the reader that these algorithms are not the same and
that it is not possible to tweak one of them to mimic the behavior
of the other.

\subsection{Closed staircase}
Although both algorithms compute first a set of elements in the
staircase, one of the main differences between the \BMS and the \sFGLM
algorithms
is how they handle the leading terms outside of this staircase.

\begin{theorem}\label{th:closed_staircase}
  Let $\bu$ be a sequence and $\prec$ be a monomial ordering.
  
  Calling the \BMS algorithm  on
  $\bu$, $\prec$ and a stopping monomial $M$ yields a
  truncated \gb of a zero-dimensional ideal.

  Calling the \sFGLM algorithms on $\bu$, $\prec$ and a set of
  terms $T$ stable by division yields a truncated \gb of an
  ideal, with leading monomials in $T$,
  which is not necessarily a zero-dimensional ideal.

  Furthermore, if the \BMS and the
  \sFGLM algorithms compute the ideal of relations of $\bu$, then the
  ideal computed by the \BMS algorithm is included in the ideal
  computed by the \sFGLM algorithm. These ideals are equal if, and
  only if, $\bu$ is linear recurrent.
\end{theorem}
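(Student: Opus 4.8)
The plan is to treat the four assertions in turn, the first two being structural and the last two comparative. First I would establish that the output $G$ of the \BMS algorithm is a reduced truncated \gb whose staircase is finite, so that $\langle G\rangle$ is zero-dimensional. The decisive point is that the last pass sets $G:=\Border(S')$, so $G$ contains a relation whose leading monomial is \emph{every} minimal monomial outside the current staircase $S'=S_M$. Invoking Proposition~\ref{prop:iter} together with the division-stability of Proposition~\ref{prop:Im-closed}, $S_M$ is a finite staircase stable by division whose entire border is covered by leading monomials of $G$; hence $\LM(\langle G\rangle)=\cT\setminus S_M$ and $\K[\x]/\langle G\rangle$ has finite dimension $\#S_M$.

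Second, for the \sFGLM output I would argue from Definition~\ref{def:useful_staircase_matrix} and from the proposition stating that every $t\in T\setminus S$ admits a relation with support in $S\cup\{t\}$ and shift $T$. Algorithm~\ref{algo:sfglm} only ever solves $H_{S,S}\,\balpha+H_{S,\{t\}}=0$ for $t\in T\setminus S'$, so each returned relation has leading monomial $t\in T$ and support in $S'\cup\{t\}\subseteq T$; reducedness holds because every support monomial other than $t$ lies in the common staircase $S'=\Stabilize(S)$, which is stable by division, so no other leading monomial divides it. Since only border monomials lying in $T$ are processed, a border monomial of $S'$ outside $T$ yields no relation, $\LM(\langle G\rangle)$ need not cover the whole border of $S'$, and the staircase of $\langle G\rangle$ may be infinite, \ie $\langle G\rangle$ positive-dimensional, exactly as in Example~\ref{ex:useful_not_staircase}.

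For the inclusion I would read the hypothesis that both algorithms ``compute the ideal of relations of $\bu$'' as: the parameters $M$ and $T$ are chosen so that each returned relation is a genuine relation of $\bu$, hence lies in the true ideal $I$. I would then take a generator $g$ of $\langle G^{\BMS}\rangle$ and reduce it modulo the reduced truncated \gb $G^{\sFGLM}$; the normal form $r$ has support inside the useful staircase $S_{\sFGLM}$, and since $g\in I$ while the multi-Hankel matrix $H_{S_{\sFGLM},S_{\sFGLM}}$ has full rank equal to $\rank H_{T,T}$ by Definition~\ref{def:useful_staircase_matrix}, the equations $[m\,r]=0$ for $m\in S_{\sFGLM}$ force $r=0$, giving $g\in\langle G^{\sFGLM}\rangle$ and thus $\langle G^{\BMS}\rangle\subseteq\langle G^{\sFGLM}\rangle$. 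The delicate step, which I expect to be the main obstacle, is the bookkeeping of shifts: a \BMS relation is certified only up to $M$, with an effective shift depending on its leading monomial (cf.\ Theorem~\ref{th:intro_valid_shift}), whereas reducing modulo $G^{\sFGLM}$ implicitly uses the full shift $T$. One must verify that the support of the normal form $r$ stays within the range on which $\bu$ guarantees $[m\,r]=0$, so that the non-degeneracy argument is legitimate.

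Finally I would prove the equivalence in both directions. If $\bu$ is linear recurrent then $I$ is zero-dimensional with a finite staircase $S_0$ (Definition~\ref{def:lin_rec}); choosing $M\succeq s_{\max}\cdot\max_{\prec}(g_{\max},s_{\max})$ makes the \BMS output a minimal \gb of $I$ by Proposition~\ref{prop:upperbound}, and choosing $T$ to contain $S_0$ and its border makes the useful staircase equal to $S_0$, so the \sFGLM output is also $I$; the two ideals then coincide. Conversely, if the two ideals are equal then, by the first assertion, the common ideal is zero-dimensional, and under the running hypothesis it is the ideal of relations of $\bu$; a zero-dimensional ideal of relations means precisely, by Definition~\ref{def:lin_rec}, that $\bu$ is linear recurrent. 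This closes the equivalence.
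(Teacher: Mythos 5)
Your handling of the first two assertions coincides in substance with the paper's own (very terse) proof: zero-dimensionality of the \BMS output because the line $G':=\Border(S')$ forces every monomial on the border of the finite staircase to occur as a leading monomial, and possible positive-dimensionality of the \sFGLM output because candidate leading monomials are drawn only from the intersection of the border with $T$, so that no pure power of each variable is guaranteed. Two slips are worth correcting. First, the \BMS output is a truncated \emph{minimal} \gb but in general \emph{not reduced} --- this is exactly Theorem~\ref{th:reduced_gb} and Example~\ref{ex:reduced} of this very paper --- so the word ``reduced'' in your opening claim must be deleted; fortunately zero-dimensionality needs only the border coverage, so nothing else breaks. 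Second, the example illustrating a positive-dimensional \sFGLM output is Example~\ref{ex:0dim}, not Example~\ref{ex:useful_not_staircase}, which illustrates the distinct phenomenon of a useful staircase failing to be a \gb staircase.

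The ``furthermore'' part, which the paper's proof does not address at all, is where your argument has a genuine gap. The step ``the normal form $r$ has support inside the useful staircase'' fails precisely in the case that matters: when the ideal generated by the \sFGLM output is positive-dimensional, its staircase is infinite and strictly larger than the useful staircase $S$, so reduction can leave monomials on which the full-rank argument for $H_{S,S}$ says nothing. Concretely, in Example~\ref{ex:0dim} the \BMS output contains $y^3$, whose normal form modulo $\{x\,y-y-1\}$ is $y^3$ itself, and indeed $y^3\notin\langle x\,y-y-1\rangle$; this shows both that your reduction argument breaks and that the inclusion is simply false unless one invokes the hypothesis that every \BMS relation is a genuine relation of $\bu$. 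But once you adopt that reading of the hypothesis --- as you do at the start of your third paragraph --- the whole normal-form machinery is superfluous: each generator of the \BMS output ideal lies in the true ideal $I$ of relations, and the hypothesis on \sFGLM gives that its output generates $I$, so the inclusion is immediate and the ``delicate step'' about shift bookkeeping that you flag evaporates with it. Your final equivalence argument (Proposition~\ref{prop:upperbound} in the direct implication; zero-dimensionality of the \BMS ideal combined with Definition~\ref{def:lin_rec} in the converse) is sound and in fact supplies more detail than the paper records for this part.
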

\begin{proof}
  The proof of the first part comes directly from the line
  $G':=\Border(S')$ in the description of the \BMS algorithm and then
  to the manipulations done to $g'\in G'$.

  The proof of the second part comes from the fact that the potential
  leading terms in the \sFGLM algorithm
  are taken in the intersection of the border of the
  staircase and the input set of terms. Nothing may ensure that this
  set has a pure power of every variable. See also
  Example~\ref{ex:0dim}.
\end{proof}
This is
illustrated in the following examples.
\begin{example}\label{ex:0dim}
  \begin{enumerate}
  \item We let
    $\bu=\pare{i^2+j+\mathds{1}_{3\,i+2\,j > 9}}_{(i,j)\in\N^2}$ be a
    sequence and consider the $\DRL(y\prec x)$ ordering.

    The \BMS algorithm called on $\bu$ and the stopping monomial $y^3$
    returns the ideal of relations $\langle x-y,y^2-2\,y\rangle$.

    The \sFGLM algorithm called on $\bu$ and the set of terms
    $T=\{1,y,x,y^2\}$ returns the ideal of relations
    $\langle y^2-2\,y+1\rangle$.
  \item We consider now the binomial sequence $\bin$  and the
    $\DRL(y\prec x)$ ordering.

    The \BMS algorithm called on $\bin$ and the stopping monomial
    $x^5$ returns $\langle x\,y-y-1, y^3,(x-1)^3\rangle$.

    The \sFGLM algorithm called on $\bin$ and the set of terms
    $T$ of all the monomials of degree at most $3$
    returns $\langle x\,y-y-1\rangle$.
    
    The first ideal is obviously included in the second which is the
    true ideal of relations of the binomial sequence.
  \end{enumerate}
\end{example}

\begin{remark}
  It is possible to tweak the \sFGLM algorithm so that it \emph{tries}
  to close the staircase. The idea is to pick the potential leading
  terms in the border of the staircase. Then, for $t$ such a potential
  leading term, if $t$ is not in the input set of terms $T$, one tries
  to solve $H_{T,S}\,\balpha+H_{T,\{t\}}=0$ instead of only
  $H_{S,S}\,\balpha+H_{S,\{t\}}=0$, so that relation
  $t+\sum_{s\in S}\alpha_s\,s$ has a shift $T$.
  See Algorithm~\ref{algo:sfglm_0dim}.
\end{remark}
\begin{algorithm2e}[htbp!]\label{algo:sfglm_0dim}
  \small
  \DontPrintSemicolon
  \TitleOfAlgo{Tweaked \sFGLM algorithm.}
  \KwIn{A table $\bu=(u_{\bi})_{\bi\in\N^n}$ with coefficients in
    $\K$, $\prec$ a monomial ordering and $T$ a set of terms in
    $\x$ stable by division.}
  \KwOut{A reduced truncated \gb with respect to $\prec$ of the ideal
    of relations of $\bu$ with staircase included in $T$.}
  Build the matrix $H_{T,T}$.\;
  Compute the useful staircase (column rank profile) $S$ of $H_{T,T}$
  such that $\rank H_{T,T}=\rank H_{S,S}$.\;
  $S':=\Stabilize(S)$.\;
  $L:=\pare{T\cup\bigcup_{i=1}^n x_i\,S'}\backslash S'$.\;
  $G:=\emptyset$.\;
  \While{$L\neq\emptyset$}{
    $t:=\min_{\prec}(L)$.\;
    Find $\balpha$ such that $H_{S,S}\,\balpha + H_{S,\{t\}}=0$.\;
    \If(\tcp*[f]{has a shift $T$!}){$t\in T$ or
      $H_{T\setminus S,S}\,\balpha + H_{T\setminus S,\{t\}}=0$}{
      $G:=G\cup\left\{t+\sum_{s\in S}\alpha_s\,s\right\}$.\;
    }
    Remove multiples of $t$ in $L$ and sort $L$ by increasing order
    (with respect to $\prec$).
  }
  \KwRet $G$.
\end{algorithm2e}

Let us notice that this tweaked version of the \sFGLM still can fail
to close the staircase.
\begin{example}
  We call Algorithm~\ref{algo:sfglm_0dim} on sequence
  $\bu=\pare{i^2+j+\mathds{1}_{3\,i+2\,j > 9}}_{(i,j)\in\N^2}$,
  the set $T=\{1,y,x,y^2\}$ and the $\DRL(y\prec x)$ ordering as in
  Example~\ref{ex:0dim}.
  \begin{enumerate}
  \item[] We build the matrix
    \[H_{T,T} = \kbordermatrix{
      		&1	&y	&x	&y^2\\
      1		&0	&1	&1	&2\\
      y		&1	&2	&2	&3\\
      x		&1	&2	&4	&3\\
      y^2	&2	&3	&3	&4
    }.\]
  \item[] The useful staircase of this matrix is $S=\{1,y,x\}$.
  \item[] It is stable by division so $S'=S$.
  \item[] We set
    $L=\{1,y,x,y^2,x\,y,x^2\}\setminus\{1,y,x\}=\{y^2,x\,y,x^2\}$
    and $G=\emptyset$.
  \item[] We take $t=y^2$ and solve
    $H_{S,S}\,\balpha+H_{S,\{y^2\}}=0$
    which yields relation $y^2-2\,y+1$, so $G=\{y^2-2\,y+1\}$ and $L$ is
    updated to $\{x\,y,x^2\}$.
  \item[] We take $t=x\,y$ and solve
        \[H_{S,S\cup\{x\,y\}}\,
    \begin{pmatrix}
      \alpha_1\\\alpha_y\\\alpha_x\\1
    \end{pmatrix}
    = \kbordermatrix{
      &1	&y	&x	&x\,y\\
      1	&0	&1	&1	&2\\
      y	&1	&2	&2	&3\\
      x	&1	&2	&4	&5\\
    }\,
    \begin{pmatrix}
      \alpha_1\\\alpha_y\\\alpha_x\\1
    \end{pmatrix}
    =0,\]
    which yields
    relation $x\,y-x-y+1$. We check that
    \[H_{T\setminus S,S\cup\{x\,y\}}\,
    \begin{pmatrix}
      1\\-1\\-1\\1
    \end{pmatrix}
    = \kbordermatrix{
      		&1	&y	&x	&x^2\\
      y^2	&2	&3	&3	&4
    }\,
    \begin{pmatrix}
      1\\-1\\-1\\1
    \end{pmatrix}
    =0,\]
    set
    $G=\{y^2-2\,y+1,x\,y-x-y+1\}$ and update $L$ to
    $\{x^2\}$.
  \item[] We take $t=x^2$ and solve
    \[H_{S,S\cup\{x\,y\}}\,
    \begin{pmatrix}
      \alpha_1\\\alpha_y\\\alpha_x\\1
    \end{pmatrix}
    = \kbordermatrix{
      &1	&y	&x	&x\,y\\
      1	&0	&1	&1	&2\\
      y	&1	&2	&2	&3\\
      x	&1	&2	&4	&5\\
    }\,
    \begin{pmatrix}
      \alpha_1\\\alpha_y\\\alpha_x\\1
    \end{pmatrix}
    =0,\]
    which yields
    relation $x^2-2\,x^2-2\,x+3$. However,
    \[H_{T\setminus S,S\cup\{x^2\}}\,
    \begin{pmatrix}
      3\\-2\\-2\\1
    \end{pmatrix}
    = \kbordermatrix{
      		&1	&y	&x	&x^2\\
      y^2	&2	&3	&3	&7
    }\,
    \begin{pmatrix}
      3\\-2\\-2\\1
    \end{pmatrix}
    =1,\]
    so the relation is not valid when shifted by $y^2$.
    Hence, we let $G=\{y^2-2\,y+1,x\,y-x-y+1\}$ and update $L$ to
    $\emptyset$.
  \item[] We return $G=\{y^2-2\,y+1,x\,y-x-y+1\}$.
  \item[] Furthermore, each relation $g\in G$ satisfies $[m\,g]=0$, for all
    $m\in T=\{1,y,x,y^2\}$, \ie has a shift $T$.
  \end{enumerate}
\end{example}

\subsection{Reduction of relations}
Even though the \BMS and the \sFGLM algorithms may compute the same
ideal of relations for a given sequence, the \gbs they compute  may
differ. However, it is possible to tweak the \BMS algorithm so that it
returns the same \gb of the ideal as the \sFGLM algorithm.

\begin{theorem}\label{th:reduced_gb}
  Let $\bu$ be a sequence and $\prec$ be a monomial ordering.
  
  Calling the \sFGLM algorithms on $\bu$, $\prec$ and a set of
  terms $T$ stable by division yields a truncated minimal
  reduced \gb of an ideal.

  Calling the \BMS algorithm  on
  $\bu$, $\prec$ and a stopping monomial $M$ yields a
  truncated minimal \gb of an ideal, which is not necessarily reduced.

  Furthermore, even if $\bu$ is linear recurrent and the
  \sFGLM algorithm computes the ideal of relations of $\bu$, then there
  is no reason for the output of the \BMS algorithm to be reduced.
\end{theorem}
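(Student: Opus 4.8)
The plan is to prove the three assertions in turn, reading the output shape of each algorithm off its pseudocode and reserving an explicit witness for the last claim.

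For the \sFGLM part, I would argue directly from Algorithm~\ref{algo:sfglm}. Every relation it returns has the shape $t+\sum_{s\in S}\alpha_s\,s$, so its tail lies in the stabilized staircase $S$ while its leading monomial $t$ lies in $T\setminus S$. Minimality is exactly the line ``Remove multiples of $t$ in $L$'': once a leading monomial $t$ has been produced, no later leading monomial can be a multiple of it, so the leading monomials form an antichain for divisibility. Reducedness then follows from the stability of $S$ under division: given two distinct output relations $g,g'$ and a monomial $m\in\supp(g')$, either $m=\LM(g')$, and $\LM(g)\nmid m$ by the antichain property, or $m\in S$, and $\LM(g)\nmid m$ because $\LM(g)\notin S$ while $S$ is stable under division. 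This is precisely the definition of a reduced truncated \gb.

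For the \BMS part, minimality again comes for free from the instruction $G':=\Border(S')$, which selects the minimal elements of $\cT\setminus S'$ as leading monomials. The essential point is that \BMS never reduces tails. When a border monomial $b$ enters the staircase, the relation for a new, larger border monomial $b'$ that is a multiple of $b$ is produced by the pure shift $g':=\frac{b'}{b}\,g$; this multiplies the \emph{entire} support of $g$ by $t=b'/b$, so a tail monomial $s\in\supp(g)$ is carried to $t\,s$, which may be divisible by the leading monomial of another relation. The later combinations of Proposition~\ref{prop:augm_shift} only cancel the currently failing coefficient, hence leave such a monomial in place, and the basis need not be reduced. I would make this precise by following the support through a single shift step.

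To settle the ``furthermore'' clause I would exhibit a genuinely linear recurrent witness, so that \sFGLM recovers the true ideal yet \BMS does not return it reduced. A convenient choice is a sequence whose ideal of relations is the radical, hence Gorenstein, ideal of five points in general position: under $\DRL(y\prec x)$ its staircase is $S=\{1,y,x,y^2,x\,y\}$ and its leading monomials are $\{x^2,x\,y^2,y^3\}$. For generic points the tentative relation with leading monomial $x\,y$ built by \BMS while the staircase is $\{1,y,x,y^2\}$ has a nonzero coefficient $\alpha$ on $y^2$ in its tail; when $x\,y$ joins the staircase, \BMS forms the relation for the border monomial $x\,y^2$ by the shift $y\cdot\pare{x\,y+\cdots+\alpha\,y^2+\cdots}=x\,y^2+\cdots+\alpha\,y^3+\cdots$, and since $y^3$ is itself a leading monomial this relation is not reduced. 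Running \sFGLM on the same sequence with $T=\Stabilize(S)$ solves $H_{S,S}\,\balpha+H_{S,\{t\}}=0$ for each border term, so all tails stay in $S$ and the returned basis is reduced; both algorithms compute the same correct ideal, but only \sFGLM returns it reduced.

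The step I expect to be the main obstacle is certifying that the offending monomial $y^3$ really survives to the end of \BMS, i.e.\ that none of the subsequent combinations happens to cancel it. The safe route is to fix explicit numerical initial conditions for the five-point sequence and run the trace of Algorithm~\ref{algo:bms} up to the stopping monomial, exhibiting the final relation for $x\,y^2$ with its $y^3$ term intact; genericity of the points makes the coefficient $\alpha$ nonzero, and the bookkeeping of Proposition~\ref{prop:augm_shift} guarantees that each combination touches only a leading failing coefficient, never the $y^3$ tail.
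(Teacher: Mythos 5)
Your structural arguments for the first two claims are correct and essentially the paper's: reducedness of the \sFGLM output because tails are solved inside the staircase $S$ (stable by division) while leading monomials are taken outside it, and minimality of the \BMS output directly from the line $G':=\Border(S')$. One remark on mechanism: you blame the pure shift $g':=\frac{\LM(g')}{\LM(g)}\,g$ for carrying tail monomials out of the staircase, whereas the paper's proof blames the other update branch, the combination with a stored failing relation $h$ whose support need not be contained in $S$. Both are genuine sources of non-reducedness, so this difference is harmless for the qualitative statement.

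The gap is in the ``furthermore'' clause, and you have in fact located it yourself. Your justification that the offending monomial $y^3$ survives to the end --- ``the bookkeeping of Proposition~\ref{prop:augm_shift} guarantees that each combination touches only a leading failing coefficient, never the $y^3$ tail'' --- is false: the update in Algorithm~\ref{algo:bms} replaces $g'$ by $\frac{\LM(g')}{\LM(g)}\,g-c\,\frac{\LM(g')\,\fail(h)}{m}\,h$, \ie it adds an entire shifted copy of $h$, and since $\supp h$ need not lie in $S$ this can create \emph{or cancel} any tail monomial, including your $y^3$; this is precisely the phenomenon the paper's own proof identifies as the cause of non-reducedness, so it cannot simultaneously be harmless to your witness. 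Consequently the survival of $y^3$ in the five-point example genuinely requires the explicit trace you defer to, and since you never fix initial conditions nor run it, the existence claim is not established: a genericity appeal for $\alpha\neq 0$ does not control the later combinations. The paper closes this step with a concrete verified witness, Example~\ref{ex:reduced}: for $\bu=(i^2+j^2-1)_{(i,j)\in\N^2}$ under $\DRL(y\prec x)$, the \BMS algorithm run up to $y^5$ returns a minimal \gb in which $g_2$ and $g_3$ carry the tail monomial $x\,y=\LM(g_1)$, while \sFGLM returns the reduced basis of the same zero-dimensional ideal, so the sequence is linear recurrent as required. Substituting such an explicit, fully checked example (or actually carrying out your proposed trace) is needed to complete your proof; as written, the last claim remains unproved.
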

\begin{proof}
  When updating a relation $g$ thanks to a failing relation $h$ in the
  \BMS algorithm, nothing ensures that $g$ has support in
  $S\cup\{\LM(g)\}$, where $S$ is the current staircase, as $\supp h$
  may not be included in $S$. This prevents the returned \gb to be
  reduced, see also Example~\ref{ex:reduced}.

  As the \sFGLM algorithm computes a staircase $S$, the monomials on the
  border of $S$ and then solves a multi-Hankel linear system indexed by
  $S$ and one of the monomial on this border, it is clear that the
  output truncated \gb is reduced.
\end{proof}

The following example show which \gbs are returned by the \BMS and the \sFGLM
algorithms for a same sequence.
\begin{example}\label{ex:reduced}
  We let $\bu=\pare{i^2+j^2-1}_{(i,j)\in\N^2}$
  be a sequence and consider the $\DRL(y\prec x)$ ordering.
  The ideal of relations of $\bu$ is $I=\langle
  x\,y-x-y+1,x^2-y^2-2\,x+2\,y,y^3-3\,y^2+3\,y-1\rangle$.

  The \BMS algorithm called on $\bu$ and the stopping monomial $y^5$ returns
  $g_1=x\,y-x-y+1$, with shift $x^2$,
  $g_2=x^2-\frac{1}{3}\,x\,y-y^2-\frac{5}{3}\,x+\frac{7}{3}\,y-\frac{1}{3}$,
  with shift $x^2$ and
  $g_3=y^3-\frac{1}{2}\,x\,y-3\,y^2+\frac{1}{2}\,x+\frac{7}{2}\,y-\frac{3}{2}$,
  with shift $y^2$. We can notice that $\{g_1,g_2,g_3\}$ is a minimal \gb but
  not a reduced \gb of $I$.
  
  The \sFGLM algorithm called on $\bu$ and the set of all the
  monomials of degree at most $3$ yields relations
  $g_1'=x\,y-x-y+1,g_2'=x^2-y^2-2\,x+2\,y,g_3'=y^3-3\,y^2+3\,y-1$.
  We can notice that $\{g_1',g_2',g_3'\}=\{g_1,g_2+\frac{1}{3}\,g_1,g_3+\frac{1}{2}\,g_1\}$
  is a minimal reduced \gb of $I$.
\end{example}

\begin{remark}
  Let $g_1',g_2'$ be two computed relations by the \BMS algorithm and
  let $\mu$ be a monomial. Assume $\mu\,\LM(g_1')\preceq \LM(g_2')$,
  then $\shift(\mu\,\LM(g_1'))\succeq \LM(g_2')=v$. Therefore
  $g_2'-c\,\mu\,g_1'$ has still shift $v$ for any scalar $c$: hence
  one can replace $g_2'$ by $g_2'-c\,\mu\,g_1'$, \ie one can reduce
  $g_2'$ by $g_1'$ into $g_2$ and replace $g_2'$ by $g_2$.
  Let us notice that we can tweak the \BMS algorithm so that, at each
  step, the set of relations is a truncated reduced \gb. It suffices to
  perform an inter-reductions of the computed relations at the end of
  each step of the \textbf{For} loop, see
  Algorithm~\ref{algo:tweaked_bms}.
\end{remark}

\begin{algorithm2e}[htbp!]\label{algo:tweaked_bms}
  \small
  \DontPrintSemicolon
  \TitleOfAlgo{Tweaked \BMS algorithm}
  \KwIn{A table $\bu=(u_{\bi})_{\bi\in\N^n}$ with coefficients in
    $\K$, a monomial ordering $\prec$ and a monomial $M$
    as the stopping condition.}
  \KwOut{A set $G$ of relations generating $I_M$.}
  $T := \{m\in\K[\x],\ m\preceq M\}$.\;
  $G := \{1\}$.\;
  $S := \emptyset$.\;
  \Forall{$m\in T$}{
    $S' := S$ \;
    \For{$g\in G$}{
      \If{$\LM(g)| m$}{
        $e:=\cro{\frac{m}{\LM(g)}\,g}_{\bu}$\;
        \If{$e\neq 0$}{
          $S':=S'\cup\acc{\cro{\frac{g}{e},\frac{m}{\LM(g)}}}$.\;
        }
      }
      $S':=\min_{\fail(h)\in S'}\acc{[h,\fail(h)/\LM(h)]}$.\;
      $G':= \Border (S')$.\;
    }
    \For{$g'\in G'$}{
      Let $g\in G$ such that $\LM(g)|\LM(g')$.\; 
      \uIf{$\LM(g) \nmid m$}{
        $g':=\frac{\LM(g')}{\LM(g)}\,g$.\;
      }
      \uElseIf{$\exists\,h\in S,
        \frac{m}{\LM(g')} |\fail(h)$}{
        $g':= \frac{\LM(g')}{\LM(g)}\,g
        -\cro{\frac{m}{\LM(h)}\,h}_{\bu}\,
        \frac{\LM(g')\,\fail(h)}{m}\,h$.\;
      }
      \lElse{
        $g':=g$.
      }
    }
    $G := \InterReduce(G')$ \;
    $S := S'$ \;
  }  
  \KwRet $G$.
\end{algorithm2e}

\subsection{Validity of relations}
We compare the relationship between relations and shifts as they are
computed by the \BMS and the \sFGLM algorithms.
\begin{theorem}\label{th:valid_shift}
  Let $\bu$ be a sequence and $\prec$ be a monomial ordering.

  Calling the \BMS algorithm on $\bu$, $\prec$ and a stopping monomial
  $M$ yields
  relations $g_1,\ldots,g_r$ and shifts $v_1,\ldots,v_r$ such that
  \[\forall\,i, 1\leq i\leq r,\quad v_i\,\LM(g_i)\preceq M\]
  and $g_i$ is valid with a shift $v_i$, potentially $0$.

  Calling the \sFGLM algorithm on $\bu$, $\prec$ and a set of terms
  $T_M=\{m,\ m\preceq M\}$
  yields relations $g_1',\ldots,g_{r'}'$ such that
  \[\forall\,i, 1\leq i\leq r',\quad \LM(g_i')\preceq M\]
  and $g_i'$ has a shift $T_M$, \ie is valid with a shift $M$.  
\end{theorem}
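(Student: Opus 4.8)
The plan is to prove the two halves separately, since they concern different algorithms and share no machinery; in each case the leading monomials and the claimed inequalities are read off almost by construction, and the content lies in certifying the stated \emph{validity}.

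For the \BMS half I would argue straight from the loop invariant of Algorithm~\ref{algo:bms}: after the iteration on a monomial $m$, the stored set $G$ equals the least-element basis $\cG_m$ of $I_m$ from Definition~\ref{def:im}, so every output relation lies in $I_M$, i.e. $\fail(g_i)\succ M$. By the proposition following Definition~\ref{def:shift}, this means $\cro{t\,g_i}=0$ for every $t\in\cT$ with $\LM(t\,g_i)\preceq M$. Taking $v_i$ to be the largest monomial with $v_i\,\LM(g_i)\preceq M$ then gives simultaneously the inequality $v_i\,\LM(g_i)\preceq M$ and, by Definition~\ref{def:shift}, the validity of $g_i$ with shift $v_i$; the ``potentially $0$'' case simply records a border relation whose leading monomial already exceeds $M$, so that no multiple of $\LM(g_i)$ is $\preceq M$. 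The only delicate point is that validity must survive the two update branches of the inner \textbf{For} loop, which is exactly Proposition~\ref{prop:augm_shift}: both the pure translation by $\LM(g')/\LM(g)$ and the combination with a failing relation of $S$ preserve the already-certified vanishing up to the current monomial.

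For the \sFGLM half, fix an output relation $g_i'$ with leading monomial $t=\LM(g_i')\in T_M\setminus S'$, whose coefficient vector $\balpha$ was obtained by solving $H_{S,S}\,\balpha+H_{S,\{t\}}=\bzero$. Since $t\in T_M$ we have $\LM(g_i')\preceq M$ at once. The real work is to upgrade the vanishing from the rows indexed by $S$ to all rows indexed by $T_M$, i.e. to establish $H_{T_M,S}\,\balpha+H_{T_M,\{t\}}=\bzero$, which is precisely the statement that $\cro{m\,g_i'}=0$ for every $m\in T_M$, that is, that $g_i'$ has shift $T_M$. Here I would invoke Definition~\ref{def:useful_staircase_matrix}: $S$ is the column rank profile of $H_{T_M,T_M}$, so $\rank H_{T_M,T_M}=\rank H_{S,S}=\#\,S$ and the columns of $H_{T_M,T_M}$ indexed by $S$ span its entire column space. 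As $t\in T_M$, the column $H_{T_M,\{t\}}$ lies in that span, so there is a vector $\balpha'$ with $H_{T_M,S}\,\balpha'+H_{T_M,\{t\}}=\bzero$; restricting these equalities to the rows indexed by $S$ yields $H_{S,S}\,\balpha'+H_{S,\{t\}}=\bzero$, and invertibility of $H_{S,S}$ forces $\balpha'=\balpha$. Hence $\balpha$ annihilates every row indexed by $T_M$, which is the claim, and by the remark following Definition~\ref{def:shift2} having shift $T_M$ is the same as being valid with shift $M$.

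The main obstacle is the \sFGLM step: everything hinges on the fact that the \emph{column} rank profile $S$ also governs the \emph{rows} of $H_{T_M,T_M}$, which is exactly where the full-rank equality $\rank H_{S,S}=\rank H_{T_M,T_M}$ in the definition of a useful staircase is indispensable. Without it the small system could admit solutions that fail on the extra rows of $T_M\setminus S$, and the shift $T_M$ would be lost (this is the very phenomenon that makes the tweaked Algorithm~\ref{algo:sfglm_0dim} necessary for leading terms outside $T$). By contrast, the \BMS half requires only the observation that the loop never evaluates a bracket $\cro{t\,g}$ with $\LM(t\,g)\succ M$, so the certified shift can never exceed what the multiples $\preceq M$ guarantee.
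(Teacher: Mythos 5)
Your proposal is correct and takes essentially the same route as the paper's own (very terse) proof: for \BMS, the returned relations lie in $I_M$, hence each is valid up to $M$ with the largest admissible shift $v_i$ (and shift $0$ when $\LM(g_i)\succ M$), while for \sFGLM the solution of $H_{S,S}\,\balpha+H_{S,\{t\}}=0$ already satisfies $H_{T_M,S}\,\balpha+H_{T_M,\{t\}}=0$ because $S$ is the column rank profile of $H_{T_M,T_M}$ with $H_{S,S}$ invertible. Your rank-profile argument simply spells out in full the equivalence of the small and large systems that the paper asserts just before Algorithm~\ref{algo:sfglm} and then invokes without further detail in its proof.
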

\begin{proof}
  The \BMS algorithm tests its relations up to $M$, \ie it shifts
  them up to $M$. In the worst
  case, the leading term of a relation is greater than $M$, but then
  it has a shift $0$.

  The \sFGLM algorithm returns relations
  $g=\LT(g)+\sum_{s\in S}\alpha_s\,s$
  such that $H_{T,S}\,\balpha+H_{T,\{\LM(g)\}}=0$, \ie they are valid
  when shifted by any monomial in $T$.
\end{proof}
We illustrate this with the following example.
\begin{example}\label{ex:validity}
  We let $\bin$ be the binomial sequence and consider the
  $\DRL(y\prec x)$ ordering.

  The \BMS algorithm called on $\bin$ and the stopping monomial $x^7$ returns
  $x\,y-y-1$, with a shift $x^5$; $y^4$, with a shift
  $x^3$; and $(x-1)^4$, with a shift $x^3$.
  
  With the matrix viewpoint, one has
  \begin{align*}
    \text{$21$ rows}\left\{\kbordermatrix{
    	&1	&y	&x	&x\,y\\
    1	&1	&0	&1	&1\\
    y	&0	&0	&1	&0\\
    x	&1	&1	&1	&2\\
    \vdots	&\vdots	&\vdots	&\vdots	&\vdots\\
    x^3	&1	&3	&1	&4\\
    \vdots	&\vdots	&\vdots	&\vdots	&\vdots\\
    x^5	&1	&5	&1	&6\\
    }\right.\,
    \begin{pmatrix}
      -1\\-1\\0\\1
    \end{pmatrix}
    &=0,\\
    \text{$10$ rows}\left\{\kbordermatrix{
    &1	&y	&x	&\cdots	&y^4\\
    1	&1	&0	&1	&\cdots	&0\\
    y	&0	&0	&1	&\cdots	&0\\
    x	&1	&1	&1	&\cdots	&0\\
    \vdots	&\vdots	&\vdots	&\vdots	&	&\vdots\\
    x^3	&1	&3	&1	&\cdots	&0\\
    }\right.\,
    \begin{pmatrix}
      0\\0\\0\\\vdots\\1
    \end{pmatrix}
    &=0,
    &\kbordermatrix{
    &1	&y	&x	&\cdots	&x^4\\
    1	&1	&0	&1	&\cdots	&1\\
    y	&0	&0	&1	&\cdots	&4\\
    x	&1	&1	&1	&\cdots	&1\\
    \vdots	&\vdots	&\vdots	&\vdots	&	&\vdots\\
    x^3	&1	&1	&1	&\cdots	&1\\
    }\,
    \begin{pmatrix}
      1\\0\\-4\\\vdots\\1
    \end{pmatrix}
    &=0.
  \end{align*}
  We can notice that the first matrix has many more rows than the other two.

  The \sFGLM algorithm called on $\bin$ and the set $T$ of all the
  monomials of degree at most $3$ returns $x\,y-y-1$ with a shift $x^3$.
  With the matrix viewpoint, one has
  \[
  \text{$10$ rows}\left\{\kbordermatrix{
    	&1	&y	&x	&x\,y\\
    1	&1	&0	&1	&1\\
    y	&0	&0	&1	&0\\
    x	&1	&1	&1	&2\\
    \vdots	&\vdots	&\vdots	&\vdots	&\vdots\\
    x^3	&1	&3	&1	&4\\
  }\right.\,
  \begin{pmatrix}
    -1\\-1\\0\\1
  \end{pmatrix}
  =0.\]

  Likewise, calling Algorithm~\ref{algo:sfglm_0dim} on the same input
  returns $x\,y-y-1,y^4,x^4$, all three valid up to $x^3$. With the
  matrix viewpoint, we also have this matrix equality:
  \[
  \text{$10$ rows}\left\{\kbordermatrix{
    	&1	&y	&x	&\cdots	&y^4\\
    1	&1	&0	&1	&\cdots	&0\\
    y	&0	&0	&1	&\cdots	&0\\
    x	&1	&1	&1	&\cdots	&0\\
    \vdots	&\vdots	&\vdots	&\vdots	&	&\vdots\\
    x^3	&1	&3	&1	&\cdots	&0
  }\right.\,
  \begin{pmatrix}
    0\\0\\0\\\vdots\\1
  \end{pmatrix}
  =\kbordermatrix{
    	&1	&y	&x	&\cdots	&x^4\\
    1	&1	&0	&1	&\cdots	&1\\
    y	&0	&0	&1	&\cdots	&4\\
    x	&1	&1	&1	&\cdots	&1\\
    \vdots	&\vdots	&\vdots	&\vdots	&	&\vdots\\
    x^3	&1	&3	&1	&\cdots	&1
  }\,
  \begin{pmatrix}
    1\\0\\-4\\\vdots\\1
  \end{pmatrix}
  =0.\]
  We can see that the last two matrices have as many rows as the first one.
\end{example}

That being said, for a monomial $m=\mu\,\LM(g)\in T$, the column
labeled with $m$ in $H_{T,T}$ is also linearly dependent from the
previous ones. In particular, it allows us to verify that the relation
$\mu\,g$ is valid with a shift $T$, \ie $g$ is valid with a shift
$T\cup\mu\,T$.

\begin{example}
  Resuming Example~\ref{ex:validity}, the columns labeled with
  $x\,y^2,x^2\,y,x\,y^3,x^2\,y^2$ and $x^3\,y$ are linearly dependent from
  the previous ones. For instance, the column labeled with $x\,y^2$ is
  the sum of the columns labeled with $y^2$ and $y$. Thus, Pascal's
  rule $x\,y-y-1$ is also valid with shifts $y\,T$, $x\,T$, $y^2\,T$,
  $x\,y\,T$ and $x^2\,T$. Since $T$ is the set of the monomials of
  degree at most $3$, $\bigcup_{\mu\in\{1,y,x,y^2,x\,y,x^2\}}\mu\,T$ is
  the set of all the monomials of degree $5$.

  All in all, like for the \BMS algorithm, we find that the Pascal's
  rule is valid with a shift $x^5$.  
\end{example}

\subsection{Monomial ordering and Set of Terms}
Given a linear recurrent sequence $\bu$ with ideal of relation defined by a
\gb $\cG$ for a monomial ordering $\prec$, the \BMS and the \sFGLM
algorithms can return $\cG$ only if the input set of terms
contains the staircase defined by $\cG$. That is why, it is preferable
to run both of them with an ordering $\prec$ such that for all
monomial $M\in\K[\x]$, $T_M=\{m,\ m\preceq M\}$ is finite. In
particular, the $\LEX$ ordering does not satisfy such a property.

However, we can try to see how they behave when calling them with the
$\LEX$ monomial ordering. We relate this to the randomized
reduction to the \BM algorithm presented
in~\cite[Section~3]{issac2015,berthomieu:hal-01253934} where the
authors perform a randomized  linear change of variables so that,
generically, the ideal of relations is in shape position. We also
relate this to the \spFGLM
algorithm application where the input is a
sequence made from a \gb, typically for the $\DRL$ ordering, and the
output is the ideal of relations of this sequence for another
ordering, typically $\LEX$, see~\cite{FM11,faugere:hal-00807540}.

\begin{theorem}\label{th:shape_position}
  Let $\bu$ be a linear recurrent sequence whose
  ideal of relation $I$ is in shape position for the
  $\LEX(x_n\prec\cdots\prec x_2\prec x_1)$ ordering, \ie there exist
  $g_n$ squarefree and
  $f_{n-1},\ldots,f_1\in\K[x_n]$ with $\deg g_n=d,\deg f_i<d$ such that
  $I=\langle g_n(x_n),x_{n-1}-f_{n-1}(x_n),\ldots,x_1-f_1(x_n)\rangle$.
  
  Calling the \sFGLM algorithm as designed
  in~\cite{issac2015,berthomieu:hal-01253934} or its tweaked version
  Algorithm~\ref{algo:sfglm_0dim} on $\bu$, $T$ containing at least
  $\{1,x_n,\ldots,x_n^{d-1}\}$ and
  $\LEX(x_n\prec\cdots\prec x_2\prec x_1)$ allows one to retrieve $I$.

  Calling the \BMS algorithm on $\bu$, with the stopping monomial
  $x_n^e$ and $\LEX(x_n\prec\cdots\prec x_2\prec x_1)$ yields $\langle
  g(x_n),x_{n-1},\ldots,x_1\rangle$. This ideal is not equal to $I$,
  unless $f_1=\cdots=f_{n-1}=0$.
  
  In other words, the \sFGLM algorithm can retrieve an ideal of
  relations in shape position while, in general, the \BMS algorithm cannot.
\end{theorem}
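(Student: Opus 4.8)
The plan is to exploit the peculiar shape of the stopping set under $\LEX(x_n\prec\cdots\prec x_1)$: because $x_n$ is the smallest variable, every monomial divisible by some $x_i$ with $i<n$ is larger than every pure power of $x_n$. Hence the set $T_M=\{m,\ m\preceq x_n^e\}$ visited by the \BMS algorithm is exactly $\{1,x_n,\ldots,x_n^e\}$, and the whole run takes place on the $x_n$-axis. First I would record this, together with the consequence of Proposition~\ref{prop:iter} that a monomial $s$ may enter the staircase at a step $m^+=x_n^k$ only when $s\mid x_n^k$, i.e.\ only when $s$ is itself a pure power of $x_n$. Thus the staircase produced by the \BMS algorithm stays inside $\{1,x_n,x_n^2,\ldots\}$ and the run collapses to the univariate \bm algorithm applied to the diagonal subsequence $v_k=[x_n^k]$.

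The key algebraic point, and the step I expect to be the main obstacle, is to prove that $(v_k)_k$ has minimal polynomial exactly $g_n$, of degree $d$, rather than some proper divisor. Since $g_n\in I$, the sequence $(v_k)$ satisfies the recurrence induced by $g_n$, so its minimal polynomial $p$ divides $g_n$. To rule out $\deg p<d$ I would use shape position: in $R=\K[\x]/I$ the classes of $1,x_n,\ldots,x_n^{d-1}$ form a basis, so \emph{every} element of $R$ is a $\K$-linear combination of powers of $x_n$. If $p(x_n)$ had degree $<d$ and satisfied $[x_n^k\,p(x_n)]=0$ for all $k$, then $[a\,p(x_n)]=0$ for every $a\in R$; by the nondegeneracy of the Gorenstein pairing $(a,b)\mapsto[a\,b]$ this forces $p(x_n)\in I$, hence $p(x_n)\in I\cap\K[x_n]=\langle g_n\rangle$, which is impossible for a nonzero polynomial of degree $<\deg g_n$. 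Granting this, for $e$ large enough (Proposition~\ref{prop:upperbound} applied on the $x_n$-axis, so $e\geq 2\,d-1$ suffices) the \BMS algorithm recovers $g_n$ and the staircase $\{1,x_n,\ldots,x_n^{d-1}\}$.

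Next I would handle the border monomials $x_1,\ldots,x_{n-1}$, whose leading monomials all exceed $M=x_n^e$. Each such relation is created by $\Border$ as the translate of the failed relation $1$, hence as the bare monomial $x_i$; because its leading monomial is never reached, the failure-combination step of Proposition~\ref{prop:augm_shift} never fires on it and no staircase term is ever attached. So the \BMS algorithm returns precisely $\langle g_n(x_n),x_{n-1},\ldots,x_1\rangle$. Comparing with $I$, this ideal equals $I$ if, and only if, each $x_i-f_i(x_n)$ lies in it, i.e.\ if, and only if, $f_i(x_n)\in\langle g_n\rangle\cap\K[x_n]$; as $\deg f_i<d=\deg g_n$, this happens exactly when $f_i=0$, which gives the stated inequivalence.

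For the \sFGLM side I would argue that the useful staircase equals the true one: since $T\supseteq\{1,\ldots,x_n^{d-1}\}$ and the $d\times d$ univariate Hankel block $H_{\{1,\ldots,x_n^{d-1}\},\{1,\ldots,x_n^{d-1}\}}$ is nonsingular (minimal polynomial of degree $d$), while $\rank H_{T,T}=\dim_\K R=d$, the column rank profile of $H_{T,T}$ read in increasing $\LEX$ order is exactly $\{1,x_n,\ldots,x_n^{d-1}\}$, already stable by division. Solving $H_{S,S}\,\balpha+H_{S,\{t\}}=0$ for the border terms $t\in\{x_n^d,x_1,\ldots,x_{n-1}\}$ returns the genuine relations $g_n$ and $x_i-f_i(x_n)$, whose supports lie in $S\cup\{t\}$ precisely because $\deg f_i<d$. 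Each of these is a true element of $I$, hence has shift $T$, so the validity test of Algorithm~\ref{algo:sfglm_0dim} (and trivially the design of~\cite{issac2015,berthomieu:hal-01253934}) accepts them, provided the input table is large enough to evaluate the test. The output is therefore the reduced \gb of $I$; the only routine check left is that these are all the border relations and that they constitute the reduced \gb, which is immediate from shape position.
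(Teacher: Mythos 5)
Your proof is correct and follows essentially the same route as the paper's: the \BMS run under $\LEX(x_n\prec\cdots\prec x_1)$ collapses to the univariate \bm algorithm on the $x_n$-axis and appends the never-testable relations $x_1,\ldots,x_{n-1}$ as bare monomials, while the \sFGLM algorithm identifies the staircase $\{1,x_n,\ldots,x_n^{d-1}\}$ and solves the multi-Hankel systems to obtain $g_n$ and $x_k-f_k(x_n)$, which pass the shift-$T$ test because they lie in $I$. Your extra justifications (that the minimal polynomial of $([x_n^k])_k$ has degree exactly $d$, via the shape-position basis and nondegeneracy of the pairing, and that the column rank profile of $H_{T,T}$ is the initial segment of pure $x_n$-powers) merely make explicit steps the paper asserts without proof, and they are sound.
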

\begin{proof}
  When calling the \sFGLM algorithm on $\bu$ with the
  $\LEX (x_n\prec\cdots\prec x_2\prec x_1)$ ordering
  and with the set of terms
  $T$ containing $\{1,x_n,\ldots,x_n^{d-1}\}$, the algorithm shall
  determine that the useful staircase
  $S=\{1,x_n,\ldots,x_n^{d-1}\}$. Then, the set of potential leading
  monomials is $\{x_n^d,x_{n-1},\ldots,x_1\}$. For $x_n^d$, it solves
  $H_{S,S}\,\balpha+H_{S,\{x_n^d\}}=0$ and finds relation $g_n(x_n)$ while
  for any $k$, $1\leq k\leq n-1$, it solves
  $H_{S,S}\,\balpha+H_{S,\{x_k\}}=0$ and finds relation
  $x_k-f_k(x_n)$. Then it tests that these relations have a shift
  $T$, and since they are the true relations of $\bu$, they do.

  When calling the \BMS algorithm with the $\LEX(x_n\prec\cdots\prec
  x_2\prec x_1)$ ordering and with the stopping monomial $M=x_n^e$, the
  algorithm behaves \emph{mutatis mutandis} like the \BM algorithm
  except that as soon as $1$ is detected to be in the staircase, the
  \BMS algorithms adds polynomials $x_1,\ldots,x_{n-1}$ in the truncated
  \gb. As these relations cannot be tested further, the output shall
  always be $\cG=\langle g(x_n),x_{n-1},\ldots,x_1\rangle$. See also
  Example~\ref{ex:shape_position} below.
\end{proof}

We illustrate the behavior of the \sFGLM algorithm with an example.
\begin{example}\label{ex:shape_position}
  We let $\bu=\pare{F_{4\,i+k}}_{(i,j,k)\in\N^3}$
  be a sequence, where $(F_i)_{i\in\N}$ is the Fibonacci sequence,
  and consider the $\DRL(z\prec y\prec x)$ ordering.
  The ideal of relations of $\bu$ is $I=\langle
  z^2-z-1,y-1,x-3\,z-2\rangle$.

  The \sFGLM algorithm called on $\bu$ and the set of terms
  $\{1,z,\ldots,z^{d+2}\}$ yields $\langle g_3,g_2,g_1\rangle$, which
  is indeed the ideal of relations of the sequence. In detail:
  \begin{enumerate}
  \item[] It creates the matrix
    \[H_{T,T}= \kbordermatrix{
      &1	&z	&\cdots	&z^{d+2}\\
      1	&0	&1	&\cdots	&F_{d+2}\\
      z	&1	&1	&\cdots	&F_{d+3}\\
      \vdots	&\vdots	&\vdots	&	&\vdots\\
      z^{d+2} &F_{d+2}&F_{d+3}&\cdots&F_{2\,d+4} }\]
    and finds it has rank $2$ with useful staircase
    $S=\{1,z\}$.
  \item[] It solves
    \[H_{S,S\cup\{z^2\}}\,
    \begin{pmatrix}
      \alpha_1\\\alpha_{z}\\1
    \end{pmatrix}
    = \kbordermatrix{
      &1	&z	&z^2\\
      1	&0	&1	&1\\
      z &1 &1 &2 }\,
    \begin{pmatrix}
      \alpha_1\\\alpha_{z}\\1
    \end{pmatrix}
    =0\] and finds relation $z^2-z-1$.
  \item[] Then, it solves
    \[
      H_{S,S\cup\{y\}}\,
      \begin{pmatrix}
        \alpha_1\\\alpha_{z}\\1
      \end{pmatrix}
   =
     \kbordermatrix{
   &1	&z	&y\\
      1	&0	&1	&1\\
      z	&1	&1	&1\\
      }\,
      \begin{pmatrix}
        \alpha_1\\\alpha_{z}\\1
      \end{pmatrix}=
      H_{S,S\cup\{x\}}\,
      \begin{pmatrix}
        \alpha_1\\\alpha_{z}\\1
      \end{pmatrix}
   =
     \kbordermatrix{
   &1	&z	&x\\
      1	&0	&1	&3\\
      z	&1	&1	&5\\
      }\,
      \begin{pmatrix}
        \alpha_1\\\alpha_{z}\\1
      \end{pmatrix}=0.
    \]
    and finds the relations $g_2=y-1$ and $g_1=x-3\,z-2$. It also
    checks that the last two have a shift $T$ with
    \begin{align*}
      H_{T\setminus S,S\cup\{y\}}\,
      \begin{pmatrix}
        -1\\0\\1
      \end{pmatrix}
   &=
     \kbordermatrix{
   &1	&z	&y\\
      z^2	&1	&2	&1\\
      \vdots	&\vdots	&\vdots	&\vdots\\
      z^{d+2}	&F_{d+2}	&F_{d+3}	&F_{d+2}\\
      }\,
      \begin{pmatrix}
        -1\\0\\1
      \end{pmatrix}=0,\\
      H_{T\setminus S,S\cup\{x\}}\,
      \begin{pmatrix}
        -2\\-3\\1
      \end{pmatrix}&= \kbordermatrix{
   &1	&z	&x\\
      z^2	&1	&2	&8\\
      \vdots	&\vdots	&\vdots	&\vdots\\
      z^{d+2}	&F_{d+2}	&F_{d+3}	&F_{d+6}\\
      }\,
      \begin{pmatrix}
        -2\\-3\\1
      \end{pmatrix}=0.
    \end{align*}
  \item[] Finally, it returns $g_3,g_2,g_1$.
  \end{enumerate}

  The \BMS algorithm called on $\bu$ and the stopping monomial $z^{d+2}$
  returns $\langle g_3,y,x\rangle$, which
  is not the ideal of relations of the sequence, as neither $y$ nor
  $x$ are in $I$. In detail:
  
  \begin{enumerate}
  \item[] The algorithms tests the relation $g=1$ in
    $u_{0,0,0}=F_0=0$ where it succeeds.
  \item[] It tests $g$ in
    $u_{0,0,1}=F_1=1$ where it fails. It has now relations
    $g_1=x,g_2=y$ and $g_3=z^2$, all three with a shift $0$.
  \item[] Going on testing $z^2$ in $u_{0,0,2}=F_2=1$, $u_{0,0,3}=F_3$ and so
    on, it is able to update $g_3$ into $z^2-z-1$ but is never
    able to test either $g_1$ or $g_2$.
  \item[] Finally, it returns $g_3$ with a shift $z^d$ and $g_1,g_2$ with
    a shift $0$.
  \end{enumerate}
  Although $g_3$ is in the ideal of relations, $g_1$ and $g_2$ are not.
\end{example}
\begin{remark}
  Let us notice though that, whenever the user knows the degree $d$ of
  the ideal of relations of a linear recurrent sequence, we can tweak
  both algorithms to be able to recover fully the ideal of relations.

  On the one hand, it suffices to call the \sFGLM algorithm with the
  set of monomials $T=\{m,\ \deg m\leq d\}$ and the
  $\LEX(x_n\prec\cdots\prec x_2\prec x_1)$ ordering.

  On the other hand, it suffices to change a little bit how we
  enumerate the monomials less than the stopping monomial $M$ in the
  \BMS algorithm.  In most implementation, monomials less than or
  equal to $M$ are
  given by the ordered set of terms $T_M=\{m,\ m\preceq M\}$. If one knows
  in advance the degree $d$ of the ideal, then it suffices to
  enumerate the monomials in $\{m,\ \deg m\leq 2\,d-1,m\preceq M\}$ and to call
  the \BMS algorithm with the $\LEX(x_n\prec\cdots\prec x_2\prec x_1)$
  ordering. This tweaked version of the \BMS algorithm was implemented
  for the \spFGLM application in~\cite{FM11,faugere:hal-00807540}.
\end{remark}


\section{Complexity and Benchmarks}\label{s:implem}
In this section, we present some benchmarks to compare the behaviors
of the \BMS and the \sFGLM algorithms.
We relate them with the announced
complexity of each algorithm.

Three families of ideals of relations are used to make the
sequences.

\begin{itemize}
\item In the first family, 
  the leading monomials of the ideal of
  relations are $\langle y^{\floor{d/2}},x^d\rangle$. Thus, its staircase is a
  rectangle of size around $d^2/2$. In three variables,
  the leading monomials are
  $\langle z^{\ceil{d/3}},y^{\floor{d/2}},x^d\rangle$,
  so that the staircase is a rectangular cuboid of size
  around $d^3/6$. This family will be called \emph{Rectangle}.

\item In the second family,
  the leading monomials of the ideal of relations
  are $\langle x\,y,y^d,x^d\rangle$.  Thus, its staircase looks like a
  \textsc{L} and has size $2\,d-1$.
  In three variables, the leading monomials are
  $\langle y\,z,x\,z,x\,y,z^d,y^d,x^d\rangle$, so that the staircase has size
  $3\,d-2$. This family will be called \emph{\textsc{L}
    shape}. It was considered as the worst case
  in~\cite{issac2015,berthomieu:hal-01253934} for the \asFGLM algorithm, a
  variant of the \sFGLM algorithm, for the number of queries.
  It should also be a worst case for
  the \BMS algorithm.

\item In the last family, 
  the leading monomials of the ideal of relations
  are all the monomials of degree $d$. Thus, its staircase
  is a simplex and has size $\binom{d+1}{2}=\frac{d\,(d+1)}{2}$ in
  two variables. In three variables, the staircase has
  size $\binom{d+2}{3}=\frac{d\,(d+1)\,(d+2)}{6}$. This family
  will be called \emph{Simplex}. It should be the best case for both
  the \sFGLM and the \BMS algorithms.
\end{itemize}
For all these families,
we called the algorithms with the $\DRL(y\prec x)$ ordering.

For the \BMS algorithm, we used
Proposition~\ref{prop:upperbound} to
estimate sharply the stopping monomial. For the \sFGLM algorithm,
we took all
the monomials of the largest degree appearing in the staircase and
the minimal \gb.

\subsection{Counting the number of table queries}
Thanks to the Proposition~\ref{prop:upperbound} giving a monomial $M$
such that at
step $M$, the \BMS algorithm recovers a \gb of the ideal of relations
of the input sequence, we can deduce the
following proposition.
\begin{proposition}\label{prop:bms_queries}
  Let $\bu=(u_{\bi})_{\bi\in\N^n}$
  be a sequence and $\cG$ be a minimal \gb of its ideal of
  relations for a total degree ordering.
  
  Let $d_S$ be the greatest
  degree of the elements in the staircase of $\cG$, $d_{\cG}$ be the
  greatest degree of the elements in $\cG$ and
  $d_{\max}=\max(d_S,d_{\cG})$.

  Let $\cS(d)$ be the
  simplex of all monomials of degree $d$.

  Then, the \BMS algorithm needs to perform at least
  $\#\,\cS(d_S+d_{\max}-1)=\binom{n+d_S+d_{\max}-1}{n}$ and at most
  $\#\,\cS(d_S+d_{\max})=\binom{n+d_S+d_{\max}}{n}$ queries to
  $\bu$.

  The \sFGLM algorithm called on $T=\cS(d_{\max})$
  the set of all of monomials of
  degree at most $d_{\max}$ needs to perform
  $\#\,\cS(2\,d_{\max})=\binom{n+2\,d_{\max}}{n}$ queries to $\bu$.

  For $n$ fixed, these numbers grow as $O(d_{\max}^n)$.
\end{proposition}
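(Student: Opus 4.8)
The plan is to determine, for each algorithm, the exact set of table entries $u_{\bi}$ it reads, and then to count those entries by bounding the degree of their indices, using $\#\cS(d)=\binom{n+d}{n}$.

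For the \BMS algorithm I would first pin down the stopping monomial. By Proposition~\ref{prop:upperbound} a minimal \gb is returned once the algorithm reaches $M=s_{\max}\cdot\max_\prec(g_{\max},s_{\max})$. Since $\prec$ is a total degree ordering, $\deg\max_\prec(g_{\max},s_{\max})=\max(d_S,d_{\cG})=d_{\max}$, whence $\deg M=d_S+d_{\max}$. The upper bound is then immediate: at step $m\preceq M$ the algorithm only evaluates brackets $\cro{\frac{m}{\LM(g)}\,g}$ with $\LM(g)\mid m$, and each index touched is $\frac{m}{\LM(g)}\,s$ with $s\in\supp(g)$, hence $s\preceq\LM(g)$ and $\frac{m}{\LM(g)}\,s\preceq m\preceq M$; so every queried term is indexed by a monomial of degree at most $\deg M=d_S+d_{\max}$, giving at most $\#\cS(d_S+d_{\max})$ distinct queries.

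The lower bound is the step that needs a small argument. Every monomial of degree at most $d_S+d_{\max}-1$ is strictly below $M$ for the total degree ordering, hence is visited. At step $m$ the characterization $S_{m^-}=\acc{\frac{\fail(f)}{\LM(f)}:f\notin I_{m^-}}$ of Proposition~\ref{prop:iter} shows $m\notin S_{m^-}$: indeed $f\notin I_{m^-}$ forces $\fail(f)\preceq m^-$, and $\frac{\fail(f)}{\LM(f)}\preceq\fail(f)\preceq m^-\prec m$. As $\cT$ is partitioned into $S_{m^-}$ and $\Sigma(\cG_{m^-})$, it follows that $m\in\Sigma(\cG_{m^-})$, \ie some $g\in\cG_{m^-}$ has $\LM(g)\mid m$; testing this $g$ shifted to leading monomial $m$ reads exactly $u_{\bi}$ with $\x^{\bi}=m$. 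Thus all $\#\cS(d_S+d_{\max}-1)$ such terms are queried.

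For the \sFGLM algorithm the reads are transparent. The algorithm builds $H_{T,T}$ with $T=\cS(d_{\max})$ and thereafter only solves systems indexed by $S\subseteq T$ against border terms that also lie in $T$; hence every term it reads is an entry $u_{\bi+\bj}$ with $\x^{\bi},\x^{\bj}\in T$. Its index set is $\acc{\bi+\bj:\deg\x^{\bi}\leq d_{\max},\ \deg\x^{\bj}\leq d_{\max}}$, and since any monomial of degree at most $2\,d_{\max}$ splits as a product of two monomials each of degree at most $d_{\max}$, this set is exactly $\cS(2\,d_{\max})$, giving $\binom{n+2\,d_{\max}}{n}$ queries. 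Finally, for fixed $n$ each count is a binomial coefficient $\binom{n+c\,d_{\max}+e}{n}$ with $c\in\{1,2\}$ and $e$ constant, a degree-$n$ polynomial in $d_{\max}$, hence $O(d_{\max}^n)$.
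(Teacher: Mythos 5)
Your proof is correct and takes essentially the route the paper intends: the paper states this proposition without a written proof, presenting it as a direct consequence of Proposition~\ref{prop:upperbound}, which is exactly your deduction (the stopping monomial has degree $d_S+d_{\max}$ under a total degree ordering, all \BMS queries at step $m$ have index $\preceq m\preceq M$, and the \sFGLM queries are precisely the entries of $H_{T,T}$, whose index set is $\cS(2\,d_{\max})$ since every monomial of degree at most $2\,d_{\max}$ splits into two factors of degree at most $d_{\max}$). Your one genuine addition is the explicit lower-bound argument via Proposition~\ref{prop:iter} that $m\notin S_{m^-}$, so some $g$ in the current basis has $\LM(g)\mid m$ and the entry indexed by $m$ is actually read at every step $m$ --- a detail the paper leaves implicit --- and that argument is sound.
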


\begin{figure}[htbp!]
  \pgfplotsset{
    small,
    width=12cm,
    height=7.5cm,
    legend cell align=left,
    legend columns=4,
    legend style={at={(-0.05,0.98)},anchor=south
      west,font=\scriptsize,
    }
  }
  \centering
  \begin{tikzpicture}[baseline]
    \begin{axis}[
      ymode=log, xlabel={$d$}, xlabel
      style={at={(0.95,0.1)}},
      xmin=3.8,xmax=20.2,ymin=2.8,ymax=35,
      xtick={2,...,20},
      ytick={1,2,3,4,5,6,7,8,9,10,20,30,40,50,60,70,80,90,
        100,200,300,400,500,600,700,800,900,1000},
      yticklabels={},
      extra y ticks={1,5,10,50,100,500},
      extra y tick labels={1,5,10,50,100,500},
      ylabel={\#\,Queries/\#\,S},
      ylabel style={at={(0.08,0.750)}}
      ]
      \addlegendimage{empty legend}
      \addlegendentry{}
      \addlegendimage{legend image with text=Rectangle}
      \addlegendentry{}
      \addlegendimage{legend image with text=\textsc{L} shape}
      \addlegendentry{}
      \addlegendimage{legend image with text=Simplex}
      \addlegendentry{}
      
      \addlegendimage{empty legend}
      \addlegendentry{\sFGLM}
      \addlegendimage{empty legend}
      \addlegendentry{}
      \addplot[thick,every mark/.append style={solid},
      mark=triangle*,dashed,blue]
      plot coordinates {
        (2,15/3) (3,28/5) (4,45/7) (5,66/9) 
        (6,91/11) (7,120/13) (8,153/15) (9,190/17) 
        (10,231/19) (11,276/21) (12,325/23) (13,378/25) 
        (14,435/27) (15,496/29) (16,561/31) (17,630/33) 
        (18,703/35) (19,780/37) (20,861/39) 
      };
      \addlegendentry{} 
      \addplot[thick,every mark/.append style={solid},
      mark=triangle*,dashed,green]
      plot coordinates {
        (2,15/3) (3,28/6) (4,45/10) (5,66/15) 
        (6,91/21) (7,120/28) (8,153/36) (9,190/45) 
        (10,231/55) (11,276/66) (12,325/78) (13,378/91) 
        (14,435/105) (15,496/120) (16,561/136) (17,630/153) 
        (18,703/171) (19,780/190) (20,861/210) 
      };
      \addlegendentry{} 

      \addlegendimage{empty legend}
      \addlegendentry{\BMS}

      \addlegendimage{empty legend}
      \addlegendentry{}
      \addplot[thick,every mark/.append style={solid,rotate=180},
      mark=triangle*,dotted,blue]
      plot coordinates {
        (2,10/3) (3,21/5) (4,36/7) (5,55/9) 
        (6,78/11) (7,105/13) (8,136/15) (9,171/17) 
        (10,210/19) (11,253/21) (12,300/23) (13,351/25) 
        (14,406/27) (15,465/29) (16,528/31) (17,595/33) 
        (18,666/35) (19,741/37) (20,820/39) 
      };
      \addlegendentry{}
      \addplot[thick,every mark/.append style={solid,rotate=180},
      mark=triangle*,dotted,green]
      plot coordinates {
        (2,10/3) (3,21/6) (4,36/10) (5,55/15) 
        (6,78/21) (7,105/28) (8,136/36) (9,171/45) 
        (10,210/55) (11,253/66) (12,300/78) (13,351/91) 
        (14,406/105) (15,465/120) (16,528/136) (17,595/153) 
        (18,666/171) (19,741/190) (20,820/210) 
      };
      \addlegendentry{}

      \addlegendimage{empty legend}
      \addlegendentry{Both algorithms}
      \addplot[thick,every mark/.append style={solid},
      mark=square*,red]
      plot coordinates {
        (4,45/8) (5,66/10) 
        (6,120/18) (7,153/21) (8,231/32) (9,276/36) 
        (10,378/50) (11,435/55) (12,561/72) (13,630/78) 
        (14,780/98) (15,861/105) (16,1035/128) (17,1128/136) 
        (18,1326/162) (19,1431/171) (20,1653/200) 
      };
      \addlegendentry{}
    \end{axis}
  \end{tikzpicture}
  \caption{Number of table queries (\textsc{2D})}
  \label{fig:queries2D}
\end{figure}
\begin{figure}[htbp!]
  \pgfplotsset{
    small,
    width=12cm,
    height=7.5cm,
    legend cell align=left,
    legend columns=4,
    legend style={at={(-0.05,0.98)},anchor=south
      west,font=\scriptsize,
    }
  }
  \centering
  \begin{tikzpicture}[baseline]
    \begin{axis}[
      ymode=log, xlabel={$d$}, xlabel
      style={at={(0.95,0.1)}},
      xmin=3.7,xmax=10.2,ymin=3.8,ymax=65,
      xtick={2,...,10},
      ytick={1,2,3,4,5,6,7,8,9,10,20,30,40,50,60,70,80,90,
        100,200,300,400,500,600,700,800,900,
        1000,2000,3000,4000,5000,6000,7000,8000,9000,
        10000,20000},
      yticklabels={},
      extra y ticks={1,5,10,50,100,500,1000,5000,10000},
      extra y tick labels={1,5,10,50,100,500,1000,5000,10000},
      ylabel={\#\,Queries/\#\,S},
      ylabel style={at={(0.08,0.75)}}
      ]
      \addlegendimage{empty legend}
      \addlegendentry{}
      \addlegendimage{legend image with text=Rectangle}
      \addlegendentry{}
      \addlegendimage{legend image with text=\textsc{L} shape}
      \addlegendentry{}
      \addlegendimage{legend image with text=Simplex}
      \addlegendentry{}
      
      \addlegendimage{empty legend}
      \addlegendentry{\sFGLM}
      \addlegendimage{empty legend}
      \addlegendentry{}
      \addplot[thick,every mark/.append style={solid},
      mark=triangle*,dashed,blue]
      plot coordinates {
        (2,35/4) (3,84/7) (4,165/10) (5,286/13) 
        (6,455/16) (7,680/19) (8,969/22) (9,1330/25) 
        (10,1771/28) 
      };
      \addlegendentry{}
      \addplot[thick,every mark/.append style={solid},
      mark=triangle*,dashed,green]
      plot coordinates {
        (2,35/4) (3,84/10) (4,165/20) (5,286/35) 
        (6,455/56) (7,680/84) (8,969/120) (9,1330/165) 
        (10,1771/220) 
      };
      \addlegendentry{}
      
      \addlegendimage{empty legend}
      \addlegendentry{\BMS}
      \addlegendimage{empty legend}
      \addlegendentry{}
      \addplot[thick,every mark/.append style={solid,rotate=180},
      mark=triangle*,dotted,blue]
      plot coordinates {
        (2,20/4) (3,56/7) (4,120/10) (5,220/13) 
        (6,364/16) (7,560/19) (8,816/22) (9,1140/25) 
        (10,1540/28) 
      };
      \addlegendentry{}
      \addplot[thick,every mark/.append style={solid,rotate=180},
      mark=triangle*,dotted,green]
      plot coordinates {
        (2,20/4) (3,56/10) (4,120/20) (5,220/35) 
        (6,364/56) (7,560/84) (8,816/120) (9,1140/165) 
        (10,1540/220) 
      };
      \addlegendentry{}

      \addlegendimage{empty legend}
      \addlegendentry{Both algorithms}
      \addplot[thick,every mark/.append style={solid},
      mark=square*,red]
      plot coordinates {
        (4,286/16) (5,455/20) 
        (6,969/36) (7,1771/63) (8,2925/96) (9,3654/108) 
        (10,6545/200) 
      };
      \addlegendentry{}
    \end{axis}
  \end{tikzpicture}
  \caption{Number of table queries (\textsc{3D})}
  \label{fig:queries3D}
\end{figure}
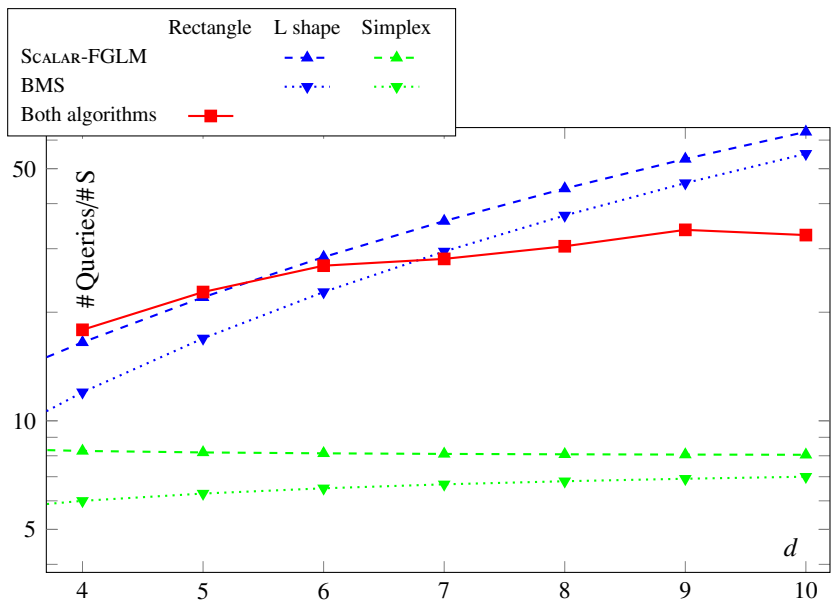

In the experiments of Figures~\ref{fig:queries2D} and~\ref{fig:queries3D},
we report on the
ratio between the numbers of queries and the size of the staircase for
the three families of polynomials.

Not surprisingly, the \sFGLM algorithm always performs the most
queries. This is due to the fact that in
Proposition~\ref{prop:bms_queries}, either $d_{\cG}=d_S+1$ or
$d_S\geq d_{\cG}$, hence $d_{\max}\in\{d_S-1,d_S\}$ and
$d_S+d_{\max}\in\{2\,d_{\max}-1,2\,d_{\max}\}$.

Though, we can see that for the Rectangle family, each algorithm
performs exactly as many queries as the other.

For the Rectangle and Simplex families, the size of the staircase and
the number of queries grow like $O(d^n)$, where $n=2,3$, the
dimension. This is why the ratio seems rather constant.

However, for
the \textsc{L} shape family, the size of the staircase only grows as
$O(d)$ while the number of queries grows as $O(d^n)$.
This is also confirmed by our
experiments, where the ratio between the number of queries and the
size of the staircase grows much faster in dimension $3$ than in
dimension $2$.

In fact, each algorithm performs as many queries for the \textsc{L}
shape family as for the Simplex family. Thus, we can see that neither
is able to take profit from the size of the staircase.

\subsection{Counting the number of basic operations}
The complexity of the \BMS algorithm has been studied
in~\cite{Sakata09}.
\begin{proposition}\label{prop:bms_basicop}
  Let $\bu=(u_{\bi})_{\bi\in\N^n}$
  be a sequence, $\cG$ be a minimal \gb of its ideal of
  relations for a total degree ordering and $S$ be the staircase of $\cG$.

  Then, the \BMS algorithm performs at most
  $O\pare{(\#\,S)^2\,\LM(\cG)}$ operations to recover the
  ideal of relations of $\bu$.  
\end{proposition}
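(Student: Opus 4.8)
The plan is to bound the total cost as the number of outer iterations of Algorithm~\ref{algo:bms} times the cost of one iteration, following the analysis of~\cite{Sakata09} specialized to $\K[\x]$; throughout I read the factor $\LM(\cG)$ in the statement as $\#\LM(\cG)$, the number of elements of the minimal \gb $\cG$. First I would fix the stopping monomial: by Proposition~\ref{prop:upperbound}, taking $M$ of degree $d_S+d_{\max}$ guarantees that the returned set is already a minimal \gb of the ideal of relations, so the algorithm need not be run further. Hence the number of iterations is $\#\{m\preceq M\}$, which Proposition~\ref{prop:bms_queries} already estimates as $\#\cS(d_S+d_{\max})=O(d_{\max}^n)$ table queries; the remaining work is to relate this count and the per-iteration cost to $\#S$ and $\#\LM(\cG)$.

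Next I would bound the work performed at a single monomial $m$. The inner \textbf{For} loop ranges over $g\in G$ but only contributes when $\LM(g)\mid m$; the relations carried in $G$ are indexed by $\Border(S_m)$, and since $(S_m)_m$ is increasing with limit $S$ (Proposition~\ref{prop:iter}) the number of such relations is controlled, over the whole run, by the number $\#\LM(\cG)$ of minimal generators of the target ideal (up to lower-order border effects treated in~\cite{Sakata09}). For each relevant $g$ the algorithm performs one bracket evaluation $\cro{\frac{m}{\LM(g)}\,g}$ and, on failure, one update of the form of Proposition~\ref{prop:augm_shift}, that is a single scalar-times-polynomial combination. Both cost $O(\#\supp(g))$ field operations, and since every support is made of monomials below the current leading term, $\#\supp(g)=O(\#S)$. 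Thus one iteration costs $O\pare{\#\LM(\cG)\cdot \#S}$.

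Multiplying the two estimates gives a total cost $O\pare{\#\{m\preceq M\}\cdot \#\LM(\cG)\cdot \#S}$. To reach the announced $O\pare{(\#S)^2\,\#\LM(\cG)}$ I would then bound the number of visited monomials by $O(\#S)$: when the staircase is full-dimensional, $\#\{m\preceq M\}$ and $\#S$ are both $\Theta(d_{\max}^n)$, so the two are of the same order and the product collapses to the claimed bound.

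The hard part is exactly this last comparison, together with the support bound. Because the \BMS relations need not be reduced (Theorem~\ref{th:reduced_gb}), I must check that the combinations of Proposition~\ref{prop:augm_shift} never push a support above $O(\#S)$ monomials, which requires tracking that each updated $g'$ only acquires terms strictly below $\LM(g')$. More delicate is the passage from $\#\{m\preceq M\}$ to $\#S$: this identification is the generic, full-dimensional estimate, and it is precisely the careful bookkeeping of the number of visited monomials against the shape of $S$ — degenerate staircases being the unfavourable case already flagged for the \textsc{L} shape family — that I would defer to the refined counting of~\cite{Sakata09}.
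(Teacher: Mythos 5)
You should first know that the paper contains no proof of Proposition~\ref{prop:bms_basicop} at all: it is imported from the literature with the single sentence ``The complexity of the \BMS algorithm has been studied in~\cite{Sakata09}''. So your attempt has to stand on its own, and it does not. Your per-iteration accounting is reasonable (at each visited monomial the algorithm touches the relations $g\in G$ with $\LM(g)\mid m$, each bracket evaluation and each update of Proposition~\ref{prop:augm_shift} costing $O(\#\,S)$ operations since supports live under the current staircase), modulo the unproved claim that $\#\,G$ at intermediate steps is controlled by $\#\,\LM(\cG)$ rather than by the border of the intermediate staircase $S_m$, which you again defer to~\cite{Sakata09}. The fatal step is the final collapse $\#\,\{m\preceq M\}=O(\#\,S)$. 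This is not a ``hard part'' that can be deferred: it is false in general, and the paper itself exhibits the counterexample in Section~\ref{s:implem}. For the \textsc{L} shape family the staircase has size $O(d)$ ($2\,d-1$ in two variables, $3\,d-2$ in three), while by Proposition~\ref{prop:bms_queries} the \BMS algorithm performs at least $\binom{n+d_S+d_{\max}-1}{n}=\Omega(d^n)$ queries, \ie the number of visited monomials is $\Theta(d^n)\gg\#\,S$. Your derivation therefore only establishes the strictly weaker bound $O\pare{\#\,T\cdot\#\,S\cdot\#\,\LM(\cG)}$ with $\#\,T$ the number of visited monomials, and deferring exactly the missing comparison to the ``refined counting'' of~\cite{Sakata09} empties the proof of content precisely where content is required.

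Worse, the same family shows that your reading of the factor $\LM(\cG)$ as the cardinality $\#\,\LM(\cG)$ cannot be the intended one, because it would make the proposition itself false: on the three-dimensional \textsc{L} shape family one has $(\#\,S)^2\,\#\,\LM(\cG)=O(d^2)$ (here $\#\,\LM(\cG)=6$ is constant), whereas the queries alone --- each entailing at least one field operation in a bracket evaluation --- are $\Omega(d^3)$; the measured operation counts in Figures~\ref{fig:basicop2D} and~\ref{fig:basicop3D} grow accordingly. The bound in~\cite{Sakata09} is of the shape $O\pare{(\#\,S)^2\,N}$ where $N$ is the number of points processed up to the stopping monomial of Proposition~\ref{prop:upperbound}, so the factor written $\LM(\cG)$ must be read as a quantity attached to the largest leading monomial (the length of the enumeration it forces), not as the number of generators. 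Under your reading the statement is false; under the correct reading your collapse step is unneeded but your total then has $N$, not $\#\,S$, in the third factor. Either way, the write-up does not prove the statement, and a correct proof would have to reproduce Sakata's amortized analysis rather than invoke a generic full-dimensionality assumption that the paper's own benchmark families are specifically designed to violate.
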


The \sFGLM computes the column rank profile of a matrix of size
$\#\,\cS(d_{\max})$. Then, it solves as many linear systems with the submatrix
of size $\#\,S$
as there are polynomials in the \gb. All in all, we have the following
result.
\begin{proposition}\label{prop:sfglm_basicop}
  Let $\bu=(u_{\bi})_{\bi\in\N^n}$
  be a sequence, $\cG$ be a reduced \gb of its ideal of
  relations for a total degree ordering and $S$ be the staircase of $\cG$.
  Let $d_{\max}$ be the maximal degree of the elements of $S$ and $\cG$.

  Then, the number of operations performed by the \sFGLM algorithm to
  recover the ideal of relations of $\bu$ is at most
  $O\pare{(\#\,\cS(d_{\max}))^3+(\#\,S)^2\,\#\,\LM(\cG)}$. 
\end{proposition}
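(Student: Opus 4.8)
The plan is to account separately for the two phases of Algorithm~\ref{algo:sfglm}: the computation of the column rank profile (useful staircase) of the full multi-Hankel matrix $H_{T,T}$, and the resolution of the successive linear systems producing the relations. The arithmetic cost is the sum of these two contributions, and the whole argument is essentially careful bookkeeping on top of the informal count given just before the statement.

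First I would bound the linear algebra on $H_{T,T}$. With $T=\cS(d_{\max})$, this matrix is square of dimension $\#\,\cS(d_{\max})$, so extracting its column rank profile $S$ by Gaussian elimination costs $O\pare{(\#\,\cS(d_{\max}))^3}$. I would emphasize that this same elimination simultaneously yields a factorization (an LU-type decomposition) of the nonsingular submatrix $H_{S,S}$, since by construction $S$ indexes a maximal set of independent columns and $\rank H_{S,S}=\#\,S=\rank H_{T,T}$.

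Next I would count the systems solved in the \textbf{while} loop. Each iteration selects $t=\min_{\prec}(L)$ and then deletes every multiple of $t$ from $L$; because the stabilized staircase $S'$ is stable by division and $\prec$ is degree-compatible, this smallest remaining monomial always lies on the border of $S'$, and the processed monomials form an antichain for divisibility. Hence the loop runs exactly once per minimal generator of $\LM(I)$, that is $\#\,\LM(\cG)$ times. For each such $t$ the algorithm solves $H_{S,S}\,\balpha+H_{S,\{t\}}=0$, a system whose matrix $H_{S,S}$ is \emph{fixed} across all iterations, only the right-hand side $H_{S,\{t\}}$ changing. Reusing the factorization obtained in the first phase, each solve reduces to a back-substitution costing $O\pare{(\#\,S)^2}$, so the loop costs $O\pare{(\#\,S)^2\,\#\,\LM(\cG)}$ in total.

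Summing the two contributions gives the announced $O\pare{(\#\,\cS(d_{\max}))^3+(\#\,S)^2\,\#\,\LM(\cG)}$, using $\#\,S\leq\#\,\cS(d_{\max})$ so that any $O((\#\,S)^3)$ term incurred in factoring $H_{S,S}$ is absorbed into the first summand. The only genuinely delicate point is the justification that a factorization of $H_{S,S}$ is available from the rank-profile computation and can be reused, turning each of the $\#\,\LM(\cG)$ solves into an $O((\#\,S)^2)$ back-substitution rather than an independent $O((\#\,S)^3)$ solve; the remaining steps, including the identification of the number of loop iterations with $\#\,\LM(\cG)$, are routine.
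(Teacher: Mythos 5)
Your proposal is correct and follows exactly the route the paper takes: the paper's entire justification is the two sentences preceding the proposition (column rank profile of $H_{T,T}$ with $T=\cS(d_{\max})$ by Gaussian elimination, then $\#\,\LM(\cG)$ solves against the fixed matrix $H_{S,S}$), and you have merely filled in the bookkeeping it leaves implicit. Your two added details --- that the processed monomials $t$ form exactly the set $\LM(\cG)$ because multiples are purged from $L$, and that reusing (or separately paying the absorbed $O((\#\,S)^3)$ cost for) a factorization of $H_{S,S}$ makes each solve an $O((\#\,S)^2)$ substitution --- are precisely what is needed to turn the paper's informal count into a proof.
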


In the following Figures~\ref{fig:basicop2D} and~\ref{fig:basicop3D},
we report on the ratio between the number of basic operations and the
cube of the size of the staircase.

\begin{figure}[htbp!]
  \pgfplotsset{
    small,
    width=12cm,
    height=7.5cm,
    legend cell align=left,
    legend columns=4,
    legend style={at={(-0.05,0.98)},anchor=south
      west,font=\scriptsize,
    }
  }
  \centering
  \begin{tikzpicture}[baseline]
    \begin{axis}[
      ymode=log,
      xlabel={$d$},
      xlabel style={at={(0.95,0.1)}},
      xmin=3.8,xmax=20.2,
      ymin=0.4,ymax=220,
      xtick={4,...,20},
      ytick={0.1,0.2,0.3,0.4,0.5,0.6,0.7,0.8,0.9,1,2,3,4,5,6,7,8,9,
        10,20,30,40,50,60,70,80,90,100,200,300,400,500,600,700,800,900,
      1000,2000,3000,4000,5000,6000,7000,8000,9000,10000,20000},
      yticklabels={},
      extra y ticks={0.1,0.5,1,5,10,50,100,500,1000,5000,10000},
      extra y tick labels={0.1,0.5,1,5,10,50,100,500,1000,5000,10000},
      ylabel={\#\,Basic Op/\#\,S$^3$},
      ylabel style={at={(0.08,0.75)}},
      ]
      \addlegendimage{empty legend}
      \addlegendentry{}
      \addlegendimage{legend image with text=Rectangle}
      \addlegendentry{}
      \addlegendimage{legend image with text=\textsc{L} shape}
      \addlegendentry{}
      \addlegendimage{legend image with text=Simplex}
      \addlegendentry{}

      \addlegendimage{empty legend}
      \addlegendentry{\sFGLM}
      \addplot[thick,every mark/.append style={solid},
      mark=triangle*,dashed,red] plot coordinates {
        (4,1255/8^3) (5,3250/10^3) 
        (6,16170/18^3) (7,31092/21^3) (8,97697/32^3) (9,160267/36^3) 
        (10,390280/50^3) (11,580810/55^3) (12,1202772/72^3) (13,1676325/78^3) 
        (14,3103205/98^3) (15,4126045/105^3) (16,7035430/128^3)
        (17,9028762/136^3) 
        (18,14457627/162^3) (19,18048627/171^3) (20,27502685/200^3) 
      };
      \addlegendentry{}
      \addplot[thick,every mark/.append style={solid},
      mark=triangle*,dashed,blue] plot coordinates {
        (2,100/3^3) (3,420/5^3) (4,1288/7^3) (5,3329/9^3) 
        (6,7623/11^3) (7,15885/13^3) (8,30675/15^3) (9,55638/17^3) 
        (10,95774/19^3) (11,157738/21^3) (12,250170/23^3) (13,384055/25^3) 
        (14,573113/27^3) (15,834219/29^3) (16,1187853/31^3) (17,1658580/33^3) 
        (18,2275560/35^3) (19,3073088/37^3) (20,4091164/39^3) 
      };
      \addlegendentry{}
      \addplot[thick,every mark/.append style={solid},
      mark=triangle*,dashed,green] plot coordinates {
        (2,100/3^3) (3,549/6^3) (4,1965/10^3) (5,5480/15^3) 
        (6,12957/21^3) (7,27230/28^3) (8,52374/36^3) (9,94005/45^3) 
        (10,159610/55^3) (11,258907/66^3) (12,404235/78^3) (13,610974/91^3) 
        (14,897995/105^3) (15,1288140/120^3) (16,1808732/136^3)
        (17,2492115/153^3) 
        (18,3376224/171^3) (19,4505185/190^3) (20,5929945/210^3) 
      };
      \addlegendentry{}

      \addlegendimage{empty legend}
      \addlegendentry{\BMS}
      \addplot[thick,every mark/.append style={solid,rotate=180},
      mark=triangle*,dotted,red] plot coordinates {
        (4,2978/8^3) (5,5202/10^3) 
        (6,18677/18^3) (7,27720/21^3) (8,69808/32^3) (9,94593/36^3) 
        (10,198745/50^3) (11,254459/55^3) (12,473240/72^3) (13,582804/78^3) 
        (14,992697/98^3) (15,1188420/105^3) (16,1894464/128^3)
        (17,2219609/136^3) 
        (18,3359830/162^3) (19,3870335/171^3) (20,5620488/200^3) 
      };
      \addlegendentry{}
      \addplot[thick,every mark/.append style={solid,rotate=180},
      mark=triangle*,dotted,blue] plot coordinates {
        (2,467/3^3) (3,1371/5^3) (4,3008/7^3) (5,5603/9^3) 
        (6,9380/11^3) (7,14563/13^3) (8,21376/15^3) (9,30043/17^3) 
        (10,40788/19^3) (11,53835/21^3) (12,69408/23^3) (13,87731/25^3) 
        (14,109028/27^3) (15,133523/29^3) (16,161440/31^3) (17,193003/33^3) 
        (18,228436/35^3) (19,267963/37^3) (20,311808/39^3) 
      };
      \addlegendentry{}
      \addplot[thick,every mark/.append style={solid,rotate=180},
      mark=triangle*,dotted,green] plot coordinates {
        (2,427/3^3) (3,1759/6^3) (4,5241/10^3) (5,12860/15^3) 
        (6,27552/21^3) (7,53414/28^3) (8,95823/36^3) (9,161690/45^3) 
        (10,259672/55^3) (11,400330/66^3) (12,596325/78^3) (13,862630/91^3) 
        (14,1216684/105^3) (15,1678622/120^3) (16,2271453/136^3)
        (17,3021248/153^3) 
        (18,3957182/171^3) (19,5112366/190^3) (20,6522843/210^3) 
      };
      \addlegendentry{}
    \end{axis}
  \end{tikzpicture}
  \caption{Number of basic operations (\textsc{2D})}
  \label{fig:basicop2D}
\end{figure}
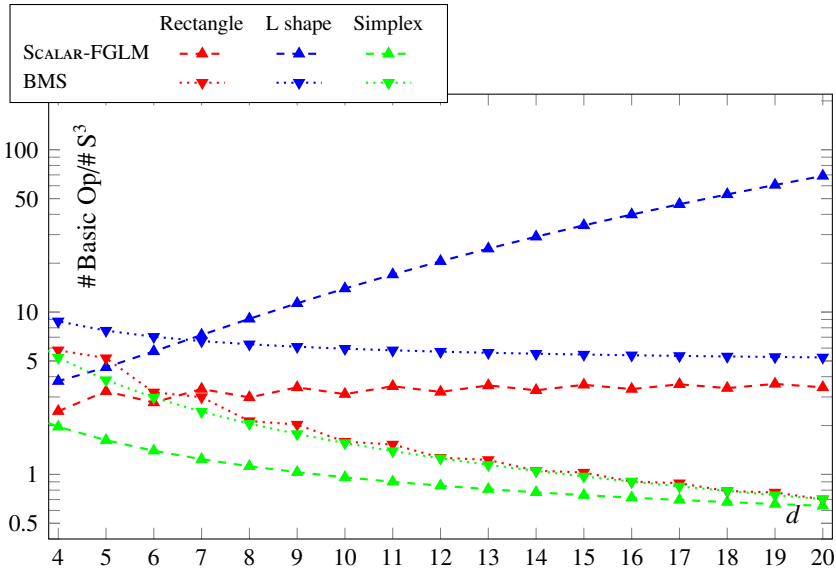

For the Rectangle family, we have $\#\,S\in O(d^n)$,
$\#\,\cS(d_{\max})\in O(d^n)$ and $\LM(\cG)=3$ so that
$(\#\,S)^2\,\#\,\LM(\cG)\in O(d^{2\,n})$. This is why, we can see,
first, a constant ratio between
the number of basic operations done by the \sFGLM algorithm
and the size of the staircase and, then, a decreasing ratio for the \BMS
algorithm. An analogous analysis explains why, for the \textsc{L}
shape family, the ratio is increasing
for the \sFGLM algorithm and quite constant for the \BMS algorithm.

Unexpectedly, the \sFGLM algorithm performs fewer basic operations
than the \BMS algorithm for the Simplex family. This is mainly due to
the fact that, for this family, the term $(\#\,S)^2\,\#\,\LM(\cG)$ is
in fact larger than $(\#\,\cS(d_{\max}))^3$.
\begin{figure}[htbp!]
  \pgfplotsset{
    small,
    width=12cm,
    height=7.5cm,
    legend cell align=left,
    legend columns=4,
    legend style={at={(-0.05,0.98)},anchor=south
      west,font=\scriptsize,
    }
  }
  \centering
  \begin{tikzpicture}[baseline]
    \begin{axis}[
      ymode=log,
      xlabel={$d$},
      xlabel style={at={(0.95,0.1)}},
      xmin=3.7,xmax=10.2,
      ymin=0.8,ymax=750,
      xtick={2,...,10},
      ytick={0.1,0.2,0.3,0.4,0.5,0.6,0.7,0.8,0.9,1,2,3,4,5,6,7,8,9,10,
        20,30,40,50,60,70,80,90,100,200,300,400,500,600,700,800,900,
        1000,2000,3000},
      yticklabels={},
      extra y ticks={0.1,0.5,1,5,10,50,100,500,1000},
      extra y tick labels={0.1,0.5,1,5,10,50,100,500,1000},
      ylabel={\#\,Basic Op/\#\,S$^3$},
      ylabel style={at={(0.08,0.75)}},
      ]
      \addlegendimage{empty legend}
      \addlegendentry{}
      \addlegendimage{legend image with text=Rectangle}
      \addlegendentry{}
      \addlegendimage{legend image with text=\textsc{L} shape}
      \addlegendentry{}
      \addlegendimage{legend image with text=Simplex}
      \addlegendentry{}

      \addlegendimage{empty legend}
      \addlegendentry{\sFGLM}
      \addplot[thick,every mark/.append style={solid},
      mark=triangle*,dashed,red] plot coordinates {
        (4,58468/16^3) (5,196394/20^3) 
        (6,1491458/36^3) (7,7780660/63^3) (8,31350363/96^3) (9,58451620/108^3) 
        (10,303054484/200^3) 
      };
      \addlegendentry{}
      \addplot[thick,every mark/.append style={solid},
      mark=triangle*,dashed,blue] plot coordinates {
        (2,369/4^3) (3,2743/7^3) (4,14255/10^3) (5,57955/13^3) 
        (6,195542/16^3) (7,570929/19^3) (8,1486628/22^3) (9,3528845/25^3) 
        (10,7761655/28^3) 
      };
      \addlegendentry{}
      \addplot[thick,every mark/.append style={solid},
      mark=triangle*,dashed,green] plot coordinates {
        (2,453/4^3) (3,4370/10^3) (4,25385/20^3) (5,107695/35^3) 
        (6,368102/56^3) (7,1073828/84^3) (8,2774390/120^3) (9,6510845/165^3) 
        (10,14131315/220^3) 
      };
      \addlegendentry{}

      \addlegendimage{empty legend}
      \addlegendentry{\BMS}
      \addplot[thick,every mark/.append style={solid,rotate=180},
      mark=triangle*,dotted,red] plot coordinates {
        (4,55942/16^3) (5,113816/20^3) 
        (6,477592/36^3) (7,1820818/63^3) (8,5275811/96^3) (9,7677599/108^3) 
        (10,32891946/200^3) 
      };
      \addlegendentry{}
      \addplot[thick,every mark/.append style={solid,rotate=180},
      mark=triangle*,dotted,blue] plot coordinates {
        (2,1988/4^3) (3,9217/7^3) (4,29636/10^3) (5,75497/13^3) 
        (6,164592/16^3) (7,321217/19^3) (8,577132/22^3) (9,972521/25^3) 
        (10,1556952/28^3) 
      };
      \addlegendentry{}
      \addplot[thick,every mark/.append style={solid,rotate=180},
      mark=triangle*,dotted,green] plot coordinates {
        (2,1774/4^3) (3,14591/10^3) (4,75757/20^3) (5,298521/35^3) 
        (6,964815/56^3) (7,2689885/84^3) (8,6679544/120^3) (9,15125328/165^3) 
        (10,31763926/220^3) 
      };
      \addlegendentry{}
    \end{axis}
  \end{tikzpicture}
  \caption{Number of basic operations (\textsc{3D})}
  \label{fig:basicop3D}
\end{figure}
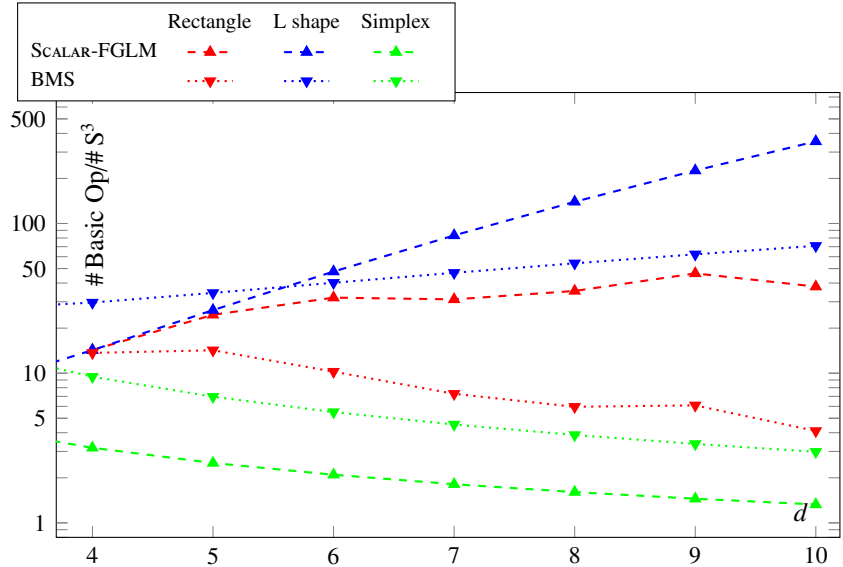

We now compare the ratio
between the number of
basic operations and the number of queries made by each algorithm in
Figures~\ref{fig:basicop/queries2D} and~\ref{fig:basicop/queries3D}.
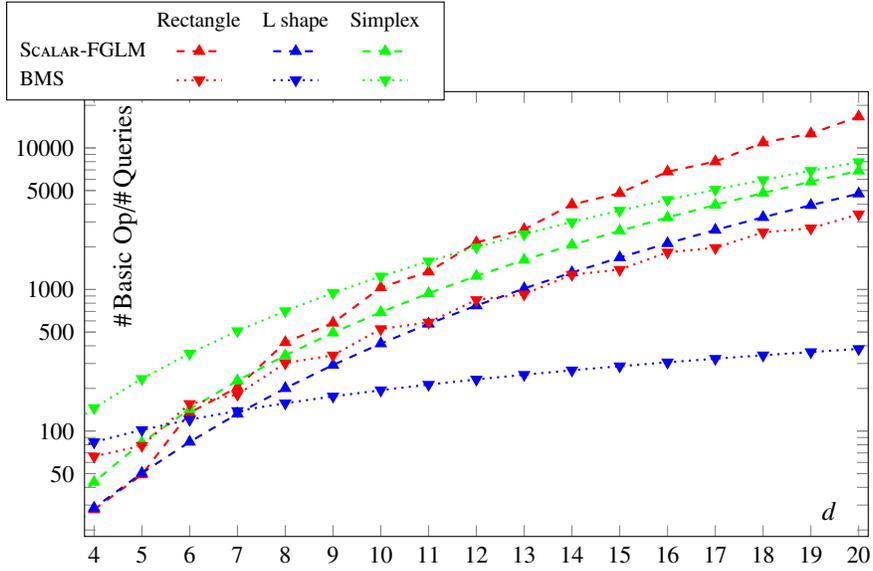
\begin{figure}[htbp!]
  \pgfplotsset{
    small,
    width=12cm,
    height=7.5cm,
    legend cell align=left,
    legend columns=4,
    legend style={at={(-0.1,0.98)},anchor=south west,font=\scriptsize,
    }
  }
  \centering
  \begin{tikzpicture}[baseline]
    \begin{axis}[
      ymode=log,
      xlabel={$d$},
      xlabel style={at={(0.95,0.1)}},
      xmin=3.8,xmax=20.2,
      ymin=18,ymax=25000,
      xtick={3,...,20},
      ytick={1,2,3,4,5,6,7,8,9,10,20,30,40,50,60,70,80,90,100,
        200,300,400,500,600,700,800,900,1000,
        2000,3000,4000,5000,6000,7000,8000,9000,10000,
        20000,30000,40000,50000,60000,70000,80000,90000,100000,
        200000},
      yticklabels={},
      extra y ticks={5,10,50,100,500,1000,5000,10000,50000,100000},
      extra y tick labels={5,10,50,100,500,1000,5000,10000,50000,100000},
      ylabel={\#\,Basic Op/\#\,Queries},
      ylabel style={at={(0.08,0.72)}},
      ]
      \addlegendimage{empty legend}
      \addlegendentry{}
      \addlegendimage{legend image with text=Rectangle}
      \addlegendentry{}
      \addlegendimage{legend image with text=\textsc{L} shape}
      \addlegendentry{}
      \addlegendimage{legend image with text=Simplex}
      \addlegendentry{}

      \addlegendimage{empty legend}
      \addlegendentry{\sFGLM}
      \addplot[thick,every mark/.append style={solid},
      mark=triangle*,dashed,red] plot coordinates {
        (4,1255/45) (5,3250/66) 
        (6,16170/120) (7,31092/153) (8,97697/231) (9,160267/276) 
        (10,390280/378) (11,580810/435) (12,1202772/561) (13,1676325/630) 
        (14,3103205/780) (15,4126045/861) (16,7035430/1035) (17,9028762/1128) 
        (18,14457627/1326) (19,18048627/1431) (20,27502685/1653) 
      };
      \addlegendentry{}
      \addplot[thick,every mark/.append style={solid},
      mark=triangle*,dashed,blue] plot coordinates {
        (2,100/15) (3,420/28) (4,1288/45) (5,3329/66) 
        (6,7623/91) (7,15885/120) (8,30675/153) (9,55638/190) 
        (10,95774/231) (11,157738/276) (12,250170/325) (13,384055/378) 
        (14,573113/435) (15,834219/496) (16,1187853/561) (17,1658580/630) 
        (18,2275560/703) (19,3073088/780) (20,4091164/861) 
      };
      \addlegendentry{}
      \addplot[thick,every mark/.append style={solid},
      mark=triangle*,dashed,green] plot coordinates {
        (2,100/15) (3,549/28) (4,1965/45) (5,5480/66) 
        (6,12957/91) (7,27230/120) (8,52374/153) (9,94005/190) 
        (10,159610/231) (11,258907/276) (12,404235/325) (13,610974/378) 
        (14,897995/435) (15,1288140/496) (16,1808732/561) (17,2492115/630) 
        (18,3376224/703) (19,4505185/780) (20,5929945/861) 
      };
      \addlegendentry{}

      \addlegendimage{empty legend}
      \addlegendentry{\BMS}
      \addplot[thick,every mark/.append style={solid,rotate=180},
      mark=triangle*,dotted,red] plot coordinates {
        (4,2978/45) (5,5202/66) 
        (6,18677/120) (7,27720/153) (8,69808/231) (9,94593/276) 
        (10,198745/378) (11,254459/435) (12,473240/561) (13,582804/630) 
        (14,992697/780) (15,1188420/861) (16,1894464/1035) (17,2219609/1128) 
        (18,3359830/1326) (19,3870335/1431) (20,5620488/1653) 
      };
      \addlegendentry{}
      \addplot[thick,every mark/.append style={solid,rotate=180},
      mark=triangle*,dotted,blue] plot coordinates {
        (2,467/10) (3,1371/21) (4,3008/36) (5,5603/55) 
        (6,9380/78) (7,14563/105) (8,21376/136) (9,30043/171) 
        (10,40788/210) (11,53835/253) (12,69408/300) (13,87731/351) 
        (14,109028/406) (15,133523/465) (16,161440/528) (17,193003/595) 
        (18,228436/666) (19,267963/741) (20,311808/820) 
      };
      \addlegendentry{}
      \addplot[thick,every mark/.append style={solid,rotate=180},
      mark=triangle*,dotted,green] plot coordinates {
        (2,427/10) (3,1759/21) (4,5241/36) (5,12860/55) 
        (6,27552/78) (7,53414/105) (8,95823/136) (9,161690/171) 
        (10,259672/210) (11,400330/253) (12,596325/300) (13,862630/351) 
        (14,1216684/406) (15,1678622/465) (16,2271453/528) (17,3021248/595) 
        (18,3957182/666) (19,5112366/741) (20,6522843/820) 
      };
      \addlegendentry{}
    \end{axis}
  \end{tikzpicture}
  \caption{Number of basic operations by queries (\textsc{2D})}
  \label{fig:basicop/queries2D}
\end{figure}
\begin{figure}[htbp!]
  \pgfplotsset{
    small,
    width=12cm,
    height=7.5cm,
    legend cell align=left,
    legend columns=4,
    legend style={at={(-0.1,0.98)},anchor=south west,font=\scriptsize,
    }
  }
  \centering
  \begin{tikzpicture}[baseline]
    \begin{axis}[
      ymode=log,
      xlabel={$d$},
      xlabel style={at={(0.95,0.1)}},
      xmin=3.7,xmax=10.2,
      ymin=55,ymax=65000,
      xtick={2,...,12},
      ytick={1,2,3,4,5,6,7,8,9,10,20,30,40,50,60,70,80,90,100,
        200,300,400,500,600,700,800,900,1000,2000,3000,4000,5000,6000,
        7000,8000,9000,10000,20000,30000,40000,50000,60000},
      yticklabels={},
      extra y ticks={5,10,50,100,500,1000,5000,10000,50000},
      extra y tick labels={5,10,50,100,500,1000,5000,10000,50000},
      ylabel={\#\,Basic Op/\#\,Queries},
      ylabel style={at={(0.08,0.72)}},
      ]
      \addlegendimage{empty legend}
      \addlegendentry{}
      \addlegendimage{legend image with text=Rectangle}
      \addlegendentry{}
      \addlegendimage{legend image with text=\textsc{L} shape}
      \addlegendentry{}
      \addlegendimage{legend image with text=Simplex}
      \addlegendentry{}

      \addlegendimage{empty legend}
      \addlegendentry{\sFGLM}
      \addplot[thick,every mark/.append style={solid},
      mark=triangle*,dashed,red] plot coordinates {
        (4,58468/286) (5,196394/455) 
        (6,1491458/969) (7,7780660/1771) (8,31350363/2925) (9,58451620/3654) 
        (10,303054484/6545) 
      };
      \addlegendentry{}
      \addplot[thick,every mark/.append style={solid},
      mark=triangle*,dashed,blue] plot coordinates {
        (2,369/35) (3,2743/84) (4,14255/165) (5,57955/286) 
        (6,195542/455) (7,570929/680) (8,1486628/969) (9,3528845/1330) 
        (10,7761655/1771) 
      };
      \addlegendentry{}
      \addplot[thick,every mark/.append style={solid},
      mark=triangle*,dashed,green] plot coordinates {
        (2,453/35) (3,4370/84) (4,25385/165) (5,107695/286) 
        (6,368102/455) (7,1073828/680) (8,2774390/969) (9,6510845/1330) 
        (10,14131315/1771) 
      };
      \addlegendentry{}

      \addlegendimage{empty legend}
      \addlegendentry{\BMS}
      \addplot[thick,every mark/.append style={solid,rotate=180},
      mark=triangle*,dotted,red] plot coordinates {
        (4,55942/286) (5,113816/455) 
        (6,477592/969) (7,1820818/1771) (8,5275811/2925) (9,7677599/3654) 
        (10,32891946/6545) 
      };
      \addlegendentry{}
      \addplot[thick,every mark/.append style={solid,rotate=180},
      mark=triangle*,dotted,blue] plot coordinates {
        (2,1988/20) (3,9217/56) (4,29636/120) (5,75497/220) 
        (6,164592/364) (7,321217/560) (8,577132/816) (9,972521/1140) 
        (10,1556952/1540) 
      };
      \addlegendentry{}
      \addplot[thick,every mark/.append style={solid,rotate=180},
      mark=triangle*,dotted,green] plot coordinates {
        (2,1774/20) (3,14591/56) (4,75757/120) (5,298521/220) 
        (6,964815/364) (7,2689885/560) (8,6679544/816) (9,15125328/1140) 
        (10,31763926/1540) 
      };
      \addlegendentry{}
    \end{axis}
  \end{tikzpicture}
  \caption{Number of basic operations by queries (\textsc{3D})}
  \label{fig:basicop/queries3D}
\end{figure}
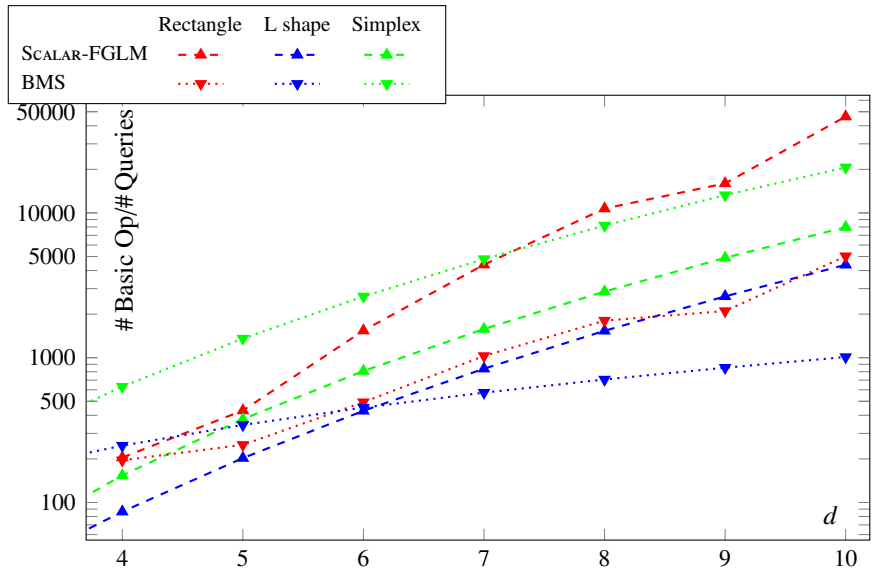

As we can see, beside for the Simplex family where the \sFGLM
performed fewer operations but more queries than the \BMS algorithm,
the polynomial arithmetic of the \BMS
algorithm allows it to have a much better behavior than the \sFGLM algorithm.

This reinforces the conviction that an hybrid approach between the
\BMS and the \sFGLM algorithm or a fast multi-Hankel solver
should be investigated.



\bibliographystyle{elsarticle-harv} 
\addcontentsline{toc}{section}{References}
\bibliography{biblio}






\end{document}